\newtheorem{thm}{Theorem}[section]
\newtheorem{prop}{Proposition}[section]
\newtheorem{alg}[thm]{Algorithm}
\newtheorem{defn}{Definition}[section]
\newtheorem{ex}{Example}[section]
\newtheorem{rmk}{Remark}[section]
\newcommand{\RR}{\mathbb{R}}      
\newcommand{\CC}{\mathbb{C}}      
\newcommand{\NN}{\mathbb{N}}      
\newcommand{\vecc}{\boldsymbol}
\begin{document}

\title{Stochastic Processes: From Classical to Quantum}

\author{\name Soon Hoe Lim 
 \email shlim@kth.se \\
        \addr Department of Mathematics, KTH Royal Institute of Technology \\
       \addr Nordita, KTH Royal Institute of Technology and Stockholm University \\
       Stockholm 106 91, Sweden 
       }

\maketitle

\begin{abstract}
The main goal of these notes is to give an introduction to the mathematics of quantum noise and some of its applications in non-equilibrium statistical mechanics. We start with some reminders from the theory of classical stochastic processes.  We then provide a brief overview of quantum mechanics and quantum field theory, from the viewpoint of quantum probability and adopting the language of Hudson and Parthasarathy \citep{parthasarathy2012introduction}. We introduce quantum stochastic processes on a boson Fock space and their calculus. Whenever possible, we make connections with the relevant concepts in classical probability theory. As an application of the theory, we introduce the theory of open quantum systems, with emphasis on the physics and modeling aspects of these systems. \\

\noindent {\bf Keywords:} Stochastic processes, quantum noise, open quantum systems, quantum probability  
\end{abstract}

\tableofcontents

\clearpage
\section{Theory of Classical Stochastic Processes}

To understand quantum mechanics, it is important to first have a firm grasp of the classical world and its laws, in particular how randomness comes into play in classical systems. This section gives an overview (and a bit more) of open classical systems and their stochastic modeling, keeping in mind that the ideas for these classical systems can often be extended to study the quantum ones.  

\subsection{Motivation from Non-Equilibrium Statistical Mechanics} \label{sect_introocs}

A central topic in non-equilibrium statistical mechanics is the study of open classical systems. An open system is a component of a larger closed system and interacts with other components of the larger system. We first briefly review the mechanics of a closed classical system, in the Hamiltonian formulation. The {\it states} of the system are specified by points in a {\it phase space} (or state space), which can be a finite or infinite dimensional manifold. Let us assume for now that its phase space is finite dimensional and is taken to be the $2n$-dimensional manifold, $P = O \times \RR^n$, where $n$ is an integer and $O \subset \RR^n$ is an open set.  In this case, points in the phase space can be represented as $(\vecc{x}, \vecc{p})$, where the point $\vecc{x} \subset O$ describes the configuration of the various objects in the system and the point $\vecc{p} \subset \RR^n$ describes the momenta of the various objects in the system. 

The time evolution of the system is specified by a {\it trajectory}, a function $\vecc{z}(t) := (\vecc{x}(t), \vecc{p}(t))$ from an interval in $\RR$ to $P$. The fundamental dynamical principle says that the allowed trajectories are determined, uniquely in terms of the initial condition $\vecc{z}(0)$, by the solutions of the well-known {\it Hamilton’s equations}. The Hamilton's equations is a system of $2n$ ordinary differential equations of the form
\begin{equation}
\frac{d x_i}{dt} = \frac{\partial H}{\partial p_i}, \ \ \  \frac{d p_i}{dt} = -\frac{\partial H}{\partial x_i}, 
\end{equation}
for $i=1,\dots, n$, where $H = H(\vecc{x}, \vecc{p})$ is a function on the phase space called the {\it Hamiltonian}, whose choice depending on the system one wishes to describe. The Hamiltonian can be interpreted as the total energy on the phase space and the trajectory defines a {\it flow} $\vecc{z}(t) = \vecc{\Phi}(t) \vecc{z}(0)$ on the phase space that leaves invariant the total energy $H(\vecc{z}(t))$ of the system, conveying conservation of energy of the system. 

Often times, one would like to have a dynamical description for $F(\vecc{z})$, an arbitrary smooth function on the phase space, called a {\it classical observable}. If $\vecc{z}(t)$ satisfies the Hamilton's equations, then $F(t):=F(\vecc{z}(t))$ evolves according to: 
\begin{equation}
\frac{dF}{dt} = \{ F, H \},
\end{equation}
where the right hand side above denotes the Poisson bracket of $F$ and $H$, the function evaluated at $(\vecc{x}(t),\vecc{p}(t))$ as follows:
\begin{equation}
\{ F, H \} := \sum_{i=1}^n \frac{\partial F}{\partial x_i} \frac{\partial H}{\partial p_i} -  \frac{\partial F}{\partial p_i} \frac{\partial H}{\partial x_i}.
\end{equation}
In particular, we have $\{x_i, x_j \} = \{p_i, p_j\}=0$ and $\{x_i, p_j\} = \delta_{ij}$, where $\delta_{ij}$ denotes the Kronecker delta. For properties of the Poisson bracket and details on structures and symmetries in Hamiltonian dynamics, see \cite{marsden1995introduction}.

Generally, a system has, possibly infinitely, many states and one usually does not have a priori knowledge about all the initial values of these states\footnote{Typically one deals with systems containing Avogadro's number $N_A = 6 \times 10^{23}$ of particles. Interestingly, this number was computed by Perrin experimentally in his attempt to test Einstein's explanation of Brownian motion in terms of atoms and thereby confirming Dalton's atomic theory of matter. Perrin received a Nobel prize in 1926 for this.}. Treating the system in a statistical manner allows us to bring the problem into a tractable one. An important notion in statistical mechanics is that of {\it thermal equilibrium}, which, roughly speaking, is a condition that the states of the system be statistically distributed according to a probability distribution on the phase space that is invariant for the Hamiltonian flow $\Phi(t)$. In particular, the distribution of the state variable $\vecc{z}$ at time zero remains the same for all times. The introduction of such invariant probability measure randomizes the phase space, allowing one to simplify the description of the system at the cost of introducing uncertainty.

It is well known that in a finite dimensional phase space, any absolutely continuous $\Phi(t)$-invariant probability measure admits a Boltzmann-Gibbs type density, $\rho(\vecc{z}) = e^{-\beta H(z)}/Z$, where $Z$ is the normalization constant (partition function) and $\beta = 1/(k_B T) > 0$, $k_B$ is the Boltzmann's constant and $T>0$ is the temperature. As probabilistic framework takes its place in the thermal equilibrium setting, a natural question is whether a sufficiently smooth function on the phase space can be regarded as a stochastic process. It turns out that this is only true when the underlying phase space is infinite dimensional and, moreover, the Hamiltonian must have continuous spectrum \citep{picci1992generation}. In this case, the process is stationary and Gaussian (if the system is linear). We will keep in mind the above result when building a stochastic model for an open system later. The above question is related to the problem of stochastic aggregation of \cite{picci1989stochastic,picci1988hamiltonian}, where it was found that any stationary Gaussian process with a rational spectral density can be represented as the output of a linear infinite dimensional Hamiltonian system in a thermal equilibrium.

\subsection{A Hamiltonian Model for Open Classical Systems}
\label{sect_hammodelocs}

We consider a toy model describing a Brownian particle in contact with a heat bath, which is initially in a thermal equilibrium. The particle is modeled as a Hamiltonian system and it moves in a potential $U$. The heat bath is modeled as a system of non-interacting harmonic oscillators whose initial energy is distributed according to the Gibbs distribution at temperature $T$. The Brownian particle is coupled to each harmonic oscillator in the bath. This model is used widely to study many systems in statistical physics \citep{ford1965statistical, mori1965transport}. Our goal  is to derive, at a formal level, a stochastic integro-differential equation (SIDE) for the position and momentum variable of the particle from a specified Hamiltonian. This derivation serves to motivate the class of SIDEs that we are studying in this subsection. We emphasize that our derivation here is certainly not original and follows closely that in \cite{hanggi1997generalized} (see also an abstract approach in \cite{Zwanzig1973}). In fact, the derivation follows along the line of the program of Gibbs \citep{ford1965statistical}. 

One approach to derive the equations is to assume first that there are finitely many harmonic oscillators in the bath (Kac-Zwanzig model \citep{zwanzig2001nonequilibrium,ariel2008strong}). We then takes the thermodynamic limit by sending the number of oscillators to infinity in the resulting equations (replacing finite sum over oscillator frequencies by an integral), arguing that the set of frequencies must be dense to allow dissipation of energy from the system to the bath and to eliminate Poincar\'e recurrence. Another approach, which is more technically involved, is to replace the finite system of oscillator equations by an infinite system modeled by a wave equation \citep{rey2006open,pavliotis2014stochastic} (see Remark \ref{wave_eqn_model}). See also \cite{leimkuhler2018ergodic,chorin2009stochastic} and the references therein for an approach based on the Mori-Zwanzig projection formalism. We will derive the SIDEs by adopting the former approach in the multi-dimensional case.

We consider the situation where the coupling is nonlinear in the particle's position  and linear in the bath variables. Let $\vecc{\hat{x}} = (\vecc{x}, \vecc{x}_{1}, \dots, \vecc{x}_{N}) \in \RR^{d+d_1+\dots+d_N}$ and $\vecc{\hat{p}} = (\vecc{p}, \vecc{p}_{1}, \dots, \vecc{p}_{N}) \in \RR^{d+d_1+\dots+d_N}$. Hereafter, the subscript $^*$ denotes transposition and $|\vecc{b}|^2 := \vecc{b}^* \vecc{b} = \sum_{k=1}^{n} b_k^2$ denotes the norm of vector $\vecc{b} := (b_1, \dots, b_n) \in \RR^n$. 

The Hamiltonian of the system plus bath is:
\begin{equation}H(\vecc{\hat{x}},\vecc{\hat{p}}) = \frac{|\vecc{p}|^2}{2m} + U(\vecc{x}) + \sum_{k=1}^{N} \left( \frac{|\vecc{p}_{k}|^2}{2} + \frac{1}{2} \omega_{k}^2 \left| \vecc{x}_{k}-\frac{\vecc{c}^*_{k}}{\omega_{k}^2}\vecc{f}(\vecc{x}) \right|^2 \right) , \end{equation}
where $m$ is the particle's mass, $\vecc{x} \in \RR^{d}$ and $\vecc{p} \in \RR^{d}$ are respectively the particle's position  and  momentum, $\vecc{x}_{k} \in \RR^{d_k}$, $\vecc{p}_{k} \in \RR^{d_k}$  and $\omega_{k} \in \RR^{+}$ $(k=1,\dots,N)$ are respectively the position, momentum and frequency of the $k$th bath oscillator, $\vecc{f}(\vecc{x}) := (f_{1}(\vecc{x}), \dots, f_{r}(\vecc{x})) \in \RR^r$ is a vector function of $\vecc{x} := (x^{(1)}, \dots, x^{(d)})$ and the $\vecc{c}_{k} \in \RR^{r \times d_k}$ (so $\vecc{c}_k^* \in \RR^{d_k \times r}$) are  coupling matrices that specify the coupling strength between the system and the $k$th bath oscillator. We assume each $f_{j}$ $(j=1,\dots,r)$ is continuously differentiable in $x^{(k)}$, for every $k=1,\dots,d$.

To derive an equation for the particle's position and momentum, we start by plugging the expression for $H(\hat{\vecc{\vecc{x}}}, \hat{\vecc{\vecc{p}}})$ into the Hamilton's equations to obtain: 
\begin{align}
\dot{\vecc{x}} &= \frac{\vecc{p}}{m}, \\ 
\dot{\vecc{p}} &= -\vecc{\nabla}_{\vecc{x}} U(\vecc{x})  + \vecc{g}(\vecc{x}) \sum_{k} \vecc{c}_{k} \left(\vecc{x}_{k}-\frac{\vecc{c}^*_{k}}{ \omega_{k}^2} \vecc{f}(\vecc{x}) \right), \label{momeq}\\
\vecc{x}_{k} &= \vecc{p}_{k}, \ \ \ k = 1, \dots, N,\\
\vecc{p}_{k} &= -\omega_{k}^2 \vecc{x}_{k} + \vecc{c}^*_{k} \vecc{f}(\vecc{x}), \ \ \ k = 1,\dots,N,
\end{align}
where $\vecc{g}(\vecc{x}) \in \RR^{d \times r}$ denotes the Jacobian matrix $\left(\frac{\partial f_{i}}{\partial x^{(j)}} \right)_{ij}.$

Next, we eliminate the bath variables $\vecc{x}_{k}, \vecc{p}_{k}$, $k=1,\dots,N$, from the system's dynamics. Solving for $\vecc{x}_{k}(t)$ in terms of $\vecc{x}(t)$:
\begin{equation}\vecc{x}_{k}(t) = \vecc{x}_{k}(0)\cos(\omega_{k}t) + \frac{\vecc{p}_{k}(0)}{\omega_{k}} \sin(\omega_{k}t) + \frac{\vecc{c}^*_{k}}{ \omega_{k}} \int_{0}^{t} \sin(\omega_{k}(t-s)) \vecc{f}(\vecc{x}(s)) ds. \end{equation}

Plugging this into $\eqref{momeq}$, we obtain:
\begin{align}
\dot{\vecc{p}}(t) &= -\vecc{\nabla}_{\vecc{x}} U(\vecc{x}(t))  + \vecc{g}(\vecc{x}(t)) \sum_{k} \frac{\vecc{c}_{k} \vecc{c}_{k}^{*}}{ \omega_{k}^2} \left( \int_{0}^{t}  \omega_{k} \sin(\omega_{k}(t-s)) \vecc{f}(\vecc{x}(s)) ds  - \vecc{f}(\vecc{x}(t)) \right) \nonumber \\ 
&\ \ \ \  \ + \vecc{g}(\vecc{x}(t)) \vecc{F}(t), \end{align}
where \begin{equation}\vecc{F}(t) = \sum_{k} \vecc{c}_{k} \left(\vecc{x}_{k}(0) \cos(\vecc{w}_{k}t) + \frac{\vecc{p}_{k}(0)}{ \omega_{k}} \sin(\omega_{k}t) \right).\end{equation}

In the integral term above, we integrate by parts to obtain:
\begin{equation} \int_{0}^{t}  \omega_{k} \sin(\omega_{k}(t-s)) \vecc{f}(\vecc{x}(s)) ds = \vecc{f}(\vecc{x}(t)) - \cos(\omega_{k}t) \vecc{f}(\vecc{x}(0)) - \int_{0}^{t} \cos(\omega_{k}(t-s)) \vecc{g}^*(\vecc{x}(s)) \dot{\vecc{x}}(s) ds.\end{equation}

Using this, the equation for $\vecc{p}(t)$ becomes the {\it generalized Langevin equation (GLE)}:
\begin{equation} \label{deri_side}
\dot{\vecc{p}}(t) = -\vecc{\nabla}_{\vecc{x}} U(\vecc{x}(t))  - \vecc{g}(\vecc{x}(t))  \int_{0}^{t}  \vecc{\kappa}(t-s) \vecc{g}^*(\vecc{x}(s)) \dot{\vecc{x}}(s) ds    +  \vecc{g}(\vecc{x}(t)) \vecc{\xi}(t),\end{equation}
where 
\begin{equation} \label{memory_dis}
\vecc{\kappa}(t) = \sum_{k} \frac{\vecc{c}_{k} \vecc{c}_{k}^{*}}{ \omega_{k}^2} \cos(\omega_{k}t) \in \RR^{r \times r} \end{equation} 
and 
\begin{equation} \label{noise_dis}
\vecc{\xi}(t) = \vecc{F}(t) - \vecc{\kappa}(t) \vecc{f}(\vecc{x}(0)) =  \sum_{k} \vecc{c}_{k} \left( \left( \vecc{x}_{k}(0) - \frac{\vecc{c}^*_{k}}{\omega_{k}^2} \vecc{f}(\vecc{x}(0)) \right) \cos(\omega_{k} t) + \frac{\vecc{p}_{k}(0)}{ \omega_{k}} \sin(\omega_{k}t) \right). \end{equation}

Note that $\vecc{\xi}(t) \in \RR^r$ is expressed in terms of the initial values of the variables $\vecc{x}'_{k}(0) := \vecc{x}_{k}(0) - \frac{\vecc{c}^*_{k}}{\omega_{k}^2} \vecc{f}(\vecc{x}(0)) \in \RR^{p} $ and $\vecc{p}_{k}(0)\in \RR^{p}$. If all these initial values are known, then $\vecc{\xi}(t)$ is a deterministic force. However,  one rarely has a complete information about these initial values and this is where the introduction of randomness can help to simplify the model. In view of this, we assume that the variables $\vecc{x}'_k(0)$ and $\vecc{p}_k(0)$ are random and are distributed according to a Gibbs measure, with the density:
\begin{equation}\rho((\vecc{x}_{k}, \vecc{p}_{k}) \ | \ \vecc{x}(0) = \vecc{x})  = Z^{-1} \exp{\left(-\beta \left(\sum_{k=1}^{N} \frac{|\vecc{p}_{k}|^2}{2} + \frac{1}{2} \omega_{k}^2 \left| \vecc{x}_{k}-\frac{\vecc{c}^*_{k}}{\omega_{k}^2}\vecc{f}(\vecc{x}) \right|^2 \right) \right)},\end{equation} where $\beta = 1/(k_{B}T)$ and $Z$ is the partition function. Taking the averages of the bath variables with respect to the above density:
\begin{align}
&E_{\rho}\left[ \vecc{x}'_{k}(0)  \ |  \ \vecc{x}(0) = \vecc{x} \right] = 0, \ \ \  \ E_{\rho}[ \vecc{p}_{k}(0) \ | \ \vecc{x}(0) = \vecc{x} ] = 0, \\
&E_{\rho} [ \vecc{x}'_k(0) (\vecc{x}'_{k}(0))^* \ | \ \vecc{x}(0) = \vecc{x} ] = \frac{k_{B}T}{\omega_{k}^2} \vecc{I}, \ \  \ \ \ E_{\rho}[ (\vecc{p}_{k}(0) (\vecc{p}_k(0))^* \ | \ \vecc{x}(0) = \vecc{x} ] = k_{B}T \vecc{I}, 
\end{align}
where $E_{\rho}$ denotes mathematical expectation with respect to $\rho$ and $\vecc{I} \in \RR^{p \times p}$ is identity matrix. 

Note that $\vecc{\xi}(t)$ is a stationary, Gaussian process, if it is conditionally averaged with respect to $\rho$ \citep{zwanzig2001nonequilibrium}. It follows from this distribution of the bath variables that we have the  {\it fluctuation-dissipation relation}:
\begin{equation}E_{\rho}[\vecc{\xi}(t)] = 0, \ \ E_{\rho}[\vecc{\xi}(t)\vecc{\xi}(s)^{*}]=k_{B}T\vecc{\kappa}(t-s),\end{equation} 
where $\vecc{\kappa}(t-s)$ is the memory kernel whose formula is given in \eqref{memory_dis}. Later, we will generalize the resulting covariance of the process $\vecc{\xi}(t)$ to an integral expression. 
We remark that the memory function $\vecc{\kappa}(t)$ and  the ``color" of the noise $\vecc{\xi}(t)$ are determined by the bath spectrum and the system-bath coupling. 

Now we pass to the continuum limit by replacing the sum over $k$ in $\vecc{\kappa}(t)$ by an integral $\int_{\RR^{+}} d\omega n(\omega)$, where $n(\omega)$ is a density of states.  Then, if the $\vecc{c}_k$ are replaced by $\vecc{c}(\omega) \in \RR^{r \times p}$, the memory function $\vecc{\kappa}(t)$ becomes the function:
\begin{equation} \label{memory_cont}
\vecc{\kappa}(t) = \int_{\RR^{+}} d\omega n(\omega) \frac{\vecc{c}(\omega) \vecc{c}(\omega)^{*}}{\omega^2} \cos(\omega t), \end{equation}
where $\hat{\vecc{\kappa}}_c(\omega) :=  n(\omega) \vecc{c}(\omega) \vecc{c}(\omega)^*/\omega^2 \in L^1(\RR^+)$.  Repeating the same procedure for the noise process and also replacing the $\vecc{x}'_k(0)$ and $\vecc{p}_k(0)$ by $\vecc{x}'(\omega)$ and $\vecc{p}(\omega)$ respectively, $\vecc{\xi}(t)$ becomes:
 \begin{equation} \label{limit_noise}
 \vecc{\xi}(t) = \int_{\RR^+} d\omega n(\omega) \vecc{c}(\omega) \left(  \vecc{x}'(\omega)  \cos(\omega t) + \frac{\vecc{p}(\omega)}{ \omega} \sin(\omega t) \right). \end{equation}

The choice of the $n(\omega)$ and $\vecc{c}(\omega)$ specifies the memory function and therefore (by the fluctuation-dissipation relation) the statistical properties of the noise process. We write $\vecc{\kappa}(t)$ as an inverse Fourier transform of a measure:
\begin{equation} \label{memory}
\vecc{\kappa}(t) = \frac{1}{2\pi} \int_{\RR} \vecc{S}(\omega) e^{i \omega t} d\omega,  
\end{equation}
where the measure is absolutely continuous with respect to the Lebesgue measure, with the density
$\vecc{S}(\omega) = \pi \hat{\vecc{\kappa}}_c(\omega) \geq 0$. The density $\vecc{S}(\omega)$ is known as the {\it spectral density} of the bath.

\begin{rmk} \label{wave_eqn_model}
As the heat bath has infinitely many degrees of freedom, it is physically more correct to derive the SIDEs starting from a an infinite dimensional Hamiltonian with continuous spectrum (as discussed in the introduction). We briefly mention how this can be done at the cost of higher level of difficulties. We restrict to one dimension for simplicity. This discussion will be useful when we consider a quantum version of open system. For details, we refer to Chapter 2 in \cite{attal2006open}. The model for the heat bath, in accordance to the classical field theory, is modeled by the wave equation in $\RR$:
\begin{equation} \label{wave_eqn}
\frac{\partial^2}{\partial t^2} \varphi_t(\omega) = \Delta_{\omega} \varphi_t(\omega),  
\end{equation}
where $\omega \in \Omega =  \RR$, $t \in \RR$ and $\Delta_{\omega}$ denotes the Laplacian. The wave equation is a second-order equation which can be written as the following system of first-order equations:
\begin{align}
d \varphi_t(\omega) &= \pi_t(\omega) dt, \\
d \pi_t(\omega) &=  \Delta_{\omega} \varphi_t(\omega) dt, 
\end{align}
with initial conditions $\varphi_0(\omega) = \varphi(\omega)$ and $\pi_0(\omega) = \pi(\omega) $ to be specified. These are the Hamiltonian equations of motion whose dynamics are specified by the Hamiltonian function:
\begin{equation}
H_B(\varphi, \pi) = \frac{1}{2} \int_{\Omega} (|\nabla_{\omega} \varphi(\omega)|^2 + |\pi(\omega)|^2)  d\omega.
\end{equation}
Note that in the above, we have taken the frequency representation for modeling the environment, in contrast to the dual representation (via Fourier transformation) in spatial domain as adopted in Chapter 2 of \cite{attal2006open}.  The total system of the particle plus its environment is described by the Hamiltonian:
\begin{equation} \label{Hamiltonian_cos}
H = H_S + H_B + H_I,
\end{equation}
where $H_I$ describes the interaction between the particle and its environment and is assumed to be of the following dipole form:
\begin{equation} 
H_I =  -f(\vecc{x}) \int_{\vecc{\Omega}} \nabla_{\omega} \varphi(\omega) c(\omega) d\omega,  
\end{equation}
where $c(\omega)$ is a coupling function and $f(\vecc{x})$ is a (generally nonlinear) function of the particle's position $\vecc{x} \in \RR^d$.

So far we have not specified the initial conditions for the above model for the total system. Our goal is to model a situation where the particle not only dissipates energy into the environment but also ``fluctuates'' and eventually its dynamics relax to a ``stationary state''. To allow this, we assume that the model is initially in thermal equilibrium at positive temperature $T > 0$, i.e. the initial conditions, $\{(\varphi(\omega), \pi(\omega)): \omega \in \Omega\}$, of the wave equations are distributed according to a Gibbs measure at this temperature. In this case, the environment has infinite energy (or heat capacity) and therefore provides sufficient energy to achieve our goal. In fact, one expects ``return to equilibrium'' for the system, i.e. an initial distribution of the system will converge to a stationary state which is given by the Gibbs distribution, $Z^{-1} e^{-\beta H_S(\vecc{x},\vecc{p})} d\vecc{x} d\vecc{p}$, where $Z$ is the partition function and $\beta = 1/k_B T$. This introduces randomness\footnote{Our notation of $\Omega$ resembles that for a sample space of elementary events in probability theory but this is actually not a deliberate choice of ours here!} into the otherwise deterministic Hamiltonian dynamics. 

Now, technical difficulties arise when one tries to extend the Gibbs distribution to the infinite-dimensional setting as the following expression for the Gaussian measure, 
\begin{equation}
\mu_{\beta}(d\pi d\varphi) = Z^{-1} e^{-\beta H_B(\varphi, \pi)} \prod_{\omega \in \RR} d\pi d\varphi,
\end{equation}
does not define a Gaussian density with respect to the Lebesgue measure (which does not exist in infinite dimensions!).  One has to deal with this using the theory of Gaussian measures in Hilbert spaces (see \cite{bogachev1998gaussian, janson1997gaussian}). Once these are dealt with carefully, one can then show that the dynamics specified by the Hamiltonian above are equivalent to those by a SIDE of the form \eqref{deri_side}. 
\end{rmk}


\begin{rmk}
An alternative approach, natural from the approximation theory and as outlined before Remark \ref{wave_eqn_model}, is to justify rigorously the passage from \eqref{memory_dis} and \eqref{noise_dis} to \eqref{memory_cont} and \eqref{limit_noise} respectively. In our context, such limiting procedure constructs stationary Gaussian processes from deterministic ODEs with random initial conditions. Such construction can indeed be done and justified rigorously by adapting techniques from \cite{tupper2002constructing,kupferman2002long,kupferman2004fractional,ariel2008strong}, in which one of the key techniques is the Skorokhod embedding theorem \citep{gikhman1996introduction}. Other approximation methods such as Monte Carlo approximation of \eqref{memory_cont} can also be considered \citep{huisinga2003extracting}. See also \cite{kupferman2004fitting,melbourne2011note} for related results concerning derivation of stochastic models from deterministic ones and \cite{stuart1999analysis,hald2002asymptotic} for numerical experiments. Note that, in this approach \eqref{noise_dis} is viewed as a random trigonometric series\footnote{This viewpoint is an old one, dated as early as the days of Paley, Wiener and Zygmund. We refer to \cite{kahane1968some,filipsmooth,kahane1997century} for review, history and connection to other areas of mathematics.}, with the randomness coming from the initial variables $\vecc{x}_k'(0)$ and $\vecc{p}_k(0)$. 
\end{rmk}

\subsection{SDE Representation and Quasi-Markov Stationary Gaussian Processses} \label{sect_sderep}

In this section, we define a class of stationary Gaussian processes \citep{doob1953stochastic,yaglom1952introduction,dym2008gaussian} known as the quasi-Markov processes  and characterize these processes in terms of Markovian representations in the form of It\^o  type stochastic differential equations (SDEs) on Euclidean space $\RR^n$ ($n$ is a positive integer) driven by Wiener processes\footnote{One could consider, for instance, SDEs on manifolds driven by continuous semimartingales. We will not treat classes of SDEs that are more general than the one defined here in this thesis.} . We assume familiarity\footnote{We refer to \cite{arnold2013stochastic, evans2012introduction,oksendal2003stochastic}  for mathematically rigorous introductions and \cite{gardiner2009stochastic,schuss1988stochastic} for applications.} with basic concepts in the theory of probability and stochastic processes in the following.

Fix a probability space $(\Omega, \mathcal{F}, P)$ and denotes expectation with respect to $P$ by $\mathbb{E}$.

\begin{defn} \label{def_sderep}
We say that a stochastic process, $\{\vecc{y}_t \in \RR^m;  t \in [0,T] \}$, admits a {\it SDE representation} if it can be represented as $\vecc{y}_t = \vecc{c}(\vecc{x}_t)$, where $\vecc{x}_t \in \RR^n$ is the solution of the following {\it It\^o SDE}: 
\begin{equation} \label{sde_repr}
d\vecc{x}_t = \vecc{a}(t,\vecc{x}_t) dt + \vecc{b}(t,\vecc{x}_t) d\vecc{W}_t,
\end{equation} 
with initial condition $\vecc{x}_0 = \vecc{x}$. In the above, $\vecc{a}: [0,T] \times \RR^n \to \RR^n$, $\vecc{b}: [0,T] \times \RR^n \to \RR^{n \times r}$,  $\vecc{c}: \RR^n \to \RR^m$ are  measurable functions, and $\{\vecc{W}_t \in \RR^r: t \geq 0\}$ is a {\it Wiener process}, i.e. a family of random variables which is  Gaussian with mean $\mathbb{E} \vecc{W}_{t} = 0$ and has covariance $\mathbb{E} \vecc{W}_{t}\vecc{W}^*_{s} = \min(t,s) \vecc{I}$, where $\vecc{I} \in \RR^{r \times r}$ is identity matrix.  The initial condition $\vecc{x}$ can be either deterministic or a random variable that is independent of the Wiener process.
\end{defn}

The formal derivative of the Wiener process, $\vecc{\zeta}_t := \frac{d \vecc{W}_t}{dt}$, is a {\it white noise process}, i.e. a (generalized\footnote{See \cite{ito1954stationary,gel2014generalized,hida2008lectures} for alternative approaches to study white noise. Note that the SDE is written in a differential form due to the generalized nature of the white noise.}) mean-zero Gaussian vector-valued stochastic process with correlation function
$\mathbb{E} \vecc{\zeta}_t \vecc{\zeta}_s^* = \delta(t-s) \vecc{I}.$ It serves as an idealized model for a random disturbance imparted to the otherwise deterministic ordinary differential equation, $d\vecc{x}_t = \vecc{a}(t,\vecc{x}_t) dt$. We call $\vecc{a}$ the {\it drift coefficient} and the amplitude of the noise, $\vecc{b}$, the {\it noise coefficient} or {\it diffusion coefficient} of the SDE. In the context of stochastic modeling, it is useful to distinguish two different natures of the driving noise in the SDE. If $\vecc{b}$ is a constant, we will say that the noise is additive. If $\vecc{b}$ depends on the state $\vecc{x}_t$ of the system, we will say that the noise is multiplicative.

The correct interpretation for the SDE \eqref{sde_repr} is as the stochastic integral equation:
\begin{equation} \label{stoch_int_eqn}
\vecc{x}_t = \vecc{x} + \int_0^t \vecc{a}(s,\vecc{x}_s)ds + \int_0^t \vecc{b}(s,\vecc{x}_s) d\vecc{W}_s,
\end{equation} almost surely (a.s.), where the last integral term above is an It\^o integral, as carefully\footnote{Recall that the Wiener paths $t \mapsto W_t(\omega)$ ($\omega \in \Omega$) are continuous but nowhere differentiable a.s.. Moreover, it is not of bounded
variation, so the integral cannot be defined as a Riemann-Stieltjes integral in a unique way. Different Riemann-Stieltjes approximations
lead to different stochastic integrals (the It\^o integrals and Stratonovich integrals are two important ones), which, in turn, lead to SDEs with different properties. This is an important lesson from the point of view of stochastic modeling. } defined in the It\^o theory of stochastic integration. A precise notion of solution to the SDE \eqref{def_sderep} involves the above integral interpretation, specification of appropriate classes of functions which $\vecc{a}$ and $\vecc{b}$ belong to, as well as desired properties of the solution $\vecc{x}_t$.  There are different notions of solution for SDE of type \eqref{def_sderep}.


In the Definition \ref{def_sderep}, the functions $\vecc{a}$ and $\vecc{b}$ are assumed to be belong to appropriate classes of functions  such that the SDE \eqref{sde_repr} has a pathwise unique solution up to the time $T$.  A general sufficient condition for existence and uniqueness of solution, up to its explosion time\footnote{i.e. the maximum stopping time up to which a solution of the SDE can be defined.}, of the SDE is, roughly speaking\footnote{These conditions depend on the notion of ``solution'' to the SDE that one introduces.}, when $\vecc{a}$ and $\vecc{b}$ are locally Lipschitz (see Theorem 1.1.8 in \cite{hsu2002stochastic} for precise formulation; see also \cite{ikeda2014stochastic,GeorgeLowtherBlog}). The global existence and uniquess result (i.e. for all time $t \geq 0$) can be obtained by imposing additional assumptions on $\vecc{a}$ and $\vecc{b}$. Typically, one additionally imposes a linear growth type condition (see \cite{mao2007stochastic}) or assumption on a Lyapunov function associated to the SDE (see Theorem 5.9 in \cite{bellet2006ergodic} or the text \cite{khasminskii2011stochastic}).

We now elaborate more on the above discussions through a simple mathematical formulation\footnote{At the introductory level of, for instance, \cite{evans2012introduction}.}. We emphasize that in the following we are not including various extensions of the formulation to keep the technicalities involved to a bare minimum. Let us start by defining It\^o stochastic integrals. For simplicity, we restrict to scalar processes. Generalization to vector-valued processes is straightforward. Let $b(t)$ be a random process on $[0,T]$ which is adapted to the filtration $\mathcal{F}_t = \sigma(\{W_s:  s \leq t\})$ generated by the Wiener process $\{W_s: s \leq t\}$, i.e. it is an $\mathcal{F}_t$-measurable function for all $t \in [0,T]$. Assume that $b(\cdot)$ is square-integrable, i.e. $\mathbb{E} \int_0^T |b(s)|^2 ds < \infty$. We define the stochastic integral:
\begin{equation}
B(t) = \int_0^t b(s) dW_s,
\end{equation}
as the $L_2(\Omega)$ limit of the Riemann sum approximation: 
\begin{equation}
B_a(t) = \lim_{N \to \infty} \sum_{n=0}^{N-1} b(\tau_n) (W(t_{n+1})-W(t_n)), 
\end{equation} 
where we have introduced a partition of the interval $[0,t]$ by letting $t_n = n \Delta t$, $n=0,1,\dots,N-1$ and $N \Delta t = t$, and for $a \in [0,1]$, $\tau_n := (1-a)t_n + a t_{n+1}$ for $n=0,1,\dots,N-1$.  The {\it It\^o stochastic integral} is defined as the integral $B_0(t)$, corresponding to the choice of $a=0$ in the formula for $B_a(t)$ above. It is a diffusion process (i.e. it is a Markov\footnote{Recall that an adapted process $x_t$ (with respect to  $\{\mathcal{F}_{t}: t \geq 0\}$) is {\it Markov} if for any $s, t \geq 0$ and any bounded continuous real-valued function $f$, $
\mathbb{E}[f(x_{s+t}) | \mathcal{F}_s] = E[f(x_{s+t}) | x_s] \ \text{ a.s..}$} process with a continuous sample path), satisfies the It\^o isometry: 
\begin{equation}
\mathbb{E} \left(\int_0^T b(s) dW_s \right)^2 = \int_0^T \mathbb{E}|b(s)|^2 ds,
\end{equation}
and has the quadratic variation:
\begin{equation}
\langle B \rangle_t := \lim_{|P| \to 0} \sum_{n=0}^{N-1} |B_{t_{n+1}}-B_{t_n}|^2 =  \int_0^t b(s)^2 ds,
\end{equation}
where $P$ ranges over the partition $\{t_0, \dots, t_{N-1}\}$ of $[0,t]$ and $|P|$ is the mesh of the partition $P$.

We now define a notion of solution to SDE \eqref{def_sderep} and provide a result on the existence and uniqueness of its solution.
 
\begin{defn} {\bf Strong solution of SDE.} A process $\vecc{x}_t$ with continuous paths defined on the probability space $(\Omega, \mathcal{F}, P)$ is called a {\it strong solution}\footnote{There is also a notion of weak solution. Throughout this paper, a solution to a SDE is meant in the strong sense.} to the SDE \eqref{def_sderep} if:
\begin{itemize}
\item[(i)]  $\vecc{x}_t$ is almost surely continuous and adapted to the filtration $\mathcal{F}_t$ generated by  the Wiener process $\{\vecc{W}_s, s \leq t\}$ and $\vecc{x}$ (independent of the Wiener process)\footnote{Condition (i) implies that $\vecc{x}_t$ is progressively measurable with respect to $\mathcal{F}_t$, so our definition here is a bit less general than the one in page 81-82 of \cite{evans2012introduction}.}; 
\item[(ii)] $\vecc{b}(\cdot, \vecc{x}_{\cdot}) \in L^1([0,T]; \RR^n)$ and $\vecc{\sigma}(\cdot, \vecc{x}_{\cdot}) \in L^2([0,T]; \RR^{n \times r})$ almost surely;
\item[(iii)] For every $t \geq 0$, the stochastic integral equation \eqref{stoch_int_eqn}, with $\vecc{x}_0 = \vecc{x}$, holds almost surely. 
\end{itemize}
\end{defn}

A simple approach to obtain result on existence and uniqueness of solution for SDEs is to impose the Lipshitz continuity assumption and a linear growth condition (which prevents the solution from exploding in finite time), familiar from the existence and uniqueness theory of ordinary differential equations.

\begin{thm} {\bf Global existence and uniqueness of solution to SDE.} \label{thm_exisuniq}
Assume that $\vecc{a}(\cdot, \cdot)$ and $\vecc{b}(\cdot, \cdot)$ satisfy the following (global) Lipschitz continuity assumption and linear growth condition, i.e. there exists a positive constant $C$ such that for all $\vecc{x}, \vecc{y} \in \RR^n$ and $t \in [0,T]$,  
\begin{equation}
|\vecc{a}(t,\vecc{x})-\vecc{a}(t,\vecc{y})| + |\vecc{b}(t,\vecc{x}) - \vecc{b}(t, \vecc{y})|_{F} \leq C|\vecc{x}-\vecc{y}|, 
\end{equation}
and 
\begin{equation}
|\vecc{a}(t,\vecc{x})| + |\vecc{b}(t,\vecc{x})|_F \leq C (1 + |\vecc{x}|),
\end{equation}
where $| \cdot |_F$ denotes Frobenius norm. Moreover, assume that $\vecc{x}_0 = \vecc{x}$ is a random variable independent of the Wiener process $\vecc{W}_t$ with finite second moment, $\mathbb{E} |\vecc{x}|^2 < \infty$. 

Then the SDE \eqref{def_sderep} has a unique (strong) solution $\vecc{x}_t$  with $\mathbb{E} \int_0^t |\vecc{x}_s|^2 ds < \infty$ for every $t > 0$. 
\end{thm}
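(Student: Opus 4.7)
The plan is to mimic the Picard iteration proof from classical ODE theory, replacing the Grönwall-type estimates on deterministic integrals with $L^2(\Omega)$ estimates that exploit the It\^o isometry on the stochastic integral term. Concretely, I would define the successive approximations $\vecc{x}^{(0)}_t := \vecc{x}$ and
\begin{equation}
\vecc{x}^{(n+1)}_t := \vecc{x} + \int_0^t \vecc{a}(s, \vecc{x}^{(n)}_s)\, ds + \int_0^t \vecc{b}(s, \vecc{x}^{(n)}_s)\, d\vecc{W}_s,
\end{equation}
and verify inductively that each $\vecc{x}^{(n)}$ is adapted, continuous, and satisfies $\sup_{t \in [0,T]} \mathbb{E} |\vecc{x}^{(n)}_t|^2 < \infty$. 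The linear growth assumption together with $\mathbb{E}|\vecc{x}|^2 < \infty$ and the It\^o isometry (plus Cauchy--Schwarz for the drift integral) gives this uniform $L^2$ bound on the iterates.

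Next I would establish that the sequence is Cauchy. Using $(a+b)^2 \leq 2a^2 + 2b^2$, the Lipschitz assumption, and It\^o's isometry, one gets an estimate of the form
\begin{equation}
\mathbb{E}|\vecc{x}^{(n+1)}_t - \vecc{x}^{(n)}_t|^2 \leq C_T \int_0^t \mathbb{E}|\vecc{x}^{(n)}_s - \vecc{x}^{(n-1)}_s|^2\, ds,
\end{equation}
for a constant $C_T$ depending on $C$ and $T$. Iterating this inequality yields $\mathbb{E}|\vecc{x}^{(n+1)}_t - \vecc{x}^{(n)}_t|^2 \leq (C_T t)^n M / n!$ for some constant $M$. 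To upgrade this pointwise-in-$t$ bound to uniform convergence on $[0,T]$ (so that the limit has continuous paths), I would apply Doob's martingale maximal inequality to the stochastic integral piece, obtaining an analogous bound for $\mathbb{E} \sup_{t \leq T} |\vecc{x}^{(n+1)}_t - \vecc{x}^{(n)}_t|^2$. A Borel--Cantelli argument then gives almost sure uniform convergence on $[0,T]$ to a continuous adapted process $\vecc{x}_t$; passing to the limit inside the two integrals (using the Lipschitz condition and the isometry) shows that $\vecc{x}_t$ solves the integral equation, and the square-integrability $\mathbb{E}\int_0^T |\vecc{x}_s|^2\, ds < \infty$ drops out from the uniform $L^2$ bound on the iterates by Fatou.

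For uniqueness, given two strong solutions $\vecc{x}_t$ and $\vecc{y}_t$, I would apply the same Lipschitz/It\^o-isometry estimate to the difference to get
\begin{equation}
\mathbb{E}|\vecc{x}_t - \vecc{y}_t|^2 \leq C_T \int_0^t \mathbb{E}|\vecc{x}_s - \vecc{y}_s|^2\, ds,
\end{equation}
and then invoke the (deterministic) Grönwall inequality to conclude $\mathbb{E}|\vecc{x}_t - \vecc{y}_t|^2 = 0$ for all $t \in [0,T]$; path continuity then upgrades this to indistinguishability.

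The main technical obstacle, and where I would be most careful, is the transition from the $L^2$-pointwise estimate to the supremum-in-$t$ estimate that secures continuity of paths and the Borel--Cantelli step. Without Doob's inequality applied to the martingale part $\int_0^t (\vecc{b}(s,\vecc{x}^{(n)}_s) - \vecc{b}(s,\vecc{x}^{(n-1)}_s))\, d\vecc{W}_s$, one only gets a solution for each fixed $t$, not a process with almost surely continuous trajectories. A secondary subtlety is justifying the passage to the limit in the It\^o integral, which requires the $L^2$-convergence already established and the continuity of the It\^o integral as a map from adapted square-integrable integrands into $L^2(\Omega)$.
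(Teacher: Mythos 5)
Your proposal is correct and follows essentially the same route as the proof the paper points to (the Picard iteration argument in Evans, pp.\ 90--94): successive approximations controlled via the It\^o isometry and the Lipschitz/linear growth bounds, factorial decay of the increments, Doob's maximal inequality plus Borel--Cantelli for almost sure uniform convergence, and Gr\"onwall for uniqueness. No gaps to flag.
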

\begin{proof}
See page 90-94 in \cite{evans2012introduction}.
\end{proof}

The uniqueness of the solution is in the sense that if there exists two solutions $\vecc{x}_t$ and $\vecc{y}_t$ satisfying the SDE, then $\vecc{x}_t = \vecc{y}_t$ for all $t$ almost surely. Note that when the above globally Lipschitz condition on the drift and noise coefficient holds, the linear growth condition above is equivalent to the condition that both $|\vecc{a}(t,\vecc{0})|$ and $|\vecc{b}(t,\vecc{0})|$ be bounded for every $t \geq 0$. It can also be shown that the solution $\vecc{x}_t$ of the SDE is a semimartingale and a diffusion process.

Usually one is interested in the case when the SDE is autonomous, i.e. $\vecc{a} = \vecc{a}(\vecc{x}_t)$ and $\vecc{b} = \vecc{b}(\vecc{x}_t)$ do not show explicit dependence on the time $t$. In this case, recall that $\vecc{x}_t$ is a diffusion process and there is an associated second-order differential operator, the infinitesimal generator of the process: 
\begin{equation} \label{markov_semigroup}
\mathcal{L} = \sum_{i=1}^n a_j(\vecc{x}) \frac{\partial}{\partial x_j} + \frac{1}{2} \sum_{i,j=1}^n \Sigma_{i,j} \frac{\partial^2}{\partial x_i \partial x_j} =: \vecc{a}(\vecc{x}) \cdot \vecc{\nabla} + \frac{1}{2} \vecc{\Sigma}(\vecc{x}) : \vecc{D}^2,
\end{equation}
where the $\Sigma_{i,j}$ denote the matrix elements of the matrix $\vecc{\Sigma}(\vecc{x}) = \vecc{b}(\vecc{x})\vecc{b}(\vecc{x})^{T}$, $\vecc{\nabla}$ denotes the gradient operator, $\vecc{D}^2$ denotes the Hessian matrix and $:$ denotes inner product between two matrices defined by $\vecc{A}: \vecc{B} = Tr(\vecc{A}^{T} \vecc{B}) = \sum_{i,j} A_{i,j} B_{i,j}$ (the superscript $^{T}$ denotes transposition. 

Of interest is a formula that allows us to compute the rate of change in time of a sufficiently nice function $F: [0,T] \times \RR^n \to \RR$ evaluated at the solution $\vecc{x}_t \in \RR^n$ of the SDE. This is an important result at the heart\footnote{There are many important applications of It\^o's formula. Perhaps an intriguing example is the computation of the stochastic integral $\int_0^t W_s dW_s \in \RR$. More generally, one compute $\int_0^t h_n(W_s, s) dW_s = h_{n+1}(W_t,t)$, where $h_n(x,s)$ $(n=0,1,\dots)$ is the $n$th Hermite polynomial. This result can also be derived using the exponential martingale (generating function of the Hermite polynomial), which has connections to the quantum stochastic calculus (see Example \ref{expmart_iso}).} of stochastic calculus.

\begin{thm} {\bf It\^o's formula\footnote{It\^o's formula also holds for stopping times.}.} Assume that the assumptions in Theorem \ref{thm_exisuniq} hold and let $\vecc{x}_t$ be the solution of the autonomous SDE \eqref{def_sderep}. Assume that $F \in C^{1,2}([0,T] \times \RR^n)$. Then the process $F(\vecc{x}_t)$ satisfies:
\begin{equation}
F(t, \vecc{x}_t) = F(\vecc{x}_0) + \int_0^t \frac{\partial F}{\partial s}(s, \vecc{x}_s) ds + \int_0^t \mathcal{L} F(s, \vecc{x}_s) ds + \int_0^t (\vecc{\nabla} F(s, \vecc{x}_s)) \cdot \vecc{b}(\vecc{x}_s) d\vecc{W}_s,
\end{equation}
where $\mathcal{L}$ is the infinitesimal generator for the process $\vecc{x}_t$,
or, in the differential form:
\begin{equation}
dF(t,\vecc{x}_t) = \frac{\partial F}{\partial t} dt + \sum_{i=1}^n  \frac{\partial F}{\partial x_i} dx_i + \frac{1}{2} \sum_{i,j=1}^n  \frac{\partial^2 F}{\partial x_i \partial x_j} dx_i dx_j, 
\end{equation}
where the convention $dW_i(t) dW_j(t) = \delta_{i,j}dt$, $dW_i(t) dt = 0$ for $i,j=1,\dots,n$ is used. 
\end{thm}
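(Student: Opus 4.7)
The plan is to derive It\^o's formula by a second-order Taylor expansion along a refining partition of $[0,t]$, extracting each surviving piece as an $L^2$ limit. First I would reduce to a bounded regime by localization via the stopping times $\tau_N = \inf\{s \geq 0 : |\vecc{x}_s| \geq N\} \wedge t$; combined with the moment bound $\mathbb{E}\int_0^t |\vecc{x}_s|^2 ds < \infty$ from Theorem \ref{thm_exisuniq} and the continuity of $F$, $\partial_t F$, $\vecc{\nabla} F$ and $\vecc{D}^2 F$, this lets us assume during the main argument that $F$ and its needed derivatives, as well as $\vecc{a}(\vecc{x}_s)$ and $\vecc{b}(\vecc{x}_s)$, are uniformly bounded on $[0,\tau_N]$. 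Removing the localization at the end by letting $N \to \infty$ is routine once the formula is proved on each stopped process.

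In this bounded regime I would fix a partition $0 = t_0 < t_1 < \cdots < t_M = t$ of mesh $|P| := \max_k (t_{k+1}-t_k) \to 0$ and start from the telescoping identity
\begin{equation}
F(t,\vecc{x}_t) - F(0,\vecc{x}_0) = \sum_{k=0}^{M-1} \bigl[ F(t_{k+1}, \vecc{x}_{t_{k+1}}) - F(t_k, \vecc{x}_{t_k}) \bigr],
\end{equation}
applying Taylor's theorem in both $t$ and $\vecc{x}$ through second order with integral remainder. This decomposes each increment into (i) a first-order time piece that is a Riemann sum for $\int_0^t \partial_s F(s,\vecc{x}_s)\,ds$, (ii) a first-order spatial piece $\sum_k \vecc{\nabla} F(t_k,\vecc{x}_{t_k}) \cdot \Delta \vecc{x}_k$, (iii) a Hessian piece $\tfrac{1}{2}\sum_k \sum_{i,j} \partial^2_{ij} F(t_k,\vecc{x}_{t_k}) \,\Delta x^i_k \Delta x^j_k$, plus remainders. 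Substituting the SDE $\Delta \vecc{x}_k = \int_{t_k}^{t_{k+1}} \vecc{a}(\vecc{x}_s)\,ds + \int_{t_k}^{t_{k+1}} \vecc{b}(\vecc{x}_s)\,d\vecc{W}_s$ into (ii) splits that sum into a Riemann sum converging to $\int_0^t \vecc{a}(\vecc{x}_s)\cdot \vecc{\nabla} F(s,\vecc{x}_s)\,ds$ and a Riemann--It\^o sum converging in $L^2$ to $\int_0^t \vecc{\nabla} F(s,\vecc{x}_s)\cdot \vecc{b}(\vecc{x}_s)\,d\vecc{W}_s$ directly by the defining approximation of the It\^o integral.

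The main obstacle, as always with It\^o's formula, is the Hessian piece (iii), where we need the quadratic-covariation identity
\begin{equation}
\sum_k \frac{\partial^2 F}{\partial x_i \partial x_j}(t_k, \vecc{x}_{t_k})\, \Delta x^i_k \Delta x^j_k \ \xrightarrow{L^2}\ \int_0^t \frac{\partial^2 F}{\partial x_i \partial x_j}(s, \vecc{x}_s)\, \Sigma_{ij}(\vecc{x}_s)\, ds
\end{equation}
for each $(i,j)$. I would expand $\Delta x^i_k \Delta x^j_k$ into drift--drift, drift--noise and noise--noise contributions: the drift--drift part is $O(|P|)$ by boundedness of $\vecc{a}$, the drift--noise cross terms vanish in $L^2$ via the It\^o isometry and Cauchy--Schwarz, and the crucial noise--noise part is handled by the quadratic-variation identity $\sum_k \Delta W^p_k \Delta W^q_k \to \delta_{pq}\, \Delta t$ in $L^2$, which produces the claimed $\Sigma_{ij}(\vecc{x}_s) = (\vecc{b}\vecc{b}^T)_{ij}(\vecc{x}_s)$ integrand. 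The second-order Taylor remainders are controlled using uniform continuity of $\vecc{D}^2 F$ on compacts together with the $L^2$-boundedness of $\sum_k |\Delta \vecc{x}_k|^2$, which itself converges to $\int_0^t |\vecc{b}(\vecc{x}_s)|_F^2 ds$ by the same quadratic-variation argument. Passing to the $|P|\to 0$ limit then assembles all four terms into the integral form of It\^o's formula; the differential form is just shorthand using the conventions $dW_i\,dW_j = \delta_{ij}\,dt$ and $dW_i\,dt = 0$.
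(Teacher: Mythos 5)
Your argument is correct: localization, the telescoping Taylor expansion to second order, substitution of the SDE increments, and the $L^2$ quadratic-variation identity $\sum_k \Delta W^p_k \Delta W^q_k \to \delta_{pq}\,\Delta t$ are exactly the ingredients needed, and you identify the genuinely delicate points (the Hessian term and the control of the Taylor remainders via $\sum_k |\Delta\vecc{x}_k|^2$). The paper itself gives no argument but simply defers to pages 78--79 of \cite{evans2012introduction}, and Evans' route there is organized somewhat differently: he first establishes the It\^o product rule and the formula for monomials/polynomials in $x$ and $t$, and then extends to general $F \in C^{1,2}$ by approximating with polynomials uniformly on compact sets, whereas you run the direct partition-and-Taylor argument (closer to \O ksendal's presentation). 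The two are equivalent in substance; yours is more self-contained but puts the analytic burden on the quadratic-variation and remainder estimates, while the polynomial route trades that for an approximation step. One small imprecision worth tightening: the sum $\sum_k \vecc{\nabla}F(t_k,\vecc{x}_{t_k})\cdot\int_{t_k}^{t_{k+1}}\vecc{b}(\vecc{x}_s)\,d\vecc{W}_s$ is not literally a defining Riemann--It\^o approximation of $\int_0^t \vecc{\nabla}F(s,\vecc{x}_s)\cdot\vecc{b}(\vecc{x}_s)\,d\vecc{W}_s$, since the factor $\vecc{b}(\vecc{x}_s)$ is not frozen at $t_k$; you should write it as a single stochastic integral with the piecewise-frozen integrand $\sum_k \vecc{\nabla}F(t_k,\vecc{x}_{t_k})\,1_{[t_k,t_{k+1})}(s)\,\vecc{b}(\vecc{x}_s)$ and invoke the It\^o isometry together with continuity to pass to the limit --- a routine fix, not a gap.
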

\begin{proof}
See page 78-79 in \cite{evans2012introduction}.
\end{proof}

Using the It\^o's formula, one can obtain a partial differential equation (PDE) describing the expectation of a functional, $u(\vecc{x},t) := \mathbb{E}[\phi(\vecc{x}_t) | \vecc{x}_0 = \vecc{x}]$, $\phi \in C^2(\RR^n)$, of the solution $\vecc{x}_t$ to the SDE. Indeed, applying the It\^o's formula to $u$, using the martingale property of stochastic integrals and differentiating with respect to time gives the {\it backward Kolmogorov equation}: 
\begin{equation}
\frac{\partial u}{\partial t} = \mathcal{L} u, \ \ \ \ \  u(\vecc{x},0) = \phi(\vecc{x}),
\end{equation} 
and its solution can be expressed via the semigroup generated by $\mathcal{L}$, i.e. $u(\vecc{x},t) = (e^{t \mathcal{L}} \phi)(\vecc{x})$. The adjoint equation of the backward Kolmogorov equation is the {\it forward Kolmogorov equation} ({\it Fokker-Planck equation}) for the probability density $\rho(\vecc{x},t) \in C^{2,1}(\RR^n \times (0,\infty))$ of the process $\vecc{x}_t$ (with initial density $\rho_0(\vecc{x})$):
\begin{equation}
\frac{\partial \rho}{\partial t} = \mathcal{L}^* \rho, \ \ \ \ \  \rho =  \rho_0 \text{ for } \vecc{x} \in \RR^n \times \{0\},
\end{equation}
where $\mathcal{L}^* \cdot = \vecc{\nabla} \cdot (-\vecc{a}(\vecc{x}) \cdot + \frac{1}{2} \vecc{\nabla} \cdot (\vecc{\Sigma}(\vecc{x}) \cdot))$ is the $L_2$-adjoint of the generator $\mathcal{L}$. 

In the context of stochastic modeling of the noise process $\vecc{\xi}(t)$ in \eqref{deri_side}, we are interested in real-valued processes that are mean-square continuous, mean-zero Gaussian and stationary ($\mathbb{E} \vecc{y}_t = \vecc{0}$ and $\mathbb{E} \vecc{y}_t^2 < \infty$) \citep{cramer2013stationary}. The Gaussianness and stationarity of $\vecc{y}_t$ should be inherited by the Markov process $\vecc{x}_t$. Therefore, we are led to consider linear time-invariant SDE representation of the type 
\begin{align}
d\vecc{x}_t &= \vecc{A}\vecc{x}_t dt + \vecc{B} d\vecc{W}_t, \label{q_m1} \\ 
\vecc{y}_t &= \vecc{C}\vecc{x}_t, \label{q_m2}
\end{align} 
where $\vecc{A} \in \RR^{n \times n}$, $\vecc{B} \in \RR^{n \times r}$, $\vecc{C} \in \RR^{m \times n}$ are constant matrices. Therefore, 
\begin{equation}
\vecc{y}_t = \vecc{C}e^{\vecc{A}t}\vecc{x} + \int_0^t \vecc{C}e^{\vecc{A}(t-s)} \vecc{B} d\vecc{W}_s, 
\end{equation}
where the initial time is taken to be $t = 0$, and we see that the SDE representation is a particular time representation of the process $\vecc{y}_t$. 

We assume that $\vecc{A}$ is {\it Hurwitz stable} (or $-\vecc{A}$ is {\it positive stable}), i.e. the real parts of all eigenvalues of $\vecc{A}$ are negative, and the initial condition $\vecc{x}$ is a mean-zero Gaussian random variable with covariance matrix $\vecc{M}$ satisfying the {\it Lyapunov equation}\footnote{By our assumption on $\vecc{A}$, there exists a unique solution to the Lyapunov equation. Furthermore, the solution is given by $\vecc{M} = \int_0^\infty e^{\vecc{A} y } \vecc{B} \vecc{B}^{T} e^{\vecc{A}^{T} y} dy$ (also known as the controllability gramian) \citep{bellman1997introduction}.}:
\begin{equation}
\vecc{A} \vecc{M} + \vecc{M} \vecc{A}^T = -\vecc{B}\vecc{B}^T.
\end{equation}
Then one computes that the covariance function of $\vecc{y}_t$ equals:
\begin{equation}
\vecc{R}(t-s) := \mathbb{E} \vecc{y}_t \vecc{y}^T_s = \vecc{C}e^{\vecc{A}(t-s)}\vecc{M} \vecc{C}^T,
\end{equation}
where $t > s \geq 0$. Note that since $\vecc{y}_t$ is stationary, the covariance function depends only on the time difference $\tau = t-s$. 

Denote by $\mathcal{G}$ the class of the real-valued, mean-square continuous, mean-zero stationary Gaussian processes. A subclass of $\mathcal{G}$ known as the quasi-Markov processes is of particular interest to us.

\begin{defn} \label{defn_qm}
Let $\{\vecc{y}_t \in \RR^m: t \in [0,T]\}$ be a stochastic process in $\mathcal{G}$. We say that $\vecc{y}_t$ is a {\it quasi-Markov process in $\mathcal{G}$} if it has a SDE representation of the form \eqref{q_m1}-\eqref{q_m2}, specified by the triple $(\vecc{A}, \vecc{B}, \vecc{C})$ of matrices of appropriate dimensions. Here $\vecc{A}$ is Hurwitz stable and the initial condition $\vecc{x}$ is a mean-zero Gaussian random variable with covariance matrix $\vecc{M}$ such that the Lyapunov equation 
$\vecc{A} \vecc{M} + \vecc{M} \vecc{A}^{T} = -\vecc{B} \vecc{B}^T$ is satisfied.  
\end{defn}

We now look at an equivalent representation in the frequency domain.  It is useful to view the Gaussian process, $t \to \vecc{y}_t = (y_t^1, \dots, y_t^m)$ ($t \in [0,T]$), as a curve in the real Hilbert space of $L^2(\Omega, \mathcal{F},P)$. In particular, all the probability information about $\vecc{y}_t$ is encoded in the subspace $\mathcal{H}(\vecc{y})  \subset L^2(\Omega, \mathcal{F},P)$, where $\mathcal{H}(\vecc{y})$ is the closed subspace spanned by the process $\vecc{y}_t$, i.e. $\mathcal{H}(\vecc{y}) = span\{y^k_t: t \in [0,T], \ k=1,\dots,m \}$. Recall that for a (mean-square continuous) stationary stochastic process, $t \mapsto \vecc{y}_t$, there exists a (strongly continuous) one-parameter group $\{\vecc{U}_t: t \in \RR\}$ of unitary operators on $\mathcal{H}(\vecc{y})$ such that for any time $t > 0$, one has $\vecc{y}_t = \vecc{U}_t \vecc{y}$. Note that the group preserves expectation. 

By the Stone-von Neumann theorem \citep{reed1972methods}, $\vecc{y}_t$ can be written uniquely as 
\begin{equation}
\vecc{U}_t \vecc{y} = \int_\RR e^{i \omega t} d\vecc{E}(\omega) \vecc{y},
\end{equation}
where $\vecc{E}(\cdot)$ is the spectral measure mapping Borel subsets of real line into orthogonal projection operators on $\mathcal{H}(\vecc{y})$ and $d\vecc{E}(\omega)\vecc{y}$ can be viewed as a stochastic measure. In this way, one obtains a spectral representation for the stationary Gaussian process $\vecc{y}_t$. More generally: 

\begin{thm}
Every stationary process $\{\vecc{y}_t: t \in \RR\}$, continuous in
mean-square, admits a representation
\begin{equation}
\vecc{y}_t = \int_\RR e^{i \omega t} d\hat{\vecc{y}}(\omega), \ \ \ t \in \RR,
\end{equation}
where $d\hat{\vecc{y}}$ is a finite vector-valued orthogonal stochastic measure uniquely determined by the process, and satisfies
\begin{equation}
\mathbb{E} d\hat{\vecc{y}}(\omega) = \vecc{0}, \ \ \ \mathbb{E} |d \hat{\vecc{y}}(\omega)|^2 = d\vecc{F}(\omega),
\end{equation}
where $\vecc{F}$ is the spectral distribution function of $(\vecc{y}_t)$.  The orthogonal stochastic measure $\hat{\vecc{y}}(\omega)$ is called the Fourier transform of the stationary process $\vecc{y}_t$. 
\end{thm}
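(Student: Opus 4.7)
The plan is to leverage the Stone--von Neumann construction already sketched in the paragraph preceding the statement. First, I would establish the existence of a strongly continuous one-parameter unitary group $\{\vecc{U}_t : t \in \RR\}$ on $\mathcal{H}(\vecc{y})$ satisfying $\vecc{U}_t \vecc{y}_s = \vecc{y}_{s+t}$. This is done by defining $\vecc{U}_t$ first on finite linear combinations $\sum_j c_j \vecc{y}_{s_j}$ via the shift $\vecc{y}_{s_j} \mapsto \vecc{y}_{s_j+t}$; stationarity makes this assignment isometric since covariances depend only on time differences, and mean-square continuity provides the strong continuity of $t \mapsto \vecc{U}_t \vecc{\xi}$ for every $\vecc{\xi} \in \mathcal{H}(\vecc{y})$. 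Density of such linear combinations in $\mathcal{H}(\vecc{y})$ allows one to extend $\vecc{U}_t$ to the whole space.

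Next, I would apply Stone's theorem to $\{\vecc{U}_t\}$ to obtain a unique projection-valued spectral measure $\vecc{E}(\cdot)$ on the Borel sets of $\RR$ with $\vecc{U}_t = \int_\RR e^{i\omega t} d\vecc{E}(\omega)$. The vector-valued orthogonal stochastic measure is then defined componentwise by $\hat{\vecc{y}}(A) := \vecc{E}(A)\vecc{y}_0$ for Borel $A \subset \RR$. Orthogonality of increments on disjoint $A, B$ follows from the projection identity $\vecc{E}(A)\vecc{E}(B) = \vecc{E}(A \cap B)$, giving $\mathbb{E}[\hat{\vecc{y}}(A)\hat{\vecc{y}}(B)^*] = \vecc{0}$. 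Setting $\vecc{F}(A) := \mathbb{E}[\hat{\vecc{y}}(A)\hat{\vecc{y}}(A)^*]$ produces a finite positive-semidefinite matrix-valued measure, which is the spectral distribution $\vecc{F}$ appearing in the statement. The mean-zero condition $\mathbb{E}\, d\hat{\vecc{y}}(\omega) = \vecc{0}$ reduces to $\mathbb{E}\vecc{y}_0 = \vecc{0}$, and uniqueness of $\hat{\vecc{y}}$ is inherited from the uniqueness clause in Stone's theorem.

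To conclude, I would verify $\vecc{y}_t = \vecc{U}_t \vecc{y}_0 = \int_\RR e^{i\omega t} d\hat{\vecc{y}}(\omega)$ by approximating the bounded continuous integrand $\omega \mapsto e^{i\omega t}$ by simple functions in $L^2(\vecc{F})$ and transporting these approximations through the isometry $L^2(\vecc{F}) \to \mathcal{H}(\vecc{y})$ sending $\mathbf{1}_A$ to $\hat{\vecc{y}}(A)$. Passing to the limit in $L^2(\Omega)$ recovers the stated integral formula.

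The main obstacle is the vector-valued bookkeeping: one must ensure that the componentwise action of the projections is well-defined (each $y^k_0$ lies in $\mathcal{H}(\vecc{y})$) and that $\vecc{F}$ is handled as a positive-semidefinite, $\sigma$-additive matrix-valued measure rather than a scalar one. Setting up the associated stochastic integral as an isometry on $L^2(\vecc{F})$ is routine once orthogonality of $\hat{\vecc{y}}$ is in hand, but the multivariate upgrade of the scalar theory is where a careful reader should slow down.
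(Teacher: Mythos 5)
Your proposal is correct and follows essentially the same route the paper takes: the paper's proof is a citation to Theorem 3.3.2 of \cite{lindquist2015linear}, whose argument (and the paper's own preceding paragraph on the shift group $\vecc{U}_t$ and the Stone--von Neumann spectral measure with $d\hat{\vecc{y}}(\omega)=d\vecc{E}(\omega)\vecc{y}$) is precisely your construction. Only minor polish is needed, e.g.\ complexifying $\mathcal{H}(\vecc{y})$ before invoking Stone's theorem and noting that uniqueness of $\hat{\vecc{y}}$ really comes from the density of the exponentials $\{e^{i\omega t}\}$ in $L^2(\vecc{F})$ rather than from Stone's theorem alone.
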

\begin{proof} See Theorem 3.3.2 in \cite{lindquist2015linear}.  \end{proof}

In particular, consider the one-dimensional stationary linear process of form 
\begin{equation} \label{time_rep}
\xi_t = \int_\RR w(t-s) \eta(ds), 
\end{equation}
$t \in \RR$, where $\eta(dt)$ is the standard stochastic measure with orthogonal values on $t \in \RR$ such that $\mathbb{E} \eta(dt) = 0$, $\mathbb{E} |\eta(dt)|^2 = dt$, with the weight function $w(t)$ satisfying $\int_\RR |w(t)|^2 dt < \infty$ (for instance, $\xi_t$ is a stochastic integral). Then $\xi_t$ has a spectral representation:
\begin{equation} \label{spec_rep}
\xi_t = \int_\RR e^{i\omega t} \phi(\omega) d\hat{W}(\omega),
\end{equation}
where $d\hat{W}(\omega)$ is a stochastic measure with orthogonal values on $\RR$ such that $\mathbb{E} \hat{W}(\omega) = 0$ and $\mathbb{E} |d\hat{W}(\omega)|^2 = d\omega/(2 \pi)$, i.e.  $\hat{W}$ is the Fourier transform of the Wiener process $(W_t)$, and $\phi(\omega)$ is a non-random function expressible as the Fourier transform of the weight function $w(t)$:
\begin{equation} 
\phi(\omega) = \int_\RR e^{-i\omega t} w(t) dt.
\end{equation}
Equation \eqref{spec_rep} represents the harmonic oscillations $\phi(\omega) e^{i\omega t}$ of frequency $\omega$  and the spectral density $S(\omega) = |\phi(\omega)|^2$ characterizes the weight of the different harmonic components of the process depending on the frequency $\omega$ \citep{rozanov1987stationary}. 


One natural task is to characterize all  processes that admit a (finite-dimensional) SDE representation in terms of their statistical properties (i.e. their covariance function and spectral distribution function). 

\begin{thm} \label{TFAE} The following statements are equivalent. 
\begin{itemize}
\item[(i)] There exists finite-dimensional SDE representations of $\vecc{y}_t \in \mathcal{G}$; 
\item[(ii)] The spectral distribution function, $\vecc{F}$, of the process is absolutely continuous with a rational spectral density $\vecc{S}$, i.e. $\vecc{S}(\omega) = \frac{d}{d\omega} \vecc{F}(\omega)$; 
\item[(iii)] The covariance function, $\vecc{R}(t) = \frac{1}{2 \pi} \int_\RR \vecc{S}(\omega) e^{i \omega t} d\omega$, of the process is a {\it Bohl function}, i.e. its matrix elements are finite linear combination of products of an exponential, a polynomial, a sine or cosine function. 
\end{itemize}
\end{thm}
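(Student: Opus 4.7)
The natural plan is to prove the cycle (i) $\Rightarrow$ (iii) $\Rightarrow$ (ii) $\Rightarrow$ (i), since two of the three arrows are essentially direct calculations and the only substantive content is the spectral factorization step.

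For (i) $\Rightarrow$ (iii), I would start from the explicit formula $\vecc{R}(\tau) = \vecc{C} e^{\vecc{A}\tau}\vecc{M}\vecc{C}^T$ derived in the text. Writing $\vecc{A}$ in Jordan canonical form $\vecc{A} = \vecc{P}\vecc{J}\vecc{P}^{-1}$, the entries of $e^{\vecc{A}\tau}$ are finite linear combinations of terms of the form $\tau^k e^{\lambda \tau}$, where $\lambda$ ranges over the eigenvalues of $\vecc{A}$ and $k$ is bounded by the size of the corresponding Jordan block. Splitting $\lambda = \alpha + i\beta$ with $\alpha < 0$ by Hurwitz stability, each entry becomes a finite sum of $\tau^k e^{\alpha\tau}\cos(\beta\tau)$ and $\tau^k e^{\alpha\tau}\sin(\beta\tau)$. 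Multiplying by the constant matrices $\vecc{M}$ and $\vecc{C}$ preserves this structure, so every matrix element of $\vecc{R}$ is a Bohl function.

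For (iii) $\Rightarrow$ (ii), one takes the Fourier transform entry by entry. Since the elementary Bohl building blocks $\tau^k e^{\alpha\tau}\cos(\beta\tau)$ and $\tau^k e^{\alpha\tau}\sin(\beta\tau)$ (with $\alpha<0$, suitably extended to the real line as the symmetric covariance) have Fourier transforms that are rational in $\omega$ with no real poles, linearity gives a rational matrix spectral density $\vecc{S}(\omega)$. Equivalently, one may invoke the formula $\vecc{S}(\omega) = \vecc{C}(i\omega\vecc{I}-\vecc{A})^{-1}\vecc{B}\vecc{B}^T(-i\omega\vecc{I}-\vecc{A}^T)^{-1}\vecc{C}^T$ obtained by Fourier-transforming $\vecc{R}$ directly and using the Lyapunov equation to eliminate $\vecc{M}$; this is manifestly rational in $\omega$, and the Lyapunov identity $\vecc{A}\vecc{M}+\vecc{M}\vecc{A}^T = -\vecc{B}\vecc{B}^T$ is exactly what makes the two derivations agree.

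The heart of the argument, and the main obstacle, is (ii) $\Rightarrow$ (i): the \emph{spectral factorization} step. Given a rational $\vecc{S}(\omega)\geq 0$ with $\vecc{S}(\omega) = \vecc{S}(-\omega)^T$, one must produce a rational matrix function $\vecc{W}(s)$, analytic together with its inverse in the open right half plane, such that $\vecc{S}(\omega) = \vecc{W}(i\omega)\vecc{W}(-i\omega)^T$. The classical construction proceeds by extending $\vecc{S}(\omega)$ to a matrix-valued rational function of a complex variable $s = i\omega$, pairing off its poles and zeros that are symmetric with respect to the imaginary axis, and keeping for $\vecc{W}$ exactly those in the open left half plane; positivity of $\vecc{S}$ ensures the factor exists, and Hurwitz stability of the resulting $\vecc{W}$ is built into the construction. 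Once $\vecc{W}(s)$ is in hand, I would invoke the minimal realization theorem from linear systems theory to write $\vecc{W}(s) = \vecc{C}(s\vecc{I}-\vecc{A})^{-1}\vecc{B}$ with $\vecc{A}$ Hurwitz, and define $\vecc{x}_t$ by the SDE $d\vecc{x}_t = \vecc{A}\vecc{x}_t\,dt + \vecc{B}\,d\vecc{W}_t$ with $\vecc{x}_0$ Gaussian with covariance $\vecc{M}$ solving $\vecc{A}\vecc{M}+\vecc{M}\vecc{A}^T = -\vecc{B}\vecc{B}^T$. Setting $\vecc{y}_t = \vecc{C}\vecc{x}_t$ and computing the covariance using the Lyapunov identity recovers the given $\vecc{R}$, closing the cycle. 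The delicate points are verifying that the spectral factor can be chosen rational and stable even in the matrix case (where one typically invokes the Youla or Anderson factorization results) and arguing that the minimal realization inherits Hurwitz stability from analyticity of $\vecc{W}$ in the right half-plane.
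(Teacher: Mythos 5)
Your proposal is correct in substance, but it is not the paper's argument: the paper offers no self-contained proof at all, it simply cites Corollary 10.3.4 in \cite{lindquist2015linear} for (i)$\Leftrightarrow$(ii) and Theorem 2.20 in \cite{trentelman2002control} for (ii)$\Leftrightarrow$(iii). What you have written out is essentially the content of those cited results: (i)$\Rightarrow$(iii) by the Jordan-form expansion of $e^{\vecc{A}\tau}$ in $\vecc{R}(\tau)=\vecc{C}e^{\vecc{A}\tau}\vecc{M}\vecc{C}^T$; (iii)$\Rightarrow$(ii) by Fourier transforming the Bohl modes, or equivalently via the rational expression $\vecc{S}(\omega)=\vecc{C}(i\omega\vecc{I}-\vecc{A})^{-1}\vecc{B}\vecc{B}^T(-i\omega\vecc{I}-\vecc{A}^T)^{-1}\vecc{C}^T$ obtained from the Lyapunov identity; and (ii)$\Rightarrow$(i) by matrix spectral factorization (Youla/Anderson) followed by minimal realization and the Lyapunov equation. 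The paper's citation route buys brevity and delegates the one genuinely hard analytic step --- existence of a stable rational spectral factor in the matrix case --- to the stochastic realization literature; your cycle buys a self-contained argument whose only external inputs are the factorization theorem and standard realization theory, and it makes visible where Hurwitz stability and the Lyapunov equation actually enter.

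Two points deserve to be made explicit before the argument is complete. First, in (iii)$\Rightarrow$(ii) a Bohl covariance need not decay: $\vecc{R}(t)=\cos t$ is Bohl and is the covariance of a process in $\mathcal{G}$ whose spectral measure is purely atomic, so your ``rational Fourier transform with no real poles'' step silently assumes all exponents have negative real part. This is rescued by the way (iii) is phrased (it already posits a density $\vecc{S}$, so $\vecc{F}$ is a finite absolutely continuous measure, $\vecc{R}(t)\to 0$ by Riemann--Lebesgue, and linear independence of the Bohl modes then kills every non-decaying term); you should say this, since for a general Bohl $\vecc{R}$ the implication as written is false. Second, your (ii)$\Rightarrow$(i) produces a Gaussian process with the prescribed covariance, i.e.\ a realization in law; if (i) is read as a representation of the given $\vecc{y}_t$ itself, the driving Wiener process must be built from $\vecc{y}$, e.g.\ through $d\hat{\vecc{W}}(\omega)=\vecc{\Phi}^{-L}(i\omega)\,d\hat{\vecc{y}}(\omega)$ exactly as in the algorithm following the theorem; note also that the absence of a feedthrough term in the minimal realization is legitimate only because integrability of the rational $\vecc{S}$ forces the spectral factor to be strictly proper.
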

\begin{proof} To show (i) is equivalent to (ii), see Corollary 10.3.4 in \cite{lindquist2015linear}. To show (ii) is equivalent to (iii), apply Theorem 2.20 in \cite{trentelman2002control} to $\vecc{R}(t)$.
\end{proof}

In experimental situations, one typically only has spectral information  about a noise process; for instance its spectral density \citep{gittes1997signals}. It is then important to be able to model the noise process based on this information. The construction of a SDE representation (i.e. identification of the triple ($\vecc{A}, \vecc{B},\vecc{C})$) for a quasi-Markov process in $\mathcal{G}$ given its spectral density or covariance function is the problem of stochastic realization, which has interesting connections to the Lax-Phillips scattering theory and can be formulated in a coordinate-free approach (see the monograph \cite{lindquist2015linear}), that are worth mentioning. The following result solves the problem in our case (see \cite{lindquist2015linear,picci1992generation} for details).

\begin{alg}
A SDE representation of a quasi-Markov process $\vecc{y}_t$, given its spectral density $\vecc{S}$, can be computed via the following procedures:
\begin{itemize}
\item[(1)] Find a spectral factorization of $\vecc{S}(\omega)$, i.e. find a rational $m \times r$ matrix function $\vecc{\Phi}$ such that $\vecc{S}(\omega) = \vecc{\Phi}(i\omega) \vecc{\Phi}(i\omega)^*$, where $\vecc{\Phi}$ is an analytic spectral factor (i.e. all its poles lie in the left half plane) and $^*$ denotes conjugate transpose. For simplicity, restrict to left-invertible factors, with the rank of $\vecc{S}$ equals to $r$.
\item[(2)] For each such spectral factor $\vecc{\Phi}$, define the Gaussian process $\vecc{W}_t$ by specifying its Fourier transform $\hat{\vecc{W}}(\omega)$ as:  
\begin{equation}
d\hat{\vecc{W}}(\omega) = \vecc{\Phi}^{-L}(i\omega) d\hat{\vecc{y}}(\omega),
\end{equation}
 where $-L$ denotes left inverse. Then it is easy to see that $\vecc{W}_t$ is a $\RR^d$-valued Wiener process and $\vecc{y}_t$ admits the spectral representation:
 \begin{equation}
 \vecc{y}_t = \int_\RR e^{i \omega t} \vecc{\Phi}(i \omega) d\hat{\vecc{W}}(\omega).
 \end{equation}
 \item[(3)] Compute a minimal realization of the spectral factor $\vecc{\Phi}(i\omega)$ of the form
 \begin{equation}
\vecc{\Phi}(i\omega) = \vecc{C} (i \omega \vecc{I} - \vecc{A})^{-1} \vecc{B}, 
 \end{equation}
where $\vecc{A} \in \RR^{n \times n}$ is a Hurwitz stable matrix, $\vecc{B} \in \RR^{n \times r}$ and $\vecc{C} \in \RR^{m \times n}$ are constant matrices such that $\vecc{A}\vecc{M} + \vecc{M} \vecc{A}^T = -\vecc{B} \vecc{B}^T$ , with $n$ as small as possible. 
\end{itemize}
Therefore, corresponding to every spectral factor $\vecc{\Phi}$, $\vecc{y}_t$ admits a SDE representation of the form as defined in Definition \ref{defn_qm}. The representation obtained is unique up to a change of basis on the state space and an orthogonal transformation on the Wiener process $\vecc{W}_t$.  
\end{alg}

Now we return to our earlier discussions of open systems.  Observe that after taking the thermodynamic limit the noise process $\vecc{\xi}(t)$ in \eqref{limit_noise} can be seen to be already in a form of spectral representation, with the initial ``field variables'' $(\vecc{x}'(\omega), \vecc{p}(\omega))$ (conditionally) distributed according to a Gibbs measure. This justifies our stochastic modeling of the noise process. 

We now apply the above algorithm by factorizing $\vecc{S}(\omega) = \vecc{\Phi}(i\omega) \vecc{\Phi}^*(i\omega)$, where 
$$\vecc{\Phi}(i\omega) = \sqrt{\pi n(\omega)/\omega^2} \vecc{c}(i\omega)$$ is a spectral factor of the spectral density. The following examples give realization of a few noise proceses. We take $n(\omega) =  2\omega^2/\pi$ (Debye-type spectrum for phonon bath) in all these examples.

\begin{ex}
If we choose $\vecc{c}(\omega) \in \RR^{d \times d}$ to be a scalar multiple of the identity matrix $\vecc{I}$, then $\vecc{\kappa}(t)$ is proportional to $\delta(t) \vecc{I}$. This leads to a Langevin equation driven by white noise, in which the damping term is instantaneous. In this case, we have the SDE system for $(\vecc{x}_t, \vecc{v}_t) \in \RR^{d \times d}$: 
\begin{align}
d \vecc{x}_t &= \vecc{v}_t dt, \label{side_derived1} \\ 
m d \vecc{v}_t &= -\vecc{\nabla}_{\vecc{x}} U (\vecc{x}_t) dt - \vecc{g}^2(\vecc{x}_t) \vecc{v}_t dt + \vecc{g}(\vecc{x}_t) d\vecc{W}_t, \label{side_derived2}  
\end{align}
where $\vecc{W}_t$ is the Wiener process. 

The above SDE system can also be obtained as a Markovian limit of the GLE \citep{pavliotis2014stochastic}. 

In the special case where $\vecc{g}(\vecc{x}_t) = \vecc{g}$ is a constant matrix and $U(\vecc{x}) = \frac{1}{2} k \vecc{x}^2$ (harmonic potential) or $U(\vecc{x}) = \vecc{0}$ (free particle case), both the GLE and the SDE system \eqref{side_derived1}-\eqref{side_derived2} are exactly solvable. Interestingly, in this special case and in one dimension ($d=1$), the SDE system \eqref{side_derived1}-\eqref{side_derived2} can be derived from the Lamb's model \citep{lamb1900peculiarity} and constructed using a dilation procedure (see \cite{maassen1982class,lewis1984hamiltonian} for details and other Hamiltonian models for open systems).     
\end{ex}


\begin{ex} If we  choose $\vecc{c}(\omega) \in \RR^{d \times d}$ to be the diagonal matrix with the $k$th entry 
\begin{equation}
\frac{\alpha_k}{\sqrt{\alpha^2_k+\omega^2}} ,\end{equation} 
where the $\alpha_{k} > 0$, then we have:
\begin{equation} \vecc{\kappa}(t) = \vecc{A} e^{-\vecc{A}t}, \end{equation}
where $\vecc{A}$ is the constant diagonal matrix with the $k$th entry equal $\alpha_{k}$.  On the other hand, choosing $\vecc{c}(\omega)$ to be the diagonal matrix with the $k$th entry
\begin{equation}
\left(\frac{\omega_{kk}}{\tau_{kk}}\right)^2 \frac{1}{\sqrt{\omega^2 (\omega_{kk}^2/\tau_{kk})^2+(\omega^2-(\omega_{kk}/\tau_{kk})^2)^2}}\end{equation} allows us to obtain the covariance function of a harmonic noise process, where the $\omega_{kk}$ and $\tau_{kk}$ are the diagonal entries of the matrix $\vecc{\Omega}$ and $\vecc{\tau}$ respectively. In the general case where $\vecc{\kappa}(t)$  is written as $\vecc{C}_1 e^{-\vecc{\Gamma}_1 t}\vecc{M}_1 \vecc{C}_1^*$, one may take $\vecc{M}_1 = \vecc{I}$, $\vecc{\Gamma}_1$ to be positive definite, in which case the Lyapunov equation gives $\vecc{\Gamma}_1 = \vecc{\Sigma}_1 \vecc{\Sigma}_1^*/2$, and choose 
\begin{equation}
\vecc{c}(\omega) = \frac{1}{\sqrt{2}} \vecc{C}_1 (\vecc{\Gamma}^2_1 +  \omega^2 \vecc{I})^{-1/2} \vecc{\Sigma}_1.\\
\end{equation}
\end{ex}

To summarize, stochastic integro-differential equations (SIDEs) of the following form appear naturally from the studies of open classical systems:
\begin{equation} \label{gle_mostgeneral}
m \ddot{\vecc{x}}(t) = \vecc{F}(\vecc{x}(t)) - \vecc{g}(\vecc{x}(t))\int_0^t \vecc{\kappa}(t-s) \vecc{h}(\vecc{x}(s)) \dot{\vecc{x}}(s) ds + \vecc{\sigma}(\vecc{x}(t)) \vecc{\xi}(t),
\end{equation}
where $m>0$ is the mass of the particle, $\vecc{F}$ represents the external force,  $\vecc{\kappa}(t)$ is a memory function, $\vecc{\xi}(t)$ is a stationary Gaussian process, and $\vecc{g}$, $\vecc{h}$ and $\vecc{\sigma}$ are state-dependent coefficients.   

Let us consider the special case where the memory function and the covariance function of the driving noise are Bohl. An immediate consequence of Theorem \ref{TFAE} allows us to embed the resulting process $\vecc{x}(t)$, satisfying \eqref{gle_mostgeneral}, as a component of a higher dimensional process which admits a SDE representation. This approach makes available various tools and techniques from the Markov theory of stochastic processes and can be exploited to study homogenization of GLEs.

To end this section, we give a brief literature review on works related to the GLEs. A basic form of the GLEs \eqref{gle_mostgeneral} was first introduced by Mori in \cite{mori1965transport} and subsequently used to model many systems in statistical physics \citep{Kubo_fd,toda2012statistical,goychuk2012viscoelastic}. As remarked by van Kampen in \cite{van1998remarks}, ``Non-Markov is the rule, Markov is the exception". Therefore, it is not surprising that non-Markovian equations (including those of form \eqref{gle_mostgeneral}) find numerous applications and thus have been studied widely in the mathematical, physical and engineering literature (see \cite{luczka2005non,samorodnitsky1994stable} for surveys of non-Markovian processes). In particular, GLEs have been widely used as models to study many systems and have attracted increasing interest in recent years. We refer to, for instance, \cite{PhysRevB.89.134303,mckinley2009transient,lysy2016model,adelman1976generalized,siegle2010markovian,hartmann2011balanced,cordoba2012elimination,2016arXiv160602596L}  for various applications of GLEs and \cite{Ottobre,mckinley2017,GlattHoltz2018,leimkuhler2018ergodic,nguyen2018small, lim2019homogenization, lim2020homogenization}  for asymptotic analysis of GLEs.

\section{Mathematical Concepts and Formulations of Quantum Mechanics} \label{sect_concepts}
We now switch our attention to the quantum theory formulated in the language of quantum probability. Quantum probability is a version of noncommutative probability theory that not only extends Kolmogorov's classical probability theory, but also provides a natural framework to study quantum mechanical systems. In fact, its development was aided by statistical ideas and concepts from quantum theory. It is the foundation for construction of quantum stochastic calculus (QSC) and quantum stochastic differential equations (QSDEs), which extend classical stochastic calculus and SDEs. The basic rigorous construction of QSC and QSDEs was first laid out in the seminal\footnote{For an account of the developments that preceded the publication of this seminal paper, see \cite{applebaum2010robin}.}  work of Hudson and Parthasarathy (H-P) \citep{hudson1984quantum}.  Modeling quantum mechanics as a noncommutative probability theory is a fruitful mathematical approach.  In the last few decades, the quantum probability formalism has been widely applied to study open quantum systems.  On the other hand, the classical stochastic calculus of It\^o has deep connection with objects such as the Fock space and the Heisenberg uncertainty principle \citep{biane2010ito}. 


In this section, we give a quick overview of quantum probability. We follow closely the notations and expositions in \cite{parthasarathy2012introduction}. For comprehensive accounts of quantum probability, we refer to the monographs \cite{parthasarathy2012introduction} and \cite{meyer2006quantum}.  For recent developments, perspectives and applications of the calculus to the study of open quantum systems, we refer to \cite{Hudson1985,fagnola1999quantum,attal2006open, bouten2007introduction,nurdinlinear,emzir2016physical,barchielli2015quantum,gregoratti2001hamiltonian,gough2009quantum,sinha2007quantum}.  \\

\noindent {\bf Notation.}  $[A,B] := AB - BA$ and $\{A, B\} := AB + BA$ denote, respectively, the commutator and anti-commutator of the operators $A$ and $B$. The symbol $I$ denotes identity operator on an understood space. We denote by $\mathcal{B}(\mathcal{H})$ the algebra of all bounded operators on the Hilbert space $\mathcal{H}$, with the inner product $\langle \cdot | \cdot \rangle$, which is linear in the second argument and antilinear in the first. We are using Dirac's bra-ket notation, so we will write, for instance, the vector $u \in \mathcal{H}$ as the ket $|u\rangle$. For $X \in \mathcal{B}(\mathcal{H})$, $u, v \in \mathcal{H}$, we write $\langle u | X | v \rangle = \langle u| X v\rangle = \langle X^* u|v\rangle$.  We recall that unbounded operators are defined only on a linear manifold in $\mathcal{H}$ (the domain of the operators). Two unbounded operators $X$ and $Y$ are equal if their domains coincide and both of them agree on the common domain. The adjoint operator of the unbounded operator $X$ is denoted as $X^\dagger$ (whenever it exists). Any projections considered hereafter are orthogonal projections.

\subsection{Postulates of Quantum Mechanics}
The essence of quantum probability is best illustrated in line with the mathematical formulation of non-relativistic quantum mechanics, which is based on a set of commonly accepted postulates \citep{neumann1932mathematische}. We will do so in this subsection and make connection to relevant concepts in classical probability along the way. For simplicity, we focus on description of a single, isolated particle in the following. 



Roughly speaking, classical mechanics describes the dynamical state variables of a particle as functions of position and momentum on a phase space. Quantum mechanics describes the state of a particle by an abstract ``wave function'' obeying wave mechanics\footnote{The wave nature of the particle in the theory is consistent with the observation in the double-slit experiment -- see Chapter 1 in \cite{gustafson2011mathematical,sakurai2017modern} for a brief description.}.  More precisely, following closely the Dirac-von Neumann axioms\footnote{Here we are following the orthodox version of quantum mechanics. Interpretations of quantum mechanics belong to the foundations of quantum mechanics and we will not discuss them (however, see \cite{accardi2006could}
for a fun digression).}, quantum mechanics is formulated based on the following principles. 
\begin{itemize}
\item[(A1)] {\bf Spaces.} For every quantum system, there is an associated complex separable Hilbert space $\mathcal{H}$ (with an inner product $\langle \cdot | \cdot \rangle$) on which an algebra of linear operators, $\mathcal{A}$, is defined. 
\item[(A2)] {\bf States.} Given an algebra of operators $\mathcal{A}$ on $\mathcal{H}$ for a quantum system, the space of {\it quantum states}, $\mathcal{S}(\mathcal{A})$ of the  system consists of all positive trace class operators $\rho \in \mathcal{A}$ with unit trace, i.e. $Tr(\rho)=1$. The pure states are projection operators (rays) onto one-dimensional subspaces of $\mathcal{H}$, with $Tr(\rho^2)=1$. All other states, with $Tr(\rho^2) < 1$, are called mixed states.  For instance, if $|\vecc{u}\rangle$ is a unit vector in $\mathcal{H}$, then the density operator defined by $|u\rangle \langle u|$ is a pure state. For our convenience, we will also refer to $|u\rangle$ as the state. In finite-dimensional spaces, the general density operator representing a mixed state is a statistical mixture of pure states of the form $\rho = \sum_{j} p_j |u_j \rangle \langle u_j|$, where $\sum_j p_j = 1$.  
\item[(A3)] {\bf Observables.} An {\it observable} of the quantum system is represented by a self-adjoint linear (not necessarily bounded) operator, $X$, on $\mathcal{H}$. By  von Neumann's spectral theorem \citep{reed1981functional},  it admits a spectral representation,

\begin{equation}
X = \int_{\RR} x P^{X}(dx),
\end{equation}
where $P^X$ is a spectral measure on the Borel $\sigma$-algebra of $\RR$. For instance, for a self-adjoint operator $X \in \mathcal{B}(\mathcal{H})$ with $\mathcal{H}$ finite-dimensional, the spectral representation becomes $X = \sum_{\lambda \in Spec(X)} \lambda E_{\lambda},$ where the $\lambda$ are the eigenvalues of $X$ with $E_{\lambda}$ the orthogonal  projection on the corresponding eigenspace such that $E_{\lambda} E_{\lambda'} = 0$ for $\lambda \neq \lambda'$ and $\sum_{\lambda \in Spec(X)} E_{\lambda} = I$.
\item[(4)] {\bf Measurements and Statistics.} 
Let $X$ represent an observable, $|u \rangle$ be a state\footnote{Any (mixed) state can be purified, i.e. it can be written as a partial trace of a pure state on an enlarged Hilbert space. Purification is a central idea in the theory of quantum information \citep{nielsen2010quantum}.} and $E \subset \RR$ be a Borel subset. Then the spectral projection $P^X(E)$ is the {\it event} that the value of the observable $X$ lies in $E$ and the probability that the event $P^X(E)$ occurs in the state $|u\rangle$ is given by $\langle u | P^X(E)| u \rangle$. From this, one sees that the {\it probability measure} $\mu_X$ is given by
\begin{equation}
\mu_X(E) = \langle u | P^X(E) | u \rangle,
\end{equation}    
on the Borel $\sigma$-algebra of $\RR$. 
A process of {\it measurement}\footnote{For the theory of quantum measurement from a physicist's perspective, see \cite{wiseman2009quantum}.} on a quantum system is the correspondence between the observable-state pair $(X, \rho=|u \rangle \langle u|)$ and the probability measure $\mu_{X}$. In other words, for any Borel subset $E \in \mathcal{B}(\RR)$, the quantity $\mu_{X}(E) \in [0,1]$ is the probability that the result of the measurement of the observable $X$ belongs to $E$ when a quantum system is in the state $\rho$. We say that $\mu_X$ is the {\it distribution} of the observable $X$ in the state $|u\rangle$ and define the {\it expectation} of the observable $X$ in this state  by 
\begin{equation}
\langle X \rangle = Tr(X\rho) = \int_{\RR} x \mu_{X}(dx) = \langle u | X | u \rangle,
\end{equation}
whenever it is finite. 
If $f$ is a Borel function (real or complex-valued), then $f(X)$ is an observable with expectation $\langle f(X) \rangle = \int_\RR f(x) \mu_X(dx) = \langle u | f(X) | u \rangle$, if it is finite. An important example is  the {\it characteristic function} of $X$ in the state $|u\rangle$, defined as:
\begin{equation}
\langle e^{it X} \rangle = \int_\RR e^{itx} \mu_X(dx) = \langle u | e^{itX} |u \rangle.
\end{equation}
\end{itemize}

From the above discussion, one can therefore view a quantum state as quantum analogue of probability distribution in classical probability and a quantum observable as  quantum analogue of random variable (classical observable). There are important distinctions between these notions in the classical versus quantum case. In particular, the set of quantum observables generally forms a noncommutative algebra while the set of classical observables forms a commutative one. The noncommutativity of quantum observables leads to notable departure of quantum mechanics from its classical counterpart, in particular:

\begin{itemize}
\item Non-commuting operators do not, in general, admit a `'joint distribution''\footnote{Non-commuting observables cannot be simultaneously realized classically. Generally, there is no sensible notion of joint quantum probability distribution for them.} in a particular state \citep{BreuerBook}. 
\item Noncommutativity of operators also gives rise to interesting inequalities for statistical quantities of observables. We mention one such inequality in the following.


Define the {\it covariance} between $X,Y \in \mathcal{B}(\mathcal{H})$ in the state $\rho$ as: $$cov_{\rho}(X,Y) = Tr(\rho X^{*}Y)-Tr(\rho X^{*})Tr(\rho Y),$$ which might be a complex number if the two observables are non-commuting (interference) \citep{parthasarathy2012introduction}. 

\begin{prop}(Uncertainty Principle) Let $var_{\rho}(X) = cov_{\rho}(X,X)$, then for any pure state $|u \rangle$ and observables $X,Y$, we have the following inequality: 
$$var_{u}(X) var_{u}(Y) \geq \frac{1}{4}|\langle u |  i[X,Y] u \rangle|^2.$$ 
\end{prop}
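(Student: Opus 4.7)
The plan is to reduce the claim to a Cauchy--Schwarz inequality applied to the mean-subtracted observables, and then split the resulting inner product into its Hermitian and anti-Hermitian parts, identifying the anti-Hermitian part with the commutator.

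First I would centre the observables: define $\tilde X := X - \langle u|X|u\rangle I$ and $\tilde Y := Y - \langle u|Y|u\rangle I$. These are again self-adjoint, and a direct expansion shows $[\tilde X,\tilde Y]=[X,Y]$ while $\mathrm{var}_u(X)=\langle u|\tilde X^2|u\rangle = \|\tilde X u\|^2$, and similarly for $Y$. This reduction turns variances into squared norms, which is exactly the shape Cauchy--Schwarz wants.

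Next, applying the Cauchy--Schwarz inequality in $\mathcal{H}$ to the vectors $\tilde X u$ and $\tilde Y u$ gives
\begin{equation}
\mathrm{var}_u(X)\,\mathrm{var}_u(Y) \;=\; \|\tilde X u\|^2\,\|\tilde Y u\|^2 \;\geq\; |\langle \tilde X u|\tilde Y u\rangle|^2 \;=\; |\langle u|\tilde X \tilde Y|u\rangle|^2.
\end{equation}
Now I would split the product $\tilde X \tilde Y = \tfrac12\{\tilde X,\tilde Y\} + \tfrac12[\tilde X,\tilde Y]$. Because $\tilde X$ and $\tilde Y$ are self-adjoint, $\{\tilde X,\tilde Y\}$ is self-adjoint and $[\tilde X,\tilde Y]$ is anti-self-adjoint (equivalently, $i[\tilde X,\tilde Y]$ is self-adjoint). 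Hence $\langle u|\{\tilde X,\tilde Y\}|u\rangle\in\mathbb{R}$ and $\langle u|[\tilde X,\tilde Y]|u\rangle\in i\mathbb{R}$, so the two halves of the decomposition contribute orthogonally to the modulus squared:
\begin{equation}
|\langle u|\tilde X\tilde Y|u\rangle|^2 \;=\; \tfrac14|\langle u|\{\tilde X,\tilde Y\}|u\rangle|^2 + \tfrac14|\langle u|[\tilde X,\tilde Y]|u\rangle|^2 \;\geq\; \tfrac14|\langle u|[X,Y]|u\rangle|^2,
\end{equation}
where the last step drops the non-negative anti-commutator term and uses $[\tilde X,\tilde Y]=[X,Y]$. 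Rewriting $|\langle u|[X,Y]|u\rangle|^2 = |\langle u|\,i[X,Y]|u\rangle|^2$ gives the stated bound.

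The only real issue is domain bookkeeping if $X$ and $Y$ are unbounded; in that case one should require $u$ to lie in the domain of $XY$, $YX$, $X^2$ and $Y^2$, so that all quantities above are well-defined and the Cauchy--Schwarz step is legitimate. With bounded $X,Y\in\mathcal{B}(\mathcal{H})$, as in the paragraph preceding the statement, no such care is needed and the proof above is complete.
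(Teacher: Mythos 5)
Your proof is correct: centering the observables, applying Cauchy--Schwarz, and splitting $\tilde X\tilde Y$ into its anticommutator (real expectation) and commutator (imaginary expectation) parts is the standard Robertson argument, and the domain caveat you add for unbounded $X,Y$ is the right one. The paper itself gives no proof beyond citing Proposition 5.1 of \cite{parthasarathy2012introduction}, where essentially this same argument (positivity of the covariance form, i.e.\ Cauchy--Schwarz, followed by taking the imaginary part) is carried out, so your write-up is a self-contained version of the same approach.
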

\begin{proof}
See Proposition 5.1 in \cite{parthasarathy2012introduction}. 
\end{proof}

This is an abstract version of the Heisenberg uncertainty principle in quantum mechanics. It conveys the impossibility of measuring both $X$ and $Y$ with total precision in the pure state $|u\rangle$ (see also Appendix A in \cite{lampo2016lindblad}). 
\end{itemize}

A comparison between notions arising in classical probability and the notions in the above postulates for quantum mechanics is summarized in Table \ref{table}. 

We demonstrate the notions introduced so far and the above consequences of noncommutativity in an example. These consequences should be kept in mind when we study open quantum systems in the next section.

\begin{ex}
The following describes a quantum particle moving in a three dimensional space.
The wave function, i.e. the complex-valued
function of the particle's position, $|\psi(x) \rangle \in \mathcal{H} = L^2(\RR^3) := \{ |\psi(x)\rangle : \RR^3 \to \CC : \int_{\RR^3} |\psi(x)|^2 dx < \infty \}$ (with the inner product, $\langle \phi | \psi \rangle = \int_{\RR^3} \overline{\phi(x)} \psi(x) dx $), determines the pure state, $|\psi(x) \rangle \langle \psi(x)|$.  

Two important observables are the position $q$ and momentum $p$, defined as
\begin{equation}
q \psi(x) = x \psi(x), \  \ \ p \psi(x) = -i\hbar \vecc{\nabla}_{x} \psi(x), 
\end{equation}
where $\psi(x)$ belongs to a dense domain\footnote{It is useful to consider wave functions $\psi(x)$ that live in a suitable test function space, i.e. a linear subspace $\mathcal{D} \subset \mathcal{H}$.  For example, $\mathcal{D}$ can be the set of all smooth functions with support in some compact subset $K \subset \RR$. Physically, this choice of $\mathcal{D}$ says that the particle is confined to the region $K$ in space. } of $\mathcal{H}$ and $\hbar \approx 1.05 \times 10^{-34}$ joule-second is the (reduced) Planck constant. They are unbounded self-adjoint operators and are noncommuting, since $[q, p] \psi(x)= i\hbar \psi(x)$, which is the canonical commutation relation (CCR) between $q$ and $p$.  By Nelson's theorem, which states that two observables have a joint probability distribution (in the sense as described in Section 2 of \cite{BreuerBook}) if and only if they commute, it follows that $q$ and $p$ do not admit a joint distribution in the same state. Applying the uncertainty principle, we see that formally $var_u(q) var_u(p) \geq \hbar^2/4$ when the particle is in the normalized state $u$. 

Later we will describe a system of infinitely many identical quantum particles. In that case,  we will see that the CCRs among their observables specify the quantum statistics and the properties of the representation of these CCRs will be exploited to build a theory of quantum stochastic integration.  
\end{ex}

\begin{table}[h] 
\centering
\caption{Notions in Classical and Quantum Probability}
\label{table}
\begin{tabular}{|l|l|l|}
\hline
\multicolumn{1}{|c|}{\textbf{Notions}} & \multicolumn{1}{c|}{\textbf{Classical Probability}}                                        & \multicolumn{1}{c|}{\textbf{Quantum Probability}}                                          \\ \hline
\textbf{State Space}                        & Set of all possible outcomes, $\Omega$                                                   & Complex separable Hilbert space, $\mathcal{H}$                                           \\ \hline
\textbf{Events}                             & $\mathcal{F}$, set of all indicator functions in $\Omega$  &  $\mathcal{P}(\mathcal{H})$, set of all projections in $\mathcal{H}$ \\ \hline
\textbf{Observables}                        & Measurable functions                                                   & Self-adjoint operators in $\mathcal{H}$                                                    \\ \hline
\textbf{States}                             & Probability measure, $\mu$                                                       & Positive operators of unit trace,  $\rho$                            \\ \hline
\textbf{Prob. Space}                 & Measure space $(\Omega, \mathcal{F},\mu)$                                                  & The triple $(\mathcal{H},\mathcal{P}(\mathcal{H}), \rho)$                                  \\ \hline
\end{tabular}
\end{table}

\begin{rmk} {\bf Interplay between classical and quantum probability.}
Quantum probability, in its algebraic formulation, can be seen as a generalization of classical probability as follows. The algebra $\mathcal{B}(\mathcal{H})$ contains many $\sigma$-algebras of mutually commuting projectors. Consider for instance, $\mathcal{H} = L^2(\Omega, \mathcal{F}, P)$, then $L^{\infty}(\Omega, \mathcal{F},P)$ is a commutative (von Neumann) algebra acting on $\mathcal{H} = L^2(\Omega, \mathcal{F},P)$ by multiplication. The projectors (events) in $L^{\infty}(\Omega, \mathcal{F},P)$ are the operators of multiplication by indicator functions of elements of $\mathcal{F}$. The spectral theorem below provides the crucial link between classical and quantum probability.

\begin{thm} There exists a probability space $(\Omega, \mathcal{F},P)$ and an $*$-isomorphic map $\phi$ from a commutative $*$-algebra of operators on $\mathcal{H}$ onto the set of measurable functions on $\Omega$ (i.e. a linear bijection with $\phi(AB) = \phi(A) \phi(B)$ and $\phi(A^*) = \phi(A)^*$). 
\end{thm}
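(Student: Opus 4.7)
The plan is to invoke the Gelfand representation theorem for commutative unital $C^*$-algebras together with the Riesz--Markov--Kakutani theorem. Let $\mathcal{A}$ denote the given commutative $*$-algebra of operators on $\mathcal{H}$; without loss of generality I assume $\mathcal{A}$ is norm-closed (otherwise pass to its norm closure, which is still a commutative $*$-algebra) and contains the identity. The first step is to construct $\Omega$ as the \emph{Gelfand spectrum} of $\mathcal{A}$, namely the set of nonzero multiplicative linear functionals $\chi:\mathcal{A}\to\mathbb{C}$, equipped with the weak-$*$ topology inherited from $\mathcal{A}^{*}$. By the Banach--Alaoglu theorem, $\Omega$ is a compact Hausdorff space.

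Next, I would define the \emph{Gelfand transform} $\phi:\mathcal{A}\to C(\Omega)$ by $\phi(A)(\chi)=\chi(A)$. Verifying that $\phi$ is linear, multiplicative, and respects the involution is routine (using that $\chi(A^{*})=\overline{\chi(A)}$ for every character of a commutative $C^{*}$-algebra). The Gelfand--Naimark theorem then upgrades $\phi$ to an isometric $*$-isomorphism of $\mathcal{A}$ onto $C(\Omega)$, whose elements are in particular Borel measurable functions on $\Omega$. This takes care of the algebraic and topological content.

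To equip $\Omega$ with a probability measure, I would fix a state on $\mathcal{A}$, e.g.\ the vector state $A\mapsto\langle u|A|u\rangle$ for some unit vector $u\in\mathcal{H}$. Transporting this through $\phi$ gives a positive linear functional $\tilde{\rho}$ on $C(\Omega)$ with $\tilde{\rho}(1)=1$. By the Riesz--Markov--Kakutani representation theorem there is a unique regular Borel probability measure $P$ on $\Omega$ such that $\tilde{\rho}(f)=\int_{\Omega}f(\omega)\,P(d\omega)$ for all $f\in C(\Omega)$. Taking $\mathcal{F}$ to be the $P$-completion of the Borel $\sigma$-algebra yields the desired probability space $(\Omega,\mathcal{F},P)$, and $\phi$ factors into the algebra of ($P$-equivalence classes of) measurable functions on $\Omega$, which recovers the dictionary in Table \ref{table}.

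The main obstacle --- and the reason the statement is most naturally read in the von Neumann algebra setting --- is extending the image of $\phi$ beyond $C(\Omega)$ when $\mathcal{A}$ is a full commutative von Neumann algebra or is meant to include unbounded self-adjoint operators such as the position $q$ discussed earlier. In that case the correct target for bounded elements is $L^{\infty}(\Omega,\mathcal{F},P)$, and unbounded self-adjoint operators affiliated with $\mathcal{A}$ correspond to (almost everywhere finite) measurable functions via the Borel functional calculus applied to their projection-valued spectral measure; this identification can be made rigorous by picking a cyclic (or separating) vector, realizing $\mathcal{H}\cong L^{2}(\Omega,P)$, and using the bicommutant theorem to show that the weak closure of $\phi(\mathcal{A})$ inside $\mathcal{B}(L^{2}(\Omega,P))$ is precisely $L^{\infty}(\Omega,P)$ acting by pointwise multiplication. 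Patching the cyclic pieces together (a standard maximality/Zorn argument) completes the passage from the Gelfand picture to the measurable-function picture and establishes the $*$-isomorphism claimed.
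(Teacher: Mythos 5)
Your proof is correct and follows essentially the same route as the paper's source: the paper offers no argument of its own beyond citing Bouten (2008), where the result is obtained exactly as you do, via the Gelfand transform onto $C(\Omega)$ followed by the Riesz--Markov--Kakutani theorem applied to a state to produce $P$, with the von Neumann-algebra/$L^{\infty}$ refinement handling unbounded observables affiliated with the algebra. The only briskly treated point is that the $L^{\infty}$ identification needs a separating (equivalently, faithful normal) state rather than an arbitrary vector state, which you do flag with the cyclic/separating-vector and patching remark, so there is no substantive gap.
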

\begin{proof}
See \cite{Bouten2008}.
\end{proof}

The probability measure $P$ induces a state $\rho$ on the commutative algebra by $\rho(f) = \int_{\Omega} f(w) dP(w)$.   A classical random variable $X$ can be described as a quantum random variable (observable) by the $*$-homomorphism
$J : L^{\infty}(E, \mathcal{E}, P) \to  L^{\infty}(\Omega, \mathcal{F}, P)$, $J(f) = f\circ X$, where $(E, \mathcal{E})$ is  a measurable space. However, two non-commuting self-adjoint operators cannot be represented as multiplication
operators on the same Hilbert space $L^2(\Omega, \mathcal{F}, P)$.
\end{rmk}


\begin{itemize}
\item[(A5)] {\bf Dynamics.} The (reversible) time evolution of a pure state (wave function) in $\mathcal{H}$ is determined by a unitary operator $U: \mathcal{H} \to \mathcal{H}$. By Stone's theorem, if $t \mapsto U_{t}$ is a  strongly continuous, one-parameter unitary group, then there exists a unique linear self-adjoint operator $H$, the {\it Hamiltonian}, such that $U(t) = e^{-itH}$.  In this case, the pure state $|u\rangle$ evolves according to $|u(t)\rangle = U(t) |u\rangle$, $U_0 = I$, $\langle u | u \rangle = 1$, and satisfies the celebrated {\it Schrodinger equation}: 
\begin{equation}
i \hbar \frac{d}{dt} |u(t)\rangle = H |u(t)\rangle, \ \ |u(0)\rangle = |u\rangle, \ \ t \geq 0.
\end{equation}
Physically, the Schrodinger equation is sensible, as the linearity of the dynamics ensures the superposition principle for states is satisfied and the fact that the Schrodinger equation is first order in time guarantees causality of the states. The self-adjointness (in particular, the symmetry) of $H$ is important to conserve the probability at all times, i.e. $\| u(t) \| = 1$ for all $t \geq 0$.

For general states which are represented by density operators, they evolve via the map $\rho \mapsto \rho(t) = U_{t} \rho U_{t}^{*} $. This is the so-called {\it Schrodinger picture}, where the states evolve in time while the observables are fixed. The dual picture, where the states are fixed while the observables evolve in time, is called the {\it Heisenberg picture}. In this picture, an observable evolves according to the map ($^*$-automorphism) $X \mapsto X(t) = \tau_t(X) = U_{t}^{*} X U_{t}$ and satisfies the {\it Heisenberg equation of motion}:
\begin{equation}
\frac{d}{dt} X(t) = \frac{i}{\hbar} [H, X(t)], \ \ X(0) = X.
\end{equation}
The  two pictures are related via $Tr(\rho X(t)) = Tr(\rho(t) X)$ (by the cyclic property of trace). Taking the expectation of $X$ with respect to the state $|u\rangle$, one  obtain the {\it Ehrenfest equation}, a quantum analogue of the classical Hamilton's equation:
\begin{equation}
\frac{d}{dt} \langle X(t) \rangle = \frac{i}{\hbar} \langle [H, X(t)] \rangle, \ \ \langle X(0) \rangle = \langle X \rangle \in \RR.
\end{equation}
\end{itemize}

\begin{ex} Suppose that the single particle in the previous example is a quantum harmonic oscillator and so is described by the Hamiltonian: 
\begin{equation}
H_{ho} = \frac{p^2}{2m} + \frac{1}{2} m \omega^2 q^2,
\end{equation}
where $m$ and $\omega$ denote the mass and frequency of the particle respectively. Then the Heisenberg equation of motions for its position and momentum give
\begin{equation}
\dot{q}(t) = p(t)/m, \  \ \  \dot{p}(t) = -m\omega^2 q(t),
\end{equation}
which can be seen to be quantum analogue of the Newton's second law, $\ddot{q}(t) = -\omega^2 q(t)$.
\end{ex}

\subsection{Composite Systems}
We need to be able to describe a family of independent systems (recall that so far we have focused on a system -- that of a single particle).  Let $\times$ and $\otimes$ denote the Cartesian and tensor product\footnote{For definitions of tensor product, see \cite{parthasarathy2012introduction} (for a coordinate-free approach based on positive definite kernels) and \cite{reed1981functional} (for the usual approach).} respectively in the following. 

In classical probability, if $(\Omega_{1}, \mathcal{F}_1, P_{1}), \dots, (\Omega_{n}, \mathcal{F}_n, P_{n})$ are probability spaces describing $n$ independent statistical systems, then  the product probability space $(\Omega, \mathcal{F}, P)  := (\Omega_{1} \times \dots \Omega_{n}, \mathcal{F}_1 \times \dots \times \mathcal{F}_n, P_{1} \times \dots \times P_{n})$ describes a single system consisting of  the $n$ systems, and  $$P(F_1 \times \dots \times F_n) = P_1(F_1) \cdots P_n(F_n),$$ for any event $F_i \in \mathcal{F}_i$. 
In quantum mechanics,  the Hilbert space of a composite system consisting of independent component systems is the Hilbert space  tensor product of the state spaces associated with the component systems. More precisely, if $(\mathcal{H}_{1}, \rho_{1}), \dots, (\mathcal{H}_{n}, \rho_{n})$ describe $n$ independent quantum systems, then $(\mathcal{H}, \rho) := (\mathcal{H}_{1} \otimes \dots \otimes \mathcal{H}_{n},  \rho_{1} \otimes \dots \otimes  \rho_{n})$ describes a single quantum system consisting of the $n$ systems, where $\rho_{1} \otimes \dots \otimes \rho_{n}$ is a state on $\mathcal{H}_{1} \otimes \dots \otimes \mathcal{H}_{n}$. 

For $i=1,\dots,n$, let $X_i$ be an observable on $\mathcal{H}_i$ and  the spectral projection $P^{X_i}(E_i)$ be the event that the value of the observable $X_i$ lies in the Borel subset $E_i \in \RR$. Then $X = X_1 \otimes \dots \otimes X_n$ is an observable on $\mathcal{H}$ and the probability that the value of the observable $X$ lies in $E = E_1 \times \dots \times E_n$ is given by 
\begin{align} 
\mu_{X_1, \dots, X_n}(E_1 \times \dots \times E_n) &= Tr((\rho_1 \otimes \dots \otimes \rho_n) (P^{X_1}(E_1) \otimes \dots \otimes P^{X_n}(E_n) )) \nonumber \\ 
&= Tr(\rho_1 P^{X_1}(E_1)) \cdots Tr(\rho_n P^{X_n}(E_n)) = \mu_{X_1}(E_1) \cdots \mu_{X_n}(E_n).
\end{align}

Other notions introduced in the Postulate (A4)-(A5) can be extended analogously to composite systems. The above descriptions can be generalized to infinitely many independent systems and in the case when the systems are also identical it is convenient to achieve this task on a Hilbert space endowed with a certain structure: the Fock space, to be introduced next.


\section{Elements of Quantum Stochastic Analysis} \label{sect_qsc}

We provide a minimal review of the basic ideas and results from Hudson-Parthasarathy (H-P) quantum stochastic calculus, which is a  bosonic\footnote{A stochastic calculus can also be constructed in the setting of a fermionic Fock space and in fact such
calculus is related to the one based on the Fock space \citep{hudson1986unification}.} Fock space stochastic calculus based on the creation, conservation and annihilation operators of quantum field theory. The goal of our review is to convince, at least at a formal level, the readers that a quantum version of stochastic calculus can be developed in parallel with the classical calculus.  For details of the calculus, we refer to the monographs \cite{meyer2006quantum} and \cite{parthasarathy2012introduction}. Again, we follow \cite{parthasarathy2012introduction} closely in the following.

Recall that  a classical stochastic process is  a family of random variables (classical observables) on  a probability space $(\Omega, \mathcal{F}, P)$ indexed by $t \in \RR^+ := [0,\infty)$. To see how one could formulate the concept of quantum stochastic process, let us first consider a family of commuting observables, $\{X(t), t \in T\}$ where $T \subset \RR^+$ is a time interval. Since this family can be simultaneously diagonalized, we are allowed to consider observables of the form $\sum_{j=1}^n \xi_j X(t_j)$ for any finite set $\{t_1, \dots, t_n\} \subset T$ and real constants $\xi_j$ ($j=1,\dots,n$) and define the {\it joint characteristic function} of $\vecc{X}:=(X(t_1),\dots,X(t_n))$, i.e.  Fourier transform of the joint probability distribution $\mu_{X(t_1), \dots, X(t_n)}$ in $\RR^n$:
\begin{equation}
\langle e^{i \vecc{\xi} \cdot \vecc{X}} \rangle := \langle u | e^{i \sum_{j=1}^n \xi_j X(t_j)} | u \rangle,
\end{equation}
where $|u\rangle$ is a state and $\vecc{\xi}=(\xi_1,\dots,\xi_n) \in \RR^n$. Then the family $\{\mu_{X(t_1), \dots, X(t_n)} : \{t_1, \dots, t_n\} \subset T, \ n=1,2,\dots \}$ of all finite-dimensional distributions is consistent.  Therefore, it follows from Kolmogorov's theorem that the family $\{X(t), t \in T\}$ defines a stochastic process. Note that the correspondence
$\vecc{\xi} \to e^{i \vecc{\xi} \cdot \vecc{X}}$ is a unitary representation of $\RR^n$.

An important example of classical stochastic process that can be constructed via the above procedure is the Wiener process, with respect to which a stochastic integral can be defined \citep{karatzas2012brownian}. Adapting the above view point to a family of commuting (operator-valued and not necessarily bounded) quantum observables with respect to a class of states in a Hilbert space, one can construct quantum stochastic processes \citep{accardi1982quantum}. A particular quantum analogue of the above unitary representation   (the Weyl representation) will be important in the case when the observables are unbounded. We will focus on the construction of quantum analogue of Wiener process  within the setting of a bosonic Fock space.


\subsection{Bosonic Fock Space}


A Fock space describes states of a quantum field consisting of an indefinite number of identical particles. It is a crucial object in the formalism of second quantization used to study quantum many-body systems. The main idea of second quantization is to specify quantum states by the number of particles occupying the states, rather than labeling each particle with its state, thereby eliminating redundant information concerning identical particles and allowing an efficient description of quantum many-body states. From the perspective of quantum stochastic modeling, it is a natural space\footnote{The theory of Fock space provides a convenient framework to study  not only quantum fields, but also other objects, such as the Carleman linearization techniques in nonlinear dynamical systems \citep{kowalski1991nonlinear,kowalski1994methods} and classical stochastic mechanics \citep{baez2012quantum}.} to support the quantum noise, describing the effective action of the environment on a system of interest.

A system of identical particles is described by either a totally symmetric wave function
(invariant under exchange of any two coordinates) or a totally asymmetric wave function. This gives rise to two distinct types of particles: bosons in the former case and fermions in the latter case. We are only interested in description for bosonic systems.

\begin{defn} The {\it bosonic Fock space}, over the one-particle space $\mathcal{H}$, is defined as the countable direct sum: 
\begin{equation} \Gamma(\mathcal{H}) =  \CC \oplus \mathcal{H} \oplus \mathcal{H}^{\circ 2} \oplus \dots \oplus \mathcal{H}^{\circ n} \oplus \dots, \end{equation} where $\CC$, denoting the one-dimensional space of complex scalars, is called the {\it vacuum subspace} and $\mathcal{H}^{\circ n}$, denoting the symmetric tensor product of $n$ copies of $\mathcal{H}$, is called the {\it $n$-particle subspace}. Any element in an $n$-th particle subspace is called an {\it $n$-particle vector}. For any $n$ elements $|u_1\rangle, |u_2\rangle, \dots, |u_n\rangle$  in $\mathcal{H}$, the vector $\otimes_{j=1}^n |u_{j}\rangle$ is known as the {\it finite particle vector (or Fock vector)}.  The dense linear manifold $\mathcal{F}(\mathcal{H})$ of all finite particle vectors is called the {\it finite particle domain}. 
\end{defn} 

Since the particles constituting the noise space (and in each of the $n$-particle space) are bosons, in order to describe the $n$-particle state (i.e. to belong to the $n$-particle space, $\mathcal{H}^{\circ n}$), a Fock vector has to be symmetrized:   \begin{equation} 
|u_1\rangle \circ |u_2 \rangle \circ \dots \circ |u_n\rangle =  \frac{1}{n!} \sum_{\sigma \in \mathcal{P}_n} |u_{\sigma(1)}\rangle \otimes |u_{\sigma(2)}\rangle \otimes \dots \otimes |u_{\sigma(n)}\rangle, \end{equation}
where $\mathcal{P}_n$ is the set of all permutations, $\sigma$, of the set $\{1,2,\dots,n\}$. The $n$-particle space is  invariant under the action of the permutation group $\mathcal{P}_n$. 


Important elements of the bosonic Fock space, $\Gamma(\mathcal{H})$, are the {\it exponential vectors}: 
\begin{equation} |e(u)\rangle =  1 \oplus |u \rangle \oplus \frac{|u \rangle^{\otimes 2}}{\sqrt{2!}} \oplus \dots \oplus \frac{|u \rangle^{\otimes n}}{\sqrt{n!}} \oplus \dots, \end{equation}
where $|u \rangle \in \mathcal{H}$ and $|u\rangle^{\otimes n}$ denotes the tensor product of $n$ copies of $|u\rangle$.
We call $|\Omega \rangle:= |e(0)\rangle$  the {\it Fock vacuum vector}, which corresponds to the state with no particles. Note that $|\psi(u)\rangle = e^{-\langle u| u \rangle/2} |e(u)\rangle$ is a unit vector. The pure state with the density operator $|\psi(u) \rangle \langle \psi(u)|$ is called the {\it coherent state} associated with $|u\rangle$.   In the special case when $\mathcal{H} = \CC$, the coherent states on $\Gamma(\mathcal{H}) = \CC \oplus \CC \oplus \cdots$  are sequences of the form: \begin{equation}|\psi(\alpha)\rangle = e^{-|\alpha|^2/2} \left(1, \alpha, \frac{\alpha^2}{\sqrt{2!}}, \cdots, \frac{\alpha^{n}}{\sqrt{n!}} \dots \right).\end{equation}

We collect some basic properties of exponential vectors in the following. 

\begin{prop} {\bf Basic properties of exponential vectors.} \label{prop_expvec}
\begin{itemize}
\item[(i)] For all $|u\rangle$, $|v \rangle \in \mathcal{H}$, the exponential vectors satisfy the following  scalar product formula:
\begin{equation} \langle e(u)|e(v) \rangle = e^{\langle u |  v \rangle}, \end{equation} with the same notation for scalar products in appropriate spaces.  
\item[(ii)] The map $|u\rangle \mapsto |e(u) \rangle$ from $\mathcal{H}$ into $\Gamma(\mathcal{H})$ is continuous. 
\item[(iii)] The set $\{ |e(u)\rangle : |u\rangle \in \mathcal{H}\}$ of all exponential vectors is linearly independent and total in $\Gamma(\mathcal{H})$, i.e. the smallest closed subspace containing the set is the whole space $\Gamma(\mathcal{H})$.
\item[(iv)]  Let $S$ be a dense set in $\mathcal{H}$. Then the linear manifold $\mathcal{E}(S)$ generated by $M := \{|e(u)\rangle : |u\rangle \in S\}$ is dense in $\Gamma(\mathcal{H})$. For every map $T: M \to \Gamma(\mathcal{H})$, there exists a unique linear operator $T'$ on $\Gamma(\mathcal{H})$ with domain $\mathcal{E}(S)$ such that $T' |e(u)\rangle = T|e(u)\rangle$ for all $|u\rangle \in S$.  
\item[(v)] Let $\mathcal{H}_i$ be Hilbert spaces, $|u_i\rangle \in \mathcal{H}_i$ $(i=1,\dots,n)$  and $\mathcal{H} = \oplus_{i=1}^n \mathcal{H}_i$. Then there exists a unique unitary isomorphism $U: \Gamma(\mathcal{H}) \to \Gamma(\mathcal{H}_1) \otimes \dots \otimes \Gamma(\mathcal{H}_n)$ satisfying the relation:
\begin{equation}
U |e(u_1 \oplus \dots \oplus u_n)\rangle = |e(u_1)\rangle \otimes \dots \otimes |e(u_n)\rangle,
\end{equation} 
for every $|u_i\rangle \in \mathcal{H}_i$. 
\end{itemize}
\end{prop}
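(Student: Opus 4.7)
The plan is to take the scalar product formula (i) as the foundation; every remaining claim then follows from it combined with standard Hilbert-space arguments. For (i), I would compute directly from the definition of the exponential vector: since the symmetric $n$-particle subspace inherits the tensor inner product and $\langle u^{\otimes n}|v^{\otimes n}\rangle = \langle u|v\rangle^n$, one finds
\begin{equation}
\langle e(u)|e(v)\rangle = \sum_{n=0}^{\infty} \frac{\langle u|v\rangle^n}{n!} = e^{\langle u|v\rangle},
\end{equation}
with absolute convergence term-by-term ensured by Cauchy--Schwarz.

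Claim (ii) then follows at once from the identity
\begin{equation}
\|e(u)-e(v)\|^2 = e^{\|u\|^2} + e^{\|v\|^2} - 2\operatorname{Re}\,e^{\langle u|v\rangle}
\end{equation}
together with continuity of the inner product and of the complex exponential. For linear independence in (iii), I would suppose $\sum_{k=1}^N c_k|e(u_k)\rangle = 0$ with the $u_k$ pairwise distinct; testing against $|e(tw)\rangle$ for a fixed $w$ and varying $t\in\RR$ yields $\sum_k c_k e^{t\langle w|u_k\rangle}=0$, and choosing $w$ so that the numbers $\langle w|u_k\rangle$ are distinct (which is generic, as the $u_k$ are distinct) together with linear independence of real exponentials with distinct exponents forces $c_k=0$. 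For totality, I would proceed by duality: if $\phi\in\Gamma(\mathcal{H})$ is orthogonal to every exponential vector and $\phi_n\in\mathcal{H}^{\circ n}$ is its $n$-particle component, then
\begin{equation}
\langle\phi|e(tu)\rangle = \sum_{n=0}^{\infty} \frac{t^n}{\sqrt{n!}}\langle\phi_n|u^{\otimes n}\rangle
\end{equation}
is an entire function of $t\in\CC$ that vanishes identically, forcing $\langle\phi_n|u^{\otimes n}\rangle=0$ for every $u$ and every $n$; polarization of symmetric multilinear forms then yields $\phi_n=0$ for all $n$, whence $\phi=0$.

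Part (iv) combines (ii) and (iii) directly: for any $u\in\mathcal{H}$, pick $u_n\in S$ with $u_n\to u$; then $|e(u_n)\rangle\to|e(u)\rangle$ by (ii), so the closure of $\mathcal{E}(S)$ contains every exponential vector and is therefore all of $\Gamma(\mathcal{H})$ by totality, while the unique extension $T'$ of $T$ exists precisely because $\{|e(u)\rangle:u\in S\}$ is linearly independent. For (v), I would define $U$ on the exponential vectors by the stated formula and verify, using (i) together with $\langle\oplus_i u_i|\oplus_i v_i\rangle=\sum_i\langle u_i|v_i\rangle$, the identity
\begin{equation}
\langle e(\oplus_i u_i)|e(\oplus_i v_i)\rangle = \prod_i e^{\langle u_i|v_i\rangle} = \prod_i\langle e(u_i)|e(v_i)\rangle = \langle\otimes_i e(u_i)|\otimes_i e(v_i)\rangle.
\end{equation}
This shows $U$ extends by linearity and continuity to an isometry from $\Gamma(\mathcal{H})$ into $\bigotimes_i\Gamma(\mathcal{H}_i)$, and surjectivity is immediate because the image contains the total set $\{\otimes_i|e(u_i)\rangle\}$ of tensor products of exponential vectors.

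The main obstacle I anticipate is the totality half of (iii): it rests on the nontrivial fact that the diagonal monomials $\{u^{\otimes n}:u\in\mathcal{H}\}$ span $\mathcal{H}^{\circ n}$, which is exactly the polarization identity for symmetric multilinear forms. Every other step reduces to routine bookkeeping once (i) and this density are in place.
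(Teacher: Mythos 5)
Your proposal is correct and takes essentially the same route as the paper, which simply delegates all five parts to the standard arguments in Parthasarathy (Propositions 19.4 and 19.6, Corollary 19.5): the direct series computation for (i), the norm identity for (ii), exponential independence plus polarization for (iii), and the density argument for (iv) are exactly those proofs. Your scalar-product-preserving assignment on the total set of exponential vectors in (v) is precisely the content of the cited Proposition 7.2 (which also supplies the well-definedness of the linear extension of $U$ that you pass over quickly), so there is no genuine gap.
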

\begin{proof}
(i) follows from a straightforward computation. (ii) follows from the proof in Corollary 19.5 in \cite{parthasarathy2012introduction}. (iii)-(v) follow from Proposition 19.4, Corollary 19.5 and Proposition 19.6 in \cite{parthasarathy2012introduction} respectively.  One key ingredient in showing (iv)-(v) is Proposition 7.2 in \cite{parthasarathy2012introduction}.
\end{proof}

When  $S = \mathcal{H}$ in (iv), we call $\mathcal{E} = \mathcal{E}(\mathcal{H})$ the {\it exponential domain} in $\Gamma(\mathcal{H})$. By $(iv)$, $\mathcal{E}$ is dense in $\Gamma(\mathcal{H})$. Therefore, any linear operator on $\Gamma(\mathcal{H})$ can be determined by its action on the exponential vectors.

It turns out that bosonic Fock spaces have many interesting connections with Gaussian stochastic processes (see \cite{janson1997gaussian} and  Example 19.8-19.12 in \cite{parthasarathy2012introduction}). We only mention one such connection: that with the Wiener process in classical probability. 

\begin{ex} (From Example 19.9 in \cite{parthasarathy2012introduction}) \label{expmart_iso}
Consider the Hilbert spaces $L^2(\RR^{+})$, $\Gamma(L^2(\RR^{+}))$, where $\RR^{+} = [0,\infty)$ is equipped with Lebesgue measure, and $L^2(\mu)$, where $\mu$ is the probability measure of the standard Wiener process $\{W(t), t \geq 0\}$. For any complex-valued function $f \in L^2(\RR^{+})$, let $\int_0^\infty f dW$ denote the stochastic integral of $f$ with respect to the path $W$ of the Wiener process. Then there exists a unique unitary isomorphism (the Wiener-Segal duality transformation) $U: \Gamma(L^2(\RR^+)) \to L^2(\mu)$ satisfying:
\begin{equation}
[U |e(f)\rangle](W) = \exp\left\{\int_0^\infty f dW - \frac{1}{2} \int_0^\infty f(t)^2 dt \right\} =: e^W(f),
\end{equation}
and $$\langle e(f) | e(g) \rangle = e^{\langle f | g \rangle} = \mathbb{E}_\mu \overline{e^W(f)} e^W(g)$$   for all $f,g \in L^2(\RR^+)$. 

In particular, this implies that $|e(1_{[0,t]} f)\rangle$ can be identified with the exponential martingale, $\exp\{\int_0^t f dW - \frac{1}{2} \int_0^t f^2(s) ds \}$, in classical probability for every $t> 0$. This suggests that the operators of multiplication by an indicator function (with respect to time intervals), together with (v) in Proposition \ref{prop_expvec}, will be crucial  when building a theory of quantum stochastic integration.  
\end{ex}

\subsection{The Weyl Representation and Stochastic Processes in Bosonic Fock Spaces}

An important group in the theory of quantum stochastic calculus is the translation group on  the Hilbert space $\mathcal{H}$. Indeed, any Hilbert space, $\mathcal{H}$, being a vector space, is an additive group, which has a natural translation action on the set of all exponential vectors by $|u\rangle: |e(v)\rangle \mapsto |e(v+u)\rangle$, where $|u\rangle$, $|v\rangle \in \mathcal{H}$. By requiring this action to be scalar product preserving, we define the {\it Weyl operator} (displacement operator):
\begin{equation} \label{defn_weylop}
W(u) |e(v)\rangle = e^{-\frac{1}{2}\|u\|^2 - \langle u | v \rangle} |e(u+v)\rangle,
\end{equation}
for every $|v\rangle \in \mathcal{H}$. Note that $\langle W(u) e(v_1)| W(u) e(v_2)\rangle = \langle e(v_1) | e(v_2) \rangle$ for every $|v_1\rangle, |v_2\rangle \in \mathcal{H}$. By the totality of the set of all exponential vectors, it follows that there exists a unique unitary operator $W(u)$ in $\Gamma(\mathcal{H})$ satisfying the above formula for every $|u\rangle  \in \mathcal{H}$. 

\begin{thm} Let $\mathcal{H}$ be a complex separable Hilbert space. Let $W(u)$ be the Weyl operator defined in \eqref{defn_weylop}. 
The correspondence $|u\rangle \to W(u)$ from $\mathcal{H}$ into the set of unitary operators on $\mathcal{H})$ is strongly continuous and irreducible in $\Gamma(\mathcal{H})$, in the sense that there is no proper subspace in $\Gamma(\mathcal{H})$ that is invariant under all $W(u)$. 

Moreover, for every $|u_1\rangle, |u_2\rangle \in \mathcal{H}$, we have: 
\begin{align}
W(u_1) W(u_2) &= e^{-i Im(\langle u_1 | u_2 \rangle)} W(u_1+u_2), \label{proj_rep} \\ 
W(u_1) W(u_2) &= e^{-2i Im(\langle u_1 | u_2 \rangle)} W(u_2) W(u_1). \label{weyl_cr}
\end{align}
It follows that for every $|u\rangle \in \mathcal{H}$, the map $t \mapsto W(tu)$, $t \in \RR$, is a one-parameter group of unitary operators with the self-adjoint Stone generator $p(u)$, satisfying
\begin{equation}
W(tu) = e^{-it p(u)}, 
\end{equation}
for all $t \in \RR$. The observables $p(u)$ obey the following commutation relation: 
$$[p(u), p(v)] |e(w)\rangle = 2 i Im \langle u, v \rangle |e(w)\rangle $$ 
for all $|u\rangle, |v\rangle, |w\rangle \in \mathcal{H}$. 
\end{thm}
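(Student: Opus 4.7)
The plan is to establish the composition rule \eqref{proj_rep} first by direct computation on exponential vectors; the Weyl CCR \eqref{weyl_cr}, the one-parameter group property, and the commutator relation for the Stone generators all drop out as corollaries. Irreducibility will be handled separately at the end and is the main obstacle.

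For \eqref{proj_rep}, I would apply both sides to an arbitrary exponential vector $|e(v)\rangle$ and unwind the definition \eqref{defn_weylop}. The left side produces the prefactor $\exp(-\tfrac12\|u_2\|^2-\langle u_2|v\rangle-\tfrac12\|u_1\|^2-\langle u_1|u_2+v\rangle)$ multiplying $|e(u_1+u_2+v)\rangle$, while the right side produces $\exp(-i\,\mathrm{Im}\langle u_1|u_2\rangle-\tfrac12\|u_1+u_2\|^2-\langle u_1+u_2|v\rangle)$ times the same vector. Matching prefactors reduces to the polarization identity $\|u_1+u_2\|^2=\|u_1\|^2+\|u_2\|^2+2\,\mathrm{Re}\langle u_1|u_2\rangle$ combined with $\langle u_1|u_2\rangle=\mathrm{Re}\langle u_1|u_2\rangle+i\,\mathrm{Im}\langle u_1|u_2\rangle$. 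Totality of the exponential vectors (Proposition \ref{prop_expvec}(iii)) then promotes the equality to an operator identity on $\Gamma(\mathcal{H})$. The CCR \eqref{weyl_cr} is obtained by applying \eqref{proj_rep} in both orderings and using $\mathrm{Im}\langle u_2|u_1\rangle=-\mathrm{Im}\langle u_1|u_2\rangle$. Setting $u_1=tu$, $u_2=su$ gives $\langle tu|su\rangle=ts\|u\|^2\in\RR$, so $W(tu)W(su)=W((t+s)u)$ and $t\mapsto W(tu)$ is a one-parameter unitary group.

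For the Stone generator, strong continuity of $u\mapsto W(u)$ on the exponential domain is clear from \eqref{defn_weylop} together with continuity of $u\mapsto|e(u)\rangle$ (Proposition \ref{prop_expvec}(ii)); uniform boundedness $\|W(u)\|=1$ extends this to all of $\Gamma(\mathcal{H})$. Stone's theorem furnishes a unique self-adjoint $p(u)$ with $W(tu)=e^{-itp(u)}$. To extract the commutator, substitute these exponentials into \eqref{weyl_cr} with $u_1=tu$, $u_2=sv$ to obtain
\begin{equation*}
e^{-itp(u)}e^{-isp(v)} = e^{-2its\,\mathrm{Im}\langle u|v\rangle}\,e^{-isp(v)}e^{-itp(u)} ,
\end{equation*}
apply both sides to $|e(w)\rangle$, and differentiate once in $s$ and once in $t$ at $s=t=0$. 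Since $|e(w)\rangle = e^{\|w\|^2/2}W(w)|\Omega\rangle$ lies in a jointly analytic family under each group $r\mapsto e^{-irp(\cdot)}$, the differentiation commutes with the operator action and yields $[p(u),p(v)]|e(w)\rangle = 2i\,\mathrm{Im}\langle u|v\rangle\,|e(w)\rangle$.

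Irreducibility is the main obstacle. Let $M\subsetneq\Gamma(\mathcal{H})$ be a nontrivial closed $W$-invariant subspace with orthogonal projection $P$; since $W(u)^{-1}=W(-u)$ is also in the family, $M^\perp$ is invariant and $P$ commutes with every $W(u)$. The pivotal observation is that $W(u)|\Omega\rangle = e^{-\|u\|^2/2}|e(u)\rangle$, so the Weyl orbit of the vacuum exhausts all exponential vectors up to normalization, making $|\Omega\rangle$ cyclic. Decomposing $|\Omega\rangle = P|\Omega\rangle+(I-P)|\Omega\rangle =: \phi+\psi$ and using $[P,W(w)]=0$ kills the cross terms in $\langle\Omega|W(w)\Omega\rangle$ and leaves
\begin{equation*}
e^{-\|w\|^2/2} = \langle\phi|W(w)\phi\rangle + \langle\psi|W(w)\psi\rangle ,
\end{equation*}
writing the Gaussian vacuum characteristic function as a sum of two positive-definite functions of the Weyl form. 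Purity of the vacuum (coherent) state on the CCR algebra -- a standard result; see Chapter II of \citep{parthasarathy2012introduction} -- forces this decomposition to be trivial, so one of $\phi,\psi$ vanishes. If $\phi=0$, then $P|e(u)\rangle = e^{\|u\|^2/2}PW(u)|\Omega\rangle = e^{\|u\|^2/2}W(u)P|\Omega\rangle = 0$ for every $u$, and totality (Proposition \ref{prop_expvec}(iii)) gives $P=0$, i.e.\ $M=\{0\}$; the symmetric case $\psi=0$ gives $M=\Gamma(\mathcal{H})$. Either way contradicts the choice of $M$.
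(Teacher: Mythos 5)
Your algebraic half is fine: the computation of both sides of \eqref{proj_rep} on an exponential vector, totality to promote it to an operator identity, the deduction of \eqref{weyl_cr}, the group law $W(tu)W(su)=W((t+s)u)$, strong continuity via uniform boundedness, Stone's theorem, and the double differentiation of the Weyl relation on $\mathcal{E}$ to get $[p(u),p(v)]|e(w)\rangle=2i\,\mathrm{Im}\langle u|v\rangle|e(w)\rangle$ are all correct and are essentially the standard textbook argument; note that the paper itself gives no proof at all, deferring to Theorem 20.10 of \citep{parthasarathy2012introduction}, so this part is a legitimate expansion of what the paper merely cites.

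The gap is in the irreducibility step, which you yourself identify as the main obstacle. First, purity of the Fock vacuum state on the Weyl (CCR) algebra is, via the GNS construction, \emph{equivalent} to irreducibility of the Fock representation, precisely because -- as you establish -- $|\Omega\rangle$ is cyclic, so the Fock representation \emph{is} the GNS representation of the state $w\mapsto e^{-\|w\|^2/2}$; citing purity as a ``standard result'' (and pointing to the same part of \citep{parthasarathy2012introduction} that contains the irreducibility theorem) is circular unless you supply an independent proof of purity. Second, even granting purity, it does not ``force one of $\phi,\psi$ to vanish'': purity only says that each nonzero normalized component has the vacuum characteristic function, i.e.\ $\langle\phi|W(w)\phi\rangle=\|\phi\|^2e^{-\|w\|^2/2}$ and likewise for $\psi$. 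You still need a further step, e.g.\ since the cross term vanishes, $\langle\phi|W(w)\Omega\rangle=\langle\phi|W(w)\phi\rangle=\|\phi\|^2\langle\Omega|W(w)\Omega\rangle$ for all $w$, so cyclicity of $\Omega$ gives $\phi=\|\phi\|^2\Omega$ and $\psi=\|\psi\|^2\Omega$, and orthogonality of $\phi$ and $\psi$ then forces one of them to be zero, after which your concluding argument works. A cleaner repair avoids purity altogether: if the projection $P$ commutes with every $W(u)$, it commutes with the groups $t\mapsto W(tu)$, hence with the generators $p(u)$ and $q(u)=-p(iu)$, hence with $a(u)=\tfrac12(q(u)+ip(u))$ on the relevant domains; then $a(u)P\Omega=Pa(u)\Omega=0$ for all $u$, and since the only vectors annihilated by all annihilation operators are multiples of the vacuum, $P\Omega\in\{0,\Omega\}$ (using $P^2=P$), and cyclicity of $\Omega$ yields $P\in\{0,I\}$.
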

\begin{proof}
This is a special case of Theorem 20.10 in \cite{parthasarathy2012introduction}.  
\end{proof}

The correspondence $|u\rangle \to W(u)$ is called a {\it projective unitary representation}. The formula \eqref{proj_rep} implies that it is a homomorphism modulo a phase factor of unit modulus. The formula \eqref{weyl_cr} is known as the {\it Weyl commutation relation}. Such representation allows one to obtain a rich class of observables, which are the building blocks of the  calculus, on the Fock space $\Gamma(\mathcal{H})$. From these observables, one can then build quantum analogues of Wiener process. Illustrating this is the focus of this subsection.

\begin{rmk}
A more general group called the Euclidean group, which contains the translation group as a subgroup, would allow one to obtain a richer class of observables, including quantum analogue of L\'evy processes \citep{parthasarathy2012introduction}. This shows the power of the formalism, as it allows realization of processes such as Wiener process and Poisson process on the same space. However, since we are only interested in the stochastic integration theory with respect to  quantum analogue of the Wiener process in this paper, we omit further discussions on the general construction. For details, see \cite{parthasarathy2012introduction}. 
\end{rmk}



We now introduce a family of operators in terms of which not only computations involving the $p(u)$ become simplified but can also be related to operators familiar from quantum field theory. 

We define, for any $|u\rangle \in \mathcal{H}$,
\begin{equation}
q(u) = -p(iu) = p(-iu),  \ \ \ a(u) = \frac{1}{2}(q(u)+i p(u)), \ \ \ a^\dagger(u) = \frac{1}{2}(q(u)-ip(u)). 
\end{equation}

The  operators $a^{\dagger}(u)$ and $a(u)$ defined above are canonical observables on the bosonic Fock space, called the {\it creation operators} and {\it annihilation operators} associated to the vector $|u\rangle \in \mathcal{H}$, respectively. Following \cite{parthasarathy2012introduction}, we will refer to them as the {\it fundamental fields}. Note that $p(u) = i (a^\dagger(u) - a(u))$ and $q(u) = a^\dagger(u)+a(u)$. 

We collect some useful properties, which will be crucial for the development of quantum stochastic calculus, of these operators in the following. 

\begin{prop}
The domain of product of finitely many operators from the family $\{a(u), a^\dagger(u) : |u\rangle \in \mathcal{H}\}$ contains the exponential domain $\mathcal{E}$. Moreover, for any $|u\rangle, |v\rangle \in \mathcal{H}$, $\psi, \psi_1, \psi_2 \in \mathcal{E}$,
\begin{itemize}
\item[(i)] \begin{equation} 
a(u) |e(v)\rangle = \langle u | v \rangle |e(v)\rangle, \ \ \ \ a^\dagger(u) |e(v)\rangle = \sum_{n=1}^\infty \frac{1}{\sqrt{n!}} \sum_{r=0}^{n-1} |v\rangle^{\otimes r} \otimes |u\rangle \otimes |v\rangle^{\otimes (n-r-1)};
\end{equation}
\item[(ii)] the creation and annihilation operators are mutually adjoint, i.e. $\langle a^\dagger(u) \psi_1 | \psi_2\rangle = \langle \psi_1 | a(u) \psi_2 \rangle$; 
\item[(iii)] the restrictions of $a(u)$ and $a^\dagger(u)$ to $\mathcal{E}$ are antilinear and linear in $|u\rangle$ respectively. Moreover, they satisfy the {\it canonical commutation relations (CCRs)}: $[a(u), a(v)] \psi  = [a^{\dagger}(u), a^{\dagger}(v)] \psi = 0$ and $[a(u), a^{\dagger}(v)] \psi = \langle u | v \rangle \psi$.
\item[(iv)] 
\begin{equation} a^{\dagger}(u) |e(v)\rangle = \frac{d}{d\epsilon} |e(v+\epsilon u)\rangle \bigg|_{\epsilon = 0}, \end{equation}
\item[(v)] the linear manifold of all finite particle vectors is contained in the domain of $a(u)$ and $a^\dagger(u)$. Moreover, 
\begin{align} a(u) |\Omega\rangle &= 0, \\ 
a(u) |v\rangle^{\otimes n} &= \sqrt{n} \langle u | v \rangle |v\rangle^{\otimes (n-1)}, \\ 
a^{\dagger}(u)  |v\rangle^{\otimes n} &= \frac{1}{\sqrt{n+1}} \sum_{r = 0}^{n} |v\rangle^{\otimes r } \otimes |u\rangle \otimes |v\rangle^{\otimes (n-r)}.\end{align} 
\end{itemize}
\end{prop}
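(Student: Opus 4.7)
The plan is to reduce everything to the action of the Weyl operator on exponential vectors, since the fundamental fields $p(u),q(u)$ are by construction Stone generators of Weyl one-parameter groups. I would start by computing, for real $t$,
\begin{equation}
W(tu)|e(v)\rangle = e^{-t^2\|u\|^2/2 - t\langle u|v\rangle}\,|e(v+tu)\rangle,
\end{equation}
and differentiating at $t=0$. Termwise differentiation of the power series $|e(v+tu)\rangle = \sum_{n\ge 0}(v+tu)^{\otimes n}/\sqrt{n!}$ is justified because the map $w\mapsto |e(w)\rangle$ is analytic (dominated by the norm estimate one gets from the scalar product formula $\langle e(w)|e(w)\rangle = e^{\|w\|^2}$ of Proposition \ref{prop_expvec}(i)). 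Writing $D_u|e(v)\rangle := \tfrac{d}{d\epsilon}|e(v+\epsilon u)\rangle|_{\epsilon=0}$, this gives $-ip(u)|e(v)\rangle = -\langle u|v\rangle\,|e(v)\rangle + D_u|e(v)\rangle$, i.e.\ $p(u)|e(v)\rangle = -i\langle u|v\rangle|e(v)\rangle + iD_u|e(v)\rangle$. The same computation applied to the vector $iu$ (using antilinearity of the inner product in the first slot) yields $p(iu)|e(v)\rangle = -\langle u|v\rangle|e(v)\rangle - D_u|e(v)\rangle$, hence $q(u)|e(v)\rangle = \langle u|v\rangle|e(v)\rangle + D_u|e(v)\rangle$.

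Substituting into the defining combinations $a(u)=\tfrac12(q(u)+ip(u))$ and $a^\dagger(u)=\tfrac12(q(u)-ip(u))$, the terms collapse to $a(u)|e(v)\rangle = \langle u|v\rangle|e(v)\rangle$ and $a^\dagger(u)|e(v)\rangle = D_u|e(v)\rangle$, which is exactly (iv) and the first part of (i). Expanding $D_u|e(v)\rangle$ termwise in the power series for $|e(v+\epsilon u)\rangle$ and using the Leibniz rule for $(v+\epsilon u)^{\otimes n}$ produces the explicit series in (i) for $a^\dagger(u)$. The fact that both resulting vectors lie in $\Gamma(\mathcal{H})$ (so $\mathcal{E}$ is contained in $\mathrm{Dom}\,a(u)\cap\mathrm{Dom}\,a^\dagger(u)$) is immediate from $\|D_u|e(v)\rangle\|^2 = (\|u\|^2 + |\langle u|v\rangle|^2)e^{\|v\|^2}$, a direct consequence of the scalar product formula; iterating this argument with $D_{u_1}D_{u_2}\cdots D_{u_k}|e(v)\rangle$ (again absolutely convergent) gives the statement about finite products.

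For (ii) I would check $\langle a^\dagger(u)e(v_1)|e(v_2)\rangle = \langle e(v_1)|a(u)e(v_2)\rangle$ on exponential vectors, which totally span $\Gamma(\mathcal{H})$ by Proposition \ref{prop_expvec}(iii). The left side is $\tfrac{d}{d\epsilon}e^{\langle v_1+\epsilon u|v_2\rangle}\big|_{\epsilon=0} = \langle u|v_2\rangle\, e^{\langle v_1|v_2\rangle}$, and the right side equals $\langle u|v_2\rangle\langle e(v_1)|e(v_2)\rangle$ by (i); these match. Linearity/antilinearity in (iii) is read off from (i) and (iv): $\langle u|v\rangle$ is antilinear in $u$, while $D_u$ is linear in $u$ because $(v+\epsilon u)^{\otimes n}$ is linear in $u$ inside the derivative. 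The CCRs are then short direct calculations: $a(u)a^\dagger(v)|e(w)\rangle = a(u)D_v|e(w)\rangle = \tfrac{d}{d\epsilon}[\langle u|w+\epsilon v\rangle|e(w+\epsilon v)\rangle]_{\epsilon=0} = \langle u|v\rangle|e(w)\rangle + \langle u|w\rangle a^\dagger(v)|e(w)\rangle$, whereas $a^\dagger(v)a(u)|e(w)\rangle = \langle u|w\rangle a^\dagger(v)|e(w)\rangle$, giving $[a(u),a^\dagger(v)]=\langle u|v\rangle I$ on $\mathcal{E}$. The identities $[a(u),a(v)]=0$ and $[a^\dagger(u),a^\dagger(v)]=0$ follow respectively from the scalar factor $\langle u|w\rangle\langle v|w\rangle$ being symmetric in $u,v$ and from equality of mixed partials $\partial_\epsilon\partial_\delta|e(w+\epsilon u+\delta v)\rangle|_{0,0}$.

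Finally, for (v) I would Taylor-expand both sides of $a(u)|e(tv)\rangle = t\langle u|v\rangle|e(tv)\rangle$ in the real parameter $t$; matching coefficients of $t^n/\sqrt{n!}$ gives $a(u)v^{\otimes n} = \sqrt{n}\langle u|v\rangle v^{\otimes(n-1)}$, and the same scheme applied to $a^\dagger(u)|e(tv)\rangle = D_u|e(tv)\rangle$ yields the symmetric tensor formula for $a^\dagger(u)v^{\otimes n}$ and in particular $a(u)|\Omega\rangle = 0$. The main technical obstacle throughout is the justification for differentiating infinite series of symmetric tensors term by term and for interchanging the order of directional derivatives $D_u, D_v$; this is handled uniformly by the analyticity of $w\mapsto|e(w)\rangle$, itself a consequence of the explicit bound $\|\,|e(w+\zeta)\rangle - |e(w)\rangle\|^2 = e^{\|w+\zeta\|^2} + e^{\|w\|^2} - 2\mathrm{Re}\,e^{\langle w|w+\zeta\rangle}$ which is smooth in $\zeta$ and vanishes to second order as $\zeta\to 0$.
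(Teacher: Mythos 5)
Your proposal is correct and follows essentially the same route as the paper's (which simply cites Propositions 20.12--20.14 of Parthasarathy and sketches the key idea): replace $u$ by $tu$ in the Weyl operator, differentiate $W(tu)|e(v)\rangle$ at $t=0$ to obtain the action of $p(u)$ on exponential vectors, combine with $q(u)=-p(iu)$ to get $a(u)|e(v)\rangle=\langle u|v\rangle|e(v)\rangle$ and $a^{\dagger}(u)|e(v)\rangle=\frac{d}{d\epsilon}|e(v+\epsilon u)\rangle|_{\epsilon=0}$, and deduce (ii)--(v) from there. Your sign bookkeeping and the norm estimate $\|a^{\dagger}(u)e(v)\|^{2}=\left(\|u\|^{2}+|\langle u|v\rangle|^{2}\right)e^{\|v\|^{2}}$ check out, so there is no gap beyond the standard term-by-term differentiation technicalities you already flag.
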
 
\begin{proof} 
See  Proposition 20.12-20.14 in \cite{parthasarathy2012introduction}. The key idea to obtain the formula in (i), (iii)-(v) is to replace, in the definition of Weyl operator in \eqref{defn_weylop}, $u$ by $tu$, $t \in \RR$, and then differentiating with respect to $t$ at $t=0$, so that one obtains:
\begin{equation}
p(u) |e(v)\rangle = -i \langle u | v \rangle |e(v)\rangle + i \sum_{n=1}^{\infty} \frac{1}{\sqrt{n!}} \sum_{r=0}^{n-1} |v\rangle^{\otimes r} \otimes |u\rangle \otimes |v\rangle^{\otimes (n-r-1)}. 
\end{equation} The formula there then lead to (ii) and the statements about the domain of the operators. 
 \end{proof}

Note that in the special case $|u\rangle=|v\rangle$ in (i), we have $a(u) |\psi(u)\rangle = \langle u | u \rangle |\psi(u)\rangle$, which is an eigenvalue relation similar to the one that defines the coherent state as eigenvector of annihilation operator in quantum optics \citep{glauber1963coherent}. Since vectors of the form $|v\rangle^{\otimes n}$ linearly span the $n$-particle space, (v) shows that $a(u)$ maps the $n$-particle subspace into the $(n-1)$-particle subspace while $a^{\dagger}(u)$ maps the $n$-particle subspace into the $(n+1)$-particle subspace, justifying their names as annihilation and creation operators respectively.

\begin{rmk} {\bf Connection to quantum field theory.}
By working with appropriate basis in $\mathcal{H}$,  we can relate the above creation and annihilation operators (the fundamental fields) to those familiar from quantum field theory. For instance, choose and fix an orthonormal basis $\{|e_k\rangle, k=1,2,\dots\}$ and define $a_k = a(e_k)$, $a^\dagger_k = a^\dagger(e_k)$. Then on the exponential domain $\mathcal{E}$, these operators satisfy the commutation relations $[a_k, a_l] = [a^\dagger_k, a^\dagger_l] = 0$ and $[a_k, a^\dagger_l] = \delta_{kl}$. These are the commutation relations describing the ladder operators for a set of independent harmonic oscillators. 

Now, let us move to a different basis of single particle state, say the position eigenstates $|r\rangle$, corresponding to the unitary transformation $|r\rangle =  \sum_k \langle e_k | r \rangle |e_k \rangle$. Note that  $\langle e_k | r \rangle = \varphi_k^*(r)$, the complex conjugate of the wavefunction $\varphi_k(r)$. The transformation gives a new set of creation operators, $\psi^\dagger(r) = \sum_k \varphi^*_k(r) a^\dagger_k$, and annihilation operators, $\psi(r) = \sum_k \varphi_k(r) a_k$.  They satisfy the commutation relations $[\psi(r_1), \psi^\dagger(r_2)] = \delta(r_1-r_2)$, $[\psi(r_1), \psi(r_2)] = [\psi^\dagger(r_1), \psi^\dagger(r_2)] = 0$ for $r_1, r_2 \in \RR$, which are commutation relations describing operator-valued functions called the {\it quantum fields}.
\end{rmk}

Lastly, we investigate statistical features of  observables arising from Weyl representation and show that these observables can be viewed as quantum analogue of stochastic processes. From definition, we have:
\begin{equation} \label{char_func}
\bigg\langle \Omega \bigg| W\left(\sum_{j=1}^n t_j u_j\right) \bigg| \Omega \bigg\rangle = \exp{\left( -\frac{1}{2} \sum_{i,j} t_i t_j \langle u_i | u_j \rangle \right)}
\end{equation}
for $|u_j\rangle \in \mathcal{H}$, $t_j \in \RR$, $1 \leq j \leq n$. Let $\mathcal{H}_\RR$ be a real subspace of $\mathcal{H}$ such that $\mathcal{H} = \mathcal{H}_\RR \oplus i \mathcal{H}_\RR$. Then $\langle u | v \rangle \in \RR$ for $|u\rangle, |v\rangle \in \mathcal{H}_\RR$ and $\{W(u): |u\rangle \in \mathcal{H}_\RR\}$ is a commutative family of operators (due to \eqref{weyl_cr}). In particular, $\{p(u): |u\rangle \in \mathcal{H}_\RR \}$ is a commutative family of observables, and so from $\eqref{char_func}$, one has 
\begin{equation} 
\bigg\langle \Omega \bigg| \exp{ \left( - i \sum_{j=1}^n t_j p(u_j) \right) } \bigg| \Omega \bigg\rangle = \exp{\left( -\frac{1}{2} \sum_{i,j} t_i t_j \langle u_i | u_j \rangle \right)},
\end{equation}
which is the characteristic function of the $n$-dimensional Gaussian distribution with zero means and covariance matrix $(\langle u_i | u_j \rangle)_{i,j=1,\dots,n}$. Therefore, the above commutative family of observables in the Fock space realizes a zero mean classical Gaussian random field \citep{mandrekar2015stochastic} in the vacuum state $|\Omega\rangle$ (more generally, in the coherent states; see Proposition 21.1 in \cite{parthasarathy2012introduction}). 

\begin{ex} \label{ex_qprocess} (From Example 21.3 in \cite{parthasarathy2012introduction}) 
Let $S \subset \mathcal{H}$ be a real subspace and $P$ be a spectral measure on $\RR^{+}$ for which $S$ is invariant.  Let $|u_{t]}\rangle= P([0, t]) |u\rangle$,  $X_t = p(u_{t]})$ and $R(t) = \langle u | P([0, t]) |u \rangle$, where $|u\rangle \in S$ and $p(u_{t]})$ is the Stone generator of the map $t \mapsto W(u_{t]})$.  Then $\{X_t : t \in \RR\}$ is a family of commuting observables (with common domain $\mathcal{E}$) whose distribution in the vacuum state is a mean zero Gaussian process with independent increments and $cov(X_t, X_s) = R(\min(t,s))$. Note that when $ \langle u| P([0, t])|u \rangle = t$, $\{X_t: t \in \RR\}$ realizes the standard Wiener process in classical probability. 
\end{ex}

\subsection{Elements of Quantum Stochastic Calculus} 

Example \ref{ex_qprocess} suggests that we can turn the fundamental fields on a bosonic Fock space into continuous time quantum stochastic processes\footnote{See also the more abstract definition in \cite{accardi1982quantum}, which defines quantum stochastic process $J_{t}(X)$ as a family of  continuous $*$-homomorphisms on $(\mathcal{H}, \rho)$ indexed by $t$.}  provided that an appropriate time parameter is introduced in the fields. Then one could develop a quantum stochastic differential description for a large class of observable-valued maps $t \mapsto X(t)$ in terms of these quantum stochastic processes, in parallel with classical integration theory. This is the basic idea in the seminal work \cite{hudson1984quantum} and will be elaborated in the following.

To introduce the time parameter in the fundamental fields, we take the one-particle space to be $\mathcal{H} = L^2(\RR^+) \otimes \mathcal{Z} = L^2(\RR^+; \mathcal{Z})$, with its Borel structure and Lebesgue measure. Symmetrically second quantizing this space leads to the bosonic Fock space $\Gamma(L^2(\RR^+) \otimes \mathcal{Z})$.  Here $\mathcal{Z}$ is a complex separable Hilbert space, equipped with a complete orthonormal basis $(|z_{k}\rangle)_{ k \geq 1}$. The space $\mathcal{Z}$ is called the {\it multiplicity space} of the noise.  The space $\mathcal{H}$ is equipped with the scalar product:
\begin{equation}
\langle f | g \rangle = \int_0^\infty \langle f(t) | g(t) \rangle_{\mathcal{Z}} dt,
\end{equation}
and any element in it can be viewed as a norm square integrable function from $\RR^{+}$ into $\mathcal{Z}$. As we will be formulating a differential (in time) description of processes on the Fock space, $\RR^+$ represents the time semi-axis.

Physically, the dimension of $\mathcal{Z}$ is the number of field channels that ones can couple to a system. When $\mathcal{Z} = \CC$ (one-dimensional), the corresponding bosonic Fock space, $\Gamma(L^2(\RR^{+}))$, describes a single field channel \citep{nurdinlinear}. When $\mathcal{Z} = \CC^{d}$ and the $|z_{i}\rangle = (0, \dots , 0, 1, 0, \dots , 0)$ with $1$ in the $i$-th slot, $i = 1, 2,\dots, d$, is fixed
as a canonical orthonormal basis in $\CC^d$, the corresponding Fock space describes $d$ field channels coupled to the system. Since the dimension of $\mathcal{Z}$ can be infinite, it allows considering infinitely many field channels. To take advantage of this generality, we take the quantum noise space to be the bosonic Fock space $\Gamma(\mathcal{H})$ over $\mathcal{H} = L^2(\RR^+) \otimes \mathcal{Z}$ in the following.

To introduce quantum probabilistic analogues of stochastic integrals, one needs   an appropriate notion of time to formulate the notion of filtration and adapted processes. We consider the canonical spectral measure $P(\cdot)$ on $\RR^+$, defined by:  
\begin{equation}
(P(E) f)(t) = 1_{E}(t) f(t), 
\end{equation}
for $f \in \mathcal{H}$, where $1_{E}$ denotes the indicator function of a Borel subset $E \subset \RR^+$. One can interpret $P(\cdot)$ as a time observable in the Hilbert space $\mathcal{H}$ which, as a spectral measure, is continuous, i.e. $P(\{t \}) = 0$ for all $t$. 

Note that the $P(\cdot)$ are orthogonal projections, in terms of which a decomposition of the Hilbert space $\mathcal{H}$ as the direct sum of a closed subspace and its orthogonal complement can be obtained. We define:
$$\mathcal{H}_{t]} := Ran(P([0,t])),  \ \ \mathcal{H}_{[s,t]} := Ran(P([s,t])),  \ \ \mathcal{H}_{[t} := Ran(P([t,\infty))),$$ 
where $Ran$ denotes the range. Then for  $0 < t_1 < t_2 < \dots < t_n < \infty$, we have the decomposition: 
$$ \mathcal{H} = \mathcal{H}_{t_{1}]} \oplus \mathcal{H}_{[t_{1},t_{2}]} \oplus \dots \oplus \mathcal{H}_{[t_{n-1} ,t_{n}]}  \oplus \mathcal{H}_{[t_{n}}.$$

Now  let $\mathcal{H}_S$ be a fixed Hilbert space called the {\it initial Hilbert space} and consider the space $\mathcal{F} = \mathcal{H}_S \otimes \Gamma_{s}(\mathcal{H})$. Physically, one view $\mathcal{H}_S$ as describing a system of interest, $\Gamma(\mathcal{H})$ as describing a noise process (modeling, for instance, a heat bath) and $\mathcal{F}$ as the total space for the time evolution of the system in the presence of quantum noise. 

Denote: 
$$\mathcal{F}_{0]}=\mathcal{H}_S,  \ \ \mathcal{F}_{t]} = \mathcal{H}_S \otimes \Gamma(\mathcal{H}_{t]}), \ \ \mathcal{F}_{[s,t]} = \Gamma(\mathcal{H}_{[s,t]}), \ \  \mathcal{F}_{[t} = \Gamma(\mathcal{H}_{[t}). $$
Then, by (v) of Proposition \ref{prop_expvec},  for $0\leq t_{1} < \dots < t_{n} < \infty$, we have the following continuous tensor product factorization over $\RR^+$: 
$$\mathcal{F} = \mathcal{F}_{t_{1}]} \otimes \mathcal{F}_{[t_{1},t_{2}]} \otimes \dots \otimes \mathcal{F}_{[t_{n-1} ,t_{n}]}  \otimes \mathcal{F}_{[t_{n}}.$$ 

The identification above is based on the factorization of the exponential vectors: 
\begin{equation} |\psi \rangle \otimes |e(u) \rangle =  |\psi\rangle \otimes |e(u_{[0,t_1]}) \rangle \otimes  |e(u_{[t_1,t_2]}) \rangle \otimes \dots \otimes  |e(u_{[t_{n-1},t_n]}) \rangle \otimes |e(u_{[t_n,\infty)}) 
\rangle,\end{equation} where $|\psi\rangle \in \mathcal{H}_{S}$, $u_{A}(\tau) = P(A) u (\tau)$. Note that $\mathcal{F}_{t_1]}$, $\mathcal{F}_{[t_j, t_{j+1}]}$ ($j=1, \dots, n-1$) and $\mathcal{F}_{[t_{n}}$ embed naturally into $\mathcal{F}$ as subspaces by tensoring with the vacuum vectors in appropriate sectors of the total space.

The basic idea of H-P quantum stochastic calculus comes from this continuous tensor product factorization property of bosonic Fock space.  One can obtain a quantum analogue of the filtration by generalizing the viewpoint that filtrations in classical probability can be represented as a commutative algebra. 

\begin{defn}
A {\it filtration} $(\mathcal{B}_{t]})_{t \geq 0}$ in $\mathcal{F}$ is an increasing family of (von Neumann) algebras, where 
\begin{equation} \mathcal{B}_{t]} = \{ X \otimes 1_{[t} : X \in \mathcal{B}(\mathcal{F}_{t]}), 1_{[t} \text{ is the identity operator in } \mathcal{F}_{[t}  \}.  \end{equation}
\end{defn}

Roughly speakiing,  a process $\{X_t, t \geq 0\}$ is adapted to the filtration $(\mathcal{B}_{t]})_{t \geq 0}$, if  $X_{t}$ acts trivially on $\mathcal{F}_{[t}$, i.e. such that $X_t = X_{t} \otimes 1_{[t}$ for all $t$.  From now on, we assume the following for simplicity. All the operators in a bosonic Fock space have domains that include the exponential domain. Let $D_0 \subset \mathcal{H}_S$ be a dense subspace, $\mathcal{M} \subset \mathcal{H}$ be  linear manifold such that $P([s,t]) |u\rangle \in \mathcal{M}$ whenever $|u\rangle \in \mathcal{M}$ for every $0 \leq s < t < \infty$. Also, the linear manifold generated by all vectors of the form $f  e(u) := |f\rangle \otimes |e(u)\rangle$, $f:= |f\rangle \in D_0$, $u:=|u\rangle \in \mathcal{M}$, is contained in the domain of any operators in $\mathcal{F}$.   More precisely: 

\begin{defn}
\begin{itemize}
\item[(i)] A family $X = \{X_t : t \geq 0\}$ of operators in $\mathcal{F}$ is an {\it adapted process} if the map $t \to X_t f e(u)$ is measurable and there exists an operator $X'_t$ in $\mathcal{F}_{t]}$ such that 
\begin{equation}
X(t) fe(u) = (X'_t f e(u_{t]})) \otimes e(u_{[t}), 
\end{equation}
for all $t \geq 0$, $f \in D_0$ and $u \in \mathcal{M}$. 
Such an adapted process is called {\it regular} if the map $t \to X_t f e(u)$ is continuous for every $f \in D_0$, $u \in \mathcal{M}$. 
\item[(ii)] A map $m: t \to m_t$ from $\RR^+$ into $\mathcal{F}$ is a {\it martingale} if $m_t \in \mathcal{F}_{t]}$ for every $t$ and $P([0,s]) m_t = m_s$ for all $s < t$. 
\end{itemize}
\end{defn}

\begin{rmk}
Take $\mathcal{H}_S = \CC$, $\mathcal{N}_1 = \mathcal{B}(\mathcal{F})$, $\mathcal{N}_2 = \mathcal{B}(\mathcal{F}_{t]}) \otimes P_{\mathcal{F}_{[t}}$, where $P_{\mathcal{F}_{[t}}$ denotes the set of projections $P([t,\infty))$ into the space $\mathcal{F}_{[t}$. Then for $X \in \mathcal{N}_1$, there exists a unique operator $X' \in \mathcal{N}_2$ such that for all $u, v \in \mathcal{M}$,   $$\langle  e(u_{t]}), X'   e (v_{t]}) \rangle = \langle  e(u_{t]}) \otimes  e(u_{[t}), X  e(v_{t]}) \otimes e(v_{[t}) \rangle $$  
The map $X \mapsto X' \otimes P([t,\infty))$ can thus be viewed as a (quantum) conditional expectation  $E[\cdot|\mathcal{N}_2]$. Therefore, one can view $m_t$, the martingale in a bosonic Fock space, as a martingale in the more familiar form in classical probabilistic sense, i.e. $E(m_t|\mathcal{F}_{s]}) = m_s$ for all $s \leq t$.
\end{rmk}


An important class of martingales is of the form $|u_{t]}\rangle = P([0,t])|u\rangle$ for $|u\rangle \in \mathcal{M}$ (c.f. Example \ref{ex_qprocess}). We now introduce two families of regular, adapted processes  associated with this class of martingales (see other processes that can be studied in \cite{parthasarathy2012introduction}). These are the processes in a bosonic Fock space with respect to which stochastic integrals will be defined later. 

Any vector $|u\rangle \in \mathcal{H}$ may be regarded as a $\mathcal{Z}$-valued function.  For a fixed basis of $\mathcal{Z}$ (e.g. in the case when $\mathcal{Z}$ is the space  $\CC^d$ with the canonical basis $|z_k\rangle$), we set  $u_k(t)  = \langle z_k | u(t) \rangle_{\mathcal{Z}}$ for $k \geq 1$, where $\langle \cdot | \cdot \rangle_{\mathcal{Z}}$ denotes scalar product on $\mathcal{Z}$.  

\begin{defn} Let $\mathcal{H}_S = \CC$ so that $\mathcal{F} = \Gamma(L^2(\RR^+) \otimes \mathcal{Z})$. The {\it creation and annihilation processes} associated with the martingales $\{1_{[0,t]} \otimes |z_{k}\rangle \}_{k\geq 1}$ are linear operators with their domains equal the exponential domain $\mathcal{E}$ and:
\begin{equation} \label{cr_ann} A^{\dagger}_{k}(t) = a^{\dagger}(1_{[0,t]} \otimes z_{k}), \ \ \ \ \  \ A_{k}(t) = a(1_{[0,t]} \otimes z_{k}),\end{equation} for $k=1,2,\dots$, where $1_{[0,t]}$ denotes indicator function of $[0,t]$ as an element of $L^2(\RR^{+})$. 
\end{defn}

Each $A_{k}$ (respectively, $A_{k}^{\dagger}$) is defined on a distinct copy of the Fock space $\Gamma(L^2(\RR^{+}))$ and therefore, the $A_{k}$'s (respectively, $A_k^{\dagger}$) are commuting. Physically, each of them represents a  single channel of quantum noise input coupled to the system. Note that in the special case $\mathcal{Z} = \CC$, the above construction only gives a single pair of creation and annihilation process and in the case $\mathcal{Z} = \CC^{d}$, we have $d$ pairs of creation and annihilation processes associated with $d$ distinct noise inputs. The actions of the $A_k(t)$ on the exponential vectors are given by the eigenvalue relations: 
\begin{equation} \label{reg1}
A_k(t)|e(u)\rangle = \left(\int_0^t u_k(s) ds \right) |e(u)\rangle, 
\end{equation}
and the $A_k^{\dagger}(t)$ are the  corresponding adjoint processes:
\begin{equation} \label{reg2}
\langle e(v) | A_k^{\dagger}(t) | e(u) \rangle = \left(\int_0^t \overline{v_k(s)} ds \right) \langle e(v) | e(u) \rangle. \end{equation}
The above processes, which are time integrated versions of instantaneous creation and annihilation operators are two of the three kinds of {\it fundamental noise processes} introduced by Hudson and Parthasarathy.
They satisfy an integrated version of the CCR: $[A_k(t), A_l^{\dagger}(s)] = \delta_{kl} \mbox{min}(t,s) $, $[A_k(t), A_l(s)] = [A_k^{\dagger}(t), A_l^{\dagger}(s)] = 0$. 

For each $k$, their `future pointing' infinitesimal time increments, $dA^{\#}_k(t):=A_k^{\#}(t+dt) - A^{\#}_k(t)$, where $\#$ denotes either creation or annihilation processes, with respect to the time interval $[t,t+dt]$, are {\it independent} processes. The independence is due to the fact that time increments with respect to non-overlapping time intervals are commuting since they are {\it adapted} with respect to $\mathcal{F}$, i.e. they act non-trivially on the factor $\mathcal{F}_{[t,t+dt]}$ of the space $\mathcal{F} =\mathcal{F}_{t]}\otimes \mathcal{F}_{[t,t+dt]} \otimes \mathcal{F}_{[t+dt}$ and trivially, as identity operator on the remaining two factors. In other words, for a fixed $k$, 
\begin{align}
dA_{k}^{\#}(t) | e(u) \rangle &= (A_k^{\#}(t+dt) - A^{\#}_k(t)) |e(u) \rangle \\ 
&= e(u_{[0,t]}) \otimes  a^{\#}(1_{[t,t+dt]} \otimes z_k)  e(u_{[t,t+dt]}) \otimes  e(u_{[t+dt,\infty)}), 
\end{align}
where the operators $a^{\#}$ are defined in \eqref{cr_ann}. Therefore, any Hermitian noise processes $M(t)$ that are appropriate combinations of the $A_k^{\#}(t)$ (for instance, the quantum Wiener processes introduced later in \eqref{qwp}) have independent time increments, i.e. if we define the characteristic function of $M$ with respect to the coherent state, $|\psi(u) \rangle$, as $\varphi_{M}(\lambda) := \langle \psi(u)| e^{i\lambda M} |\psi(u)\rangle$, then for any two times $s \leq t$, we see that their joint characteristic function with respect to the coherent states is the product of individual characteristic functions: 
\begin{align}
\varphi_{M(s),M(t)-M(s)}(\lambda_s,\lambda_t) &:= \langle \psi(u)| e^{i\lambda_s M(s) + i \lambda_t (M(t)-M(s))} |\psi(u)\rangle \nonumber \\ 
&= \varphi_{M(s)}(\lambda_s) \varphi_{M(t)-M(s)}(\lambda_t). 
\end{align}
This property is a quantum analog of the notion of processes with independent increments in classical probability. 


\begin{rmk} \label{qft}
In quantum field theory, the operators  $A^{\dagger}_{k}(t)$ and $A_{k}(t)$ are called the smeared field operators and are usually written formally as:
\begin{equation} A_k(t) = \int_0^t b_k(s) ds, \ \ \ \  \ \ A_{k}^{\dagger}(t) = \int_0^t b^{\dagger}_k(s) ds, \end{equation}
where the $b_k(t) = \frac{1}{\sqrt{2 \pi}}\int_{\RR} \hat{b}_k(\omega) e^{-i\omega t} d\omega$ and $b_k^{\dagger}(t)=\frac{1}{\sqrt{2 \pi}}\int_{\RR} \hat{b}^{\dagger}_k(\omega) e^{i\omega t} d\omega$ are the idealized Bose field processes  satisfying the singular CCR: $[b_k(t), b^{\dagger}_l(s)] = \delta_{kl} \delta(t-s)$ \citep{GardinerBook}. Physically, since $b^{\dagger}_{k}(s)$ creates a particle at time $s$ through the $k$th noise channel, $A_{k}^{\dagger}(t)$ creates a particle that survives up to time $t$. These formal expressions for the annihilation and creation processes are simpler to work with than the more regular integrated processes defined in \eqref{reg1}-\eqref{reg2}. As remarked on page 39 of \cite{nurdinlinear}, the more fundamental processes from the underlying
physics point of view are the quantum field  processes, not the rigorously defined, more regular integrated processes.  \end{rmk}

Exploiting the structure of bosonic Fock space and the properties of the fundamental noise processes outlined above, Hudson and Parthasarathy developed and studied {\it quantum stochastic integrals} with respect to these fundamental processes  for a suitable class of adapted integrand processes, in analogy with the constructions in the classical It\^o theory. These integrals are, for instance, of the form:
\begin{align}
M_k(t) &:= \int_0^t F_k(s) dA_k(s) + G_k(s)dA_k^\dagger(s) \label{qsi} \\
&= F_k(t_1) \otimes (A_k(t_2 \wedge t)-A_k(t_1 \wedge t))+ G_k(t_1)\otimes (A_k^\dagger(t_2 \wedge t) - A_k^\dagger(t_1 \wedge  t)),
\end{align}
defined on $\mathcal{F}_{t]}$ for $t > 0$, where the adapted operator-valued processes $F_k(s) := F_k(t_1) 1_{[t_1,t_2)}(s)$ and $G_k(s) := G_k(t_1) 1_{[t_1,t_2)}(s)$ are step functions of $s$, $t_1 \wedge t_2$ denotes minimum of $t_1$ and $t_2$,
and $A_k$, $A_k^\dagger$ are the annihilation and creation processes on $\mathcal{F}$. In parallel with the construction in classical It\^o calculus, the above stochastic integral can be extended to include integrands that belong to a large class of adapted operator-valued processes\footnote{In the case where the integrands are unbounded operators, mathematically rigorous studies of these integrals are technically difficult. See, for instance, \cite{fagnola2006quantum}.}. We will only sketch selected important results of the calculus in the following. For rigorous statements (including existence and uniqueness results etc.), see \cite{parthasarathy2012introduction}.

The most important result of the calculus is the {\it quantum It\^o formula}\footnote{The trick to derive quantum It\^o formula is to study product of quantum stochastic integrals sandwiched between coherent states by applying the properties of the fundamental field processes. All the basic ideas can be found in Section 25 of \cite{ parthasarathy2012introduction} or the seminal paper \cite{hudson1984quantum}.}, which describes how the classical Leibnitz formula for the time-differential of a product of two functions gets corrected when these functions depend explicitly on the fundamental processes. In the vacuum state, the quantum It\^o formula can be summarized by:
\begin{equation} \label{qif}
dA_k(t) dA^{\dagger}_l(t) = \delta_{kl} dt \end{equation} and all other products of differentials that involve $dA_k(t)$, $dA_k^{\dagger}(t)$ and $dt$ vanish. This can be viewed as a chain rule with Wick ordering \citep{streater2000classical} and as a quantum analogue of the classical It\^o formula. Also, when the $F_k(s)$ and $G_k(s)$ in \eqref{qsi} are bounded (so that multiplications of operators are free of domain issue), we have:
$$d(M_i M_j) = (dM_i)M_j +M_i(dM_j) + dM_i dM_j,$$
where the It\^o correction term $dM_i dM_j$ is evaluated according to the rule \eqref{qif}.

In particular, with respect to the initial vacuum state, the field quadratures $W^0_k(t) = A_k(t) +A_k^{\dagger}(t)$ ($k=1,2,\dots$) are mean zero Hermitian Gaussian processes with variance $t$. Therefore, they can be viewed as quantum analogue of classical Wiener processes and their formal time derivatives, $dW^0_k(t)/dt = b_k(t) + b^{\dagger}_k(t)$, are quantum analogues of the classical white noises. If one takes $\mathcal{Z} = \CC^{d}$, then $(W^0_1,W^0_2,\dots,W^0_{d})$ is a collection of commuting processes and thus form a quantum analogue of $d$-dimensional classical Wiener process in the vacuum state. Moreover, one has $dW^0_i(t) dW^0_j(t) = \delta_{ij}dt$, which is the classical It\^o correction formula for Wiener process. These results hold for a more general class of field observables: 

\begin{defn} {\bf Quantum Wiener processes.} Let $k$ be a positive integer. For $\theta_k \in \RR$ (phase angle), we call the following operator-valued processes on the bosonic Fock space $\Gamma(\mathcal{H})$:
\begin{equation} \label{qwp}
W^\theta_k(t) = e^{-i\theta_k } A_k(t) + e^{i\theta_k} A^{\dagger}_k(t).
\end{equation}
{\it quantum Wiener processes}.
\end{defn}
Quantum Wiener processes are quantum analogue of the classical Wiener processes and they satisfy the quantum It\^o formula: 
$$dW_i^\theta(t) dW_j^\theta(t) =\delta_{ij} dt,$$
for every $\theta \in \RR$. Stochastic integrals with respect to the quantum Wiener processes can therefore be viewed as quantum analogues of the It\^o stochastic integral in the classical theory. The classical It\^o theory of stochastic calculus is included in the quantum calculus by using the Wiener-Segal idenfification of the bosonic Fock space with the $L_2$ space of Wiener process.

\begin{rmk} Following Remark \ref{qft}, one can introduce the notion of {\it quantum colored noise} \citep{belavkin1995world,xue2017modelling}. For $k=1,2,\dots$, define \begin{equation} b^{\dagger}_{g,k}(t) := \frac{1}{\sqrt{2 \pi}} \int_{\RR} \hat{b}_k^{\dagger}(\omega) e^{i\omega t} \hat{g}(\omega) d\omega, \ \ \ \ \ \  \ b_{g,k}(t) := \frac{1}{\sqrt{2 \pi}} \int_{\RR} \hat{b}_k(\omega) e^{-i\omega t} \overline{\hat{g}(\omega)} d\omega,\end{equation}
where $\hat{g}(\omega)$ and $\hat{b}_k(\omega)$ denote the Fourier transform of $g(t)$ and $b_k(t)$ respectively. Note that $b^{\dagger}_{g,k}(t)$ is the inverse Fourier transform of $\hat{b}_k^{\dagger}(\omega) \hat{g}(\omega)$ and so by the convolution theorem we have: \begin{equation} b^{\dagger}_{g,k}(t) = \frac{1}{\sqrt{2 \pi}} \int_{\RR} g(t-s) b_k^{\dagger}(s) ds = \frac{1}{\sqrt{2 \pi}} \int_{\RR} g(t-s) dA_k^{\dagger}(s),\end{equation} where the $A_{k}^{\dagger}(s)$ are creation processes. This is reminiscent of the formula for classical colored noise defined via filtering of white noise \citep{lindgren2006lectures}: 
\begin{equation} \int_{\RR} \gamma(t-s)1_{\{s\leq t \}}(s) dB_{s} = \int_{-\infty}^t \gamma(t-s) dB_{s},  \end{equation} where $\gamma(t)$ describes the filter and $B=(B_s)$ is a classical Wiener process. In the limit $g \to 1$ (flat spectrum limit), the $b^{\dagger}_{g,k}(t)$  converge to the fundamental noise process $b_k^{\dagger}(t) = dA_k^{\dagger}/dt$. Similar remarks apply to $b_{g,k}(t)$ and to appropriate linear combinations of  $b_{g,k}(t)$ and  $b^{\dagger}_{g,k}(t)$, which therefore deserve to be called quantum colored noise processes. \end{rmk}






\section{Open Quantum Systems}
In reality, no quantum system is completely isolated from its surrounding. In other words, every quantum system (denoted $S$) is intrinsically in contact with its surroundings (environment or bath), denoted $B$, which usually has infinitely many degrees of freedom. This intrinsic openness of quantum systems is in contrast to classical systems. Similarly to the classical case, a popular approach to study open quantum systems are models of system-reservoir type. 
If $\mathcal{H}_S$ denotes the Hilbert space associated with the system, and $\mathcal{H}_B$ the Hilbert space associated with the bath, then the formal Hamiltonian for the total system, which takes into account the interaction between the system and bath, is 
\begin{equation}
H = H_S \otimes I + I \otimes H_B + H_I
\end{equation}
 on the total space $\mathcal{H}_S \otimes \mathcal{H}_B$. In the above, $H_S$ is the Hamiltonian describing the system, $H_B$ is the Hamiltonian describing the bath, $H_I$ is the Hamiltonian specifying the their interaction, and $I$ denotes identity operator on an understood space. The effective action of the reservoir on the system can often be modeled as a quantum noise. This line of thinking has been fruitful to study many concrete open quantum systems and has attracted increasing interest these days, particularly from researchers working in the field of quantum optics \citep{GardinerBook} and quantum information \citep{nielsen2010quantum,ohya2011mathematical}. Useful formalisms to study open quantum systems are fundamental, given the  unprecedented\footnote{For instance, Haroche and Wineland have developed an experiment to study the quantum mechanics of light trapped between two mirrors. They show that the quantum of light -- the photon -- can be controlled with atoms at an astonishing level of precision. Their work was recognized by a Nobel prize in 2012 \cite{haroche2013nobel}. Their experiments were explained using models and methods of open quantum systems in \cite{rouchon2014models}. } progress in the development of techniques to measure and manipulate  quantum systems while keeping their essential quantum features.

Usually one is only interested in the dynamics of the open system which is, say, accessible for measurement. A variety of methods and formalisms has been developed and employed to study the open system. The standard formalism for the investigation of the dynamics of the open system is the {\it master equation}, which describes
the evolution of the reduced density matrix of the open system, obtained by taking the partial trace over the degrees of freedom of the environment, i.e. the reduced density matrix is of the form $\rho_S(t) = Tr_B(\rho(t))$, where $\rho(t) = e^{-iHt/\hbar} (\rho_S \otimes \rho_B) e^{iHt/\hbar}$ is the density matrix of the total system and $Tr_B$ denotes the partial trace. Note that the initial total density matrix, $\rho(0) = \rho_S \otimes \rho_B$, is in a factorized form and evolves unitarily, but the reduced density matrix $\rho_S(t)$ generally does not evolves unitarily.  
 
Early works in the open quantum systems literature focused on Markovian\footnote{Non-Markovian descriptions, in particular characterization and quantification of non-Markovianity, are equally important and in fact are topics of active research these days \citep{de2017dynamics,bylicka2013non}.}  descriptions of an open quantum system. These descriptions provide approximate but reasonably accurate and tractable models for many open quantum systems. There are two approaches in such description: the axiomatic approach and the constructive approach.  

In the constructive approach, one tries to derive the mathematically correct form of Markovian master equations from first principles. The starting point is a microscopic model of system-reservoir type described above.  One obtains Markovian master equation as an approximation to the exact reduced dynamics of the open system. In the physical theory of open quantum systems, one often derives the so-called Born-Markov master equation for the reduced density operator of the system under uncontrolled approximations \citep{SchlosshauerBook}. This leads to a breakdown of positivity of the density operator describing the state of the particle associated to a violation of the Heisenberg's uncertainty principle. In order to free of such problem, one naive remedy is to add a correction term, justifiable in certain parameter regimes, to bring the master equation into a physically correct Markovian form (the Lindblad form that we will discuss below) \citep{lampo2016lindblad}. On the other hand, in the mathematical theory of open quantum systems, one derive the physically correct Markovian form from a microscopic model by taking well justified limits such as the weak coupling limits \citep{gough2006quantum,Der_notes} (or the stochastic limits \citep{accardi2002quantum}), repeated interaction limits \citep{attal2006repeated}, among others \citep{bouten2015trotter}.

The axiomatic approach focus on deriving the mathematically correct form of Markovian master equations from the theory of completely positive maps \citep{gorini1978properties}. Mathematical properties of the master equations and their solutions are studied. This leads to the theory of quantum dynamical semigroups and their dilations\footnote{It is well known from the theory of one-parameter semigroups that a one-parameter contraction semigroup on a Hilbert space can be expressed as a reduction of a unitary group \citep{nagy2010harmonic}. This unitary group is called the dilation of the semigroup. This fact can be generalized to a quantum dynamical semigroup.}.

We now briefly discuss the axiomatic\footnote{We refer to \cite{Rivas2012} for a nice mathematical introduction.} approach. A key observation is that any physically reasonable dynamical map, $\Lambda$, must map physical states to physical states, including those states of the system considered as part of a larger system. This is captured by the notion of complete positivity, which is stronger than mere positivity. For instance, the transpose map on matrices is positive but not completely positive. For precise definition, see \cite{alicki1987general}. Completely positive maps were studied already in the 50s and the celebrated Stinespring representation theorem leads to a general form of completely positive dynamical map, called the Kraus decomposition (or operator-sum representation): 
\begin{equation}
\Lambda \rho = \sum_k W_k \rho W_k^*,
\end{equation}
where $\rho$ is a state (for instance the reduced density operator $\rho_S(t)$) and the $W_k$ are (Kraus) operators such that $\sum_k W_k^* W_k = I$.

On the other hand, unity preservation of states is important to allow probabilistic interpretation. These lead to the notion of quantum dynamical semigroup (or quantum  Markov semigroup), i.e. a dynamical map $\Lambda_t$ which is completely positive and preserves the unity (trace-preserving) for all times $t \geq 0$. If $\Lambda_t$ is continuous, then we can define its generator $\mathcal{L}^*$ such that $\Lambda_t = e^{t \mathcal{L}^*}$ and write down the master equation for $\rho(t)$.
The celebrated result of Lindblad \citep{lindblad1976generators} (and also of Gorini, Kossakowski and Sudarshan) provides the most general form of such master equations, called the {\it Lindblad master equations (LMEs)} with a bounded\footnote{For open systems with infinite dimensional Hilbert spaces, the generators of the quantum dynamical semigroup are generally unbounded. However, in this case the Lindblad form often makes sense. Then the operators $H_e$ and $L_k$ can be unbounded and the sum over $k$ can be replaced by an integral. To our knowledge, there is only one paper \citep{siemon2017unbounded}  that studies the case of unbounded generator.} generator:
\begin{equation} \label{lme}
\frac{d}{dt} \rho(t) = \mathcal{L}^*(\rho) = -\frac{i}{\hbar}[H_e, \rho]  + \sum_{k}  L_{k} \rho L_{k}^{*} - \frac{1}{2} \sum_{k } \{ L_{k}^{*} L_{k}, \rho \},
\end{equation}
where $H_e = H_e^*$  and the $L_k$ ({\it Lindblad operators}) are bounded  operators. The right hand side of the LME above consists of three contributions. The term $-\frac{i}{\hbar}[H_e, \rho]$ gives the unitary contribution, the term $\sum_{k}  L_{k} \rho L_{k}^{*}$ can be interpreted as quantum jumps and the term $\frac{1}{2} \sum_{k } \{ L_{k}^{*} L_{k}, \rho \}$ represents dissipation. It is the dissipation term that make the solution of LME non-unitary. The choice of bounded operators $H_e$ and the $L_k$ is not unique and the sum over $k$ can be replaced by an integral. 

The LME can be viewed as quantum analogue of the Fokker-Planck equation for transition probability density.  The result of Lindblad\footnote{It is remarkable that the Lindblad form can be derived in analogy with the classical Markovian conditions on the generator $Q$ of the stochastic matrix $e^{tQ}$ for classical Markov process, at least in finite dimensional systems. This was done by Kossakowski, who arrived at a set of equivalent conditions in the quantum case \citep{Rivas2012}. } et. al. above says that the semigroup $\Lambda_t = e^{\mathcal{L}^*t}$, where $\mathcal{L}^*$ is given in \eqref{lme}, is a quantum dynamical semigroup. Quantum dynamical semigroups are  generalization of classical Markov semigroups  to the quantum setting. 

Hudson-Parthasarathy (H-P) theory of stochastic integration produces a dilation of quantum dynamical semigroups via their quantum stochastic differential equations (QSDEs).  In contrast to closed quantum system's unitary evolution, interaction with an environment leads to randomness in the unitary evolution of an open quantum system. Using the quantum It\^o formula,  the general form of a unitary, reversible, Markovian evolution for a system interacting with an environment described by the fundamental noise processes can be deduced. The unitary evolution operator, $V(t)$, of the whole system, in the interaction picture with respect to the free field dynamics, is found to satisfy an It\^o SDE of the following form:
\begin{align}
dV(t) &= \left[ \left( -\frac{i}{\hbar} H_e -  \frac{1}{2} \sum_{k} L^{\dagger}_k L_k \right) dt + \sum_{k} \left( L_k^{\dagger} dA_k(t) - L_k dA_k^{\dagger}(t) \right) \right] V(t), \\
V(0) &= I, 
\end{align}
associated to the system operators $(H_e, \{L_k\})$, where $H_e = H_e^{\dagger}$ is an effective Hamiltonian and the $L_k$ are Lindblad coupling operators. It can be viewed as a noisy Schrodinger equation\footnote{One can also derive SDEs for the wave function; see \cite{parthasarathy2017quantum,barchielli2009quantum}.}. The choice of the operators $(H_e, \{L_k\})$ depends on physical systems on hand.

The evolution of a noisy system observable, $X$, initially defined on $\mathcal{H}_{S}$, can also be obtained. By applying the quantum It\^o formula, one can deduce that its evolution, $j_t(X) = V(t)^{\dagger} (X \otimes I) V(t)$ (Evans-Hudson flow), on $\mathcal{F}$ is described by the following Heisenberg-Langevin equation\footnote{This is often formulated in the so-called SLH framework in the quantum control and modeling literature \citep{xue2017modelling}.} :
\begin{align}
dj_t(X) &= j_t(\mathcal{L}(X))dt + \sum_k \bigg( j_t([X,L_k])dA_k^{\dagger}(t) + j_t([L_k^{\dagger},X])dA_k(t) \bigg), \\ 
j_{0}(X) &= X \otimes I, \end{align}
where $\mathcal{L}$ is the Lindblad generator:
\begin{equation}
\mathcal{L}(X) = \frac{i}{\hbar}[H_e,X] + \frac{1}{2} \sum_k ([L_k^{\dagger},X]L_k + L_k^{\dagger} [X,L_k] ). \end{equation}

One can also obtain the evolution of field observables in  this way and study the relation between input and output field processes \citep{GardinerBook}. We call such equation for an observable a {\it quantum stochastic differential equation} (QSDE) and its solution is a {\it quantum stochastic process}, which is a noncommutative analogue of classical stochastic process. 

To obtain the Lindblad master equation (LME) for reduced system density operator, $\rho_S(t)$, we first take the Fock vacuum conditional expectation of $j_t(X)$ to obtain the evolution of the reduced system observable, $T_t(X)$, defined via \begin{equation} \langle \psi| T_t(X) | \phi \rangle = \langle \psi \otimes e(u) | j_t(X) | \phi \otimes e(v) \rangle, \end{equation} so that $dT_{t}(X) = T_{t}(\mathcal{L}(X)) dt$,  then the Lindblad master equation:
\begin{equation} d\rho_S(t) = \mathcal{L}^{*}(\rho_{S}(t)) dt\end{equation} is obtained by duality. In this way, one sees that the Evans-Hudson flow is a dilation of the quantum dynamical semigroup.
The equation for $T_t(X)$ above can be seen as quantum analogue of the backward Kolmogorov equation. Indeed, for simple instances, one sees that the restriction of $\mathcal{L}$ to a commutative algebra coincides with infinitesimal generator of classical Markov processes (see Proposition 3.2 in \cite{fagnola1999view}).

The next example illustrate how one can apply the above formalism to study models in quantum optics. For more examples, see the text \cite{nurdinlinear} or the recent review paper \cite{combes2017slh}.

\begin{ex}{\bf Two-level Atom Interacting with a Radiation Field \citep{haroche2006exploring}.}
Consider the small system to be a two-level atom, which is described by a Hilbert space $\mathcal{H}_S = \CC^{2}$ and the Hamiltonian $H_e = \hbar \Omega \sigma_{+} \sigma_{-}$, where $\sigma_{+}$ and $\sigma_{-}$ are the raising and lowering operator respectively.  It interacts with a radiation field in equilibrium at a temperature $T$. The evolution of the atom can be described effectively by two Lindblad operators $L_{1} = \sqrt{\gamma(N+1)} \sigma_{-}$ and $L_{2} = \sqrt{\gamma N} \sigma_{+}$, which describe the energy exchanges between the atom and the field. The first Lindblad operator describes the processes of spontaneous and stimulated emission, where the atom loses energy into the field, while the second one describes the absorption, where the atom gains energy from the field. Here $\gamma$ is the rate at which the atom loses or gains energy when the radiation field is at the temperature $T$ and $N$ is the mean number of photons in the radiation field at the resonant frequency $\Omega$. Using the above physical choice of $H_e$ and $L_k$ ($k=1,2$), one can write down an equation to describe evolution of dynamical variables of interest. 
\end{ex}

\begin{ex} {\bf A Lindblad model of damped quantum harmonic oscillator.} Let $X$ and $P$ denotes the position and momentum operator on $\mathcal{H}_S$. Let $\gamma > 0$ be constant and $T>0$ be the temperature. We take $$H_{e} =  \frac{P^{2}}{2M} + \frac{1}{2} kX^{2} + \frac{ \gamma}{M} \{X, P\}$$  and a single Lindblad operator $$L = \frac{1}{\hbar} \sqrt{4 k_{B} T \gamma} X + \frac{i}{M} \sqrt{\frac{\gamma}{4 k_{B} T}} P.$$ 
Then we can write down the QSDE for $X_{t} = j_t(X)$ and $P_{t} = j_t(P)$, where $I$ is identity operator on the boson Fock space $\mathcal{F}$:
\begin{align}
dX_{t} &= \frac{P_{t}}{M} dt - \frac{\hbar}{M}  \sqrt{\frac{\gamma}{4 k_{B} T}} dB^{(1)}_{t}, \\
dP_{t} &= -kX_{t} dt - \frac{2 \gamma}{M} P_{t} dt + \sqrt{4 k_{B} T \gamma} dB^{(2)}_{t},
\end{align}
where $$B^{(1)}_{t} = A^{\dagger}(t) + A(t), \ \ B^{(2)}_{t} = i(A^{\dagger}(t) - A(t))$$ are (noncommuting) quantum Wiener processes (recall $(A(t), A^\dagger(t))$ are the fundamental operator processes of Hudson-Parthasarathy). Therefore, our choice of $(H_e, L)$ gives quantum analogue of the classical Langevin equation, modulo the appearance of a quantum noise term in the equation for $X_t$.   
\end{ex}


\subsection{A Hamiltonian Model for Open Quantum Systems} 
Perhaps a straightforward model for open quantum system is a quantized version of the open system considered in the previous section, in which case the Hamiltonian \eqref{Hamiltonian_cos} becomes an operator. We will see that indeed this is the case by studying the descriptions for a system, its environment and their interaction in the quantum mechanical setting. Basic references on these are \cite{GardinerBook,BreuerBook,SchlosshauerBook} (for physics) and \cite{attal2006open} (for mathematics). 

Before we describe our model, we provide some physical motivations. An important class of open quantum systems is the
{\it quantum Brownian motion (QBM)} \citep{BreuerBook,GardinerBook,Rivas2012}. In the standard form, the model for QBM consists of a particle moving in one spatial dimension and interacting linearly with an environment in thermal equilibrium. One candidate model for QBM is the Caldeira-Leggett model \citep{Caldeira1983a}, a prototype
of microscopic Hamiltonian model where the environment is modeled by a collection of non-interacting harmonic oscillators. Such model has been used widely to study decoherence \citep{SchlosshauerBook} (a process in which quantum coherence is lost and the quantum system is brought into
a classical state)  and quantum dissipation phenomena \citep{Weiss}. A detailed study of QBM, in particular the memory effects and modeling of the environment by quantum noises, is important to understand, for instance, how one could exploit the interaction with the environment
to design efficient quantum thermal machines \citep{goychuk2016molecular}  as well as to create entanglement and superpositions of quantum states \citep{Plenio2002,Kraus2008}. 

The original Caldeira-Leggett model is not realistic from experimental point of view, as it does not take into account the spatial
inhomogeneity of the environment. Spatial inhomogeneity occurs, for instance, in the setup of a quantum impurity particle interacting with Bose-Einstein condensates (BECs) \citep{Massignan2015}, where the inhomogeneity is due to a harmonic potential trapping the particle. One would like to have a generalized model that takes into account such inhomogeneity and studies in detail its quantum dynamics.

We consider an open system where the quantum Brownian particle is coupled to an equilibrium heat bath. The particle interacts with the heat bath via a coupling, which is a function that can be nonlinear in the system's position, in which case the particle is subject to inhomogeneous damping and diffusion \citep{Weiss,Massignan2015,lampo2016lindblad}.  The model can be viewed as a field version of the generalized Caldeira-Leggett model studied in \cite{Massignan2015,lampo2016lindblad}, a generalization of the  spinless thermal Pauli-Fierz Hamiltonian with dipole type interactions \citep{pauli1938theorie, derezinski1999asymptotic,derezinski2001spectral}, or a quantum analog of the classical Hamiltonian field model. It is a fundamental model which not only allows simple analytic treatments and provides physical insights, but also realistically models many open qantum systems --- for instance, an atom in an electromagnetic field. 

As the heat bath is an infinitely extended quantum system made up of identitical particles on a bosonic Fock space, the formalism of second quantization is convenient for its description. The idea is that given an operator $J$ from a Hilbert space $\mathcal{H}$ to another Hilbert space $\mathcal{K}$, we can extend it naturally to an operator $\Gamma(J)$ from the boson Fock space $\Gamma(\mathcal{H})$ to the boson Fock space $\Gamma(\mathcal{K})$.  More precisely, if $J$ is the given operator on $\mathcal{H}$, so is $J^{\otimes n}$ on the $n$-particle space $\mathcal{H}^{\circ n}$ for every $n$. Therefore, the operator $\Gamma(J)$, called the {\it second quantization} of $J$, defined by
\begin{equation}
\Gamma(J)(u_1 \circ \cdots \circ u_n) = Ju_1 \circ \cdots \circ Ju_n,
\end{equation}
for $n \in \NN$, 
or formally,
\begin{equation} 
\Gamma(J) = I \oplus J \oplus J^{\otimes 2} \oplus \dots \oplus J^{\otimes n} \oplus \dots
\end{equation}
on $\Gamma(\mathcal{H})$ is an operator satisfying $\Gamma(J) |e(u)\rangle = |e(Ju)\rangle$ for every $|u\rangle \in \mathcal{H}$. The identity-preserving correspondence $J \mapsto \Gamma(J)$ is called the {\it second quantization map} and satisfies:
\begin{equation}
\Gamma(J^*) = \Gamma(J)^*,  \  \ \  \Gamma(J_1 J_2) = \Gamma(J_1) 
\end{equation}

Therefore, $\Gamma(J)$ is a self-adjoint, positive, projection or unitary operator whenever $J$ is. In particular, if if $(U_t: t \in \RR)$ is a strongly continuous one-parameter group of unitary operators, then so is $(\Gamma(U_t): t \in \RR)$. In this case, if $U_t = e^{-itH}$ for some self-adjoint operator $H$ on $\mathcal{H}$, then $\Gamma(U_t) = e^{-itH'}$ is a strongly continuous unitary group on $\Gamma(\mathcal{H})$ generated by a self-adjoint operator $H'$. We denote its generator $H'$ as $d\Gamma(H)$ and call it the {\it differential second quantization} of $H$, whose action on the $n$-particles subspace is given by:
\begin{equation}
d\Gamma(H)(u_1 \circ \cdots \circ u_n) = \sum_{k=1}^n u_1 \circ \cdots \circ H u_k \circ \cdots \circ u_n.
\end{equation}
In the special case when $H = I$ (identity operator), $d\Gamma(I)$ is called the number operator. Note that $\Gamma(e^{-itH}) = e^{-it d\Gamma(H)}$.

We describe the particle, the heat bath and their interaction in the model more precisely in the following. The Brownian particle is a quantum mechanical system, denoted $\mathcal{S}$, with energy operator $H_{S}$ on the  Hilbert space $\mathcal{H}_{S} := L^2(\RR)$. It  is subjected to a confining, smooth potential $U(X)$.  The infinite heat bath\footnote{For a rigorous introduction to the heat bath (ideal quantum gas), we refer to the lecture notes \cite{merkli2006ideal}. We will not pursue the rigorous approach here.}, denoted $\mathcal{B}$, is a field of mass-less bosons at a positive temperature. It is described by the triple $(\mathcal{H}_{B}, \rho_{\beta}, H_{B} )$, where $\mathcal{H}_{B} := \Gamma(L^2(\RR^+))$, $\RR^+ = [0,\infty)$, is the bosonic Fock space over $L^2(\RR^+)$ (momentum space), $H_{B}$ is the Hamiltonian of the heat bath defined on $\mathcal{H}_{B}$ and $\rho_{\beta} = e^{-\beta H_B}/Tr(e^{-\beta H_B})$ is the Gibbs thermal state at an inverse temperature $\beta = 1/(k_{B}T)$. We take $H_{B} = d\Gamma(H_B^1)$, the differential second quantization of the energy operator $H_{B}^{1}$ which acts in the one-particle frequency space $L^2(\RR^+)$ as: \begin{equation}(H_{B}^{1} \phi)(w) = \epsilon(w) \phi(w),\end{equation} 
where  $\epsilon(w)$ is the energy of a boson with frequency $w \in \RR^{+}$. The function $\epsilon(w)$ is the dispersion relation for the bath, which in our case, is a linear one, i.e. $\epsilon(w) = \hbar w$.  The equilibrium frequency distribution of bosons at an inverse temperature $\beta$ is given by the Planck's law:  
\begin{equation}\nu_\beta(w) = \frac{1}{\exp{\left( \beta \epsilon(w) \right)}-1}.\end{equation}

The full dynamics of the model is described by the Hamiltonian: 
\begin{equation}
H = H_{S} \otimes I + I \otimes H_{B} + H_{I} + H_{ren} \otimes I, \end{equation} where $H_S$ and $H_B$ are Hamiltonians for the particle and the heat bath respectively, given by
\begin{equation}
H_{S} = \frac{P^2}{2m} + U(X), \ \ \ H_{B} = \int_{\RR^{+}} \hbar \omega   b^{\dagger}(\omega)b(\omega) d\omega,\end{equation}
$H_{I}$ is the interaction Hamiltonian given by
\begin{equation}
H_{I} = - f(X) \otimes \int_{\RR^{+}} [c(\omega) b^{\dagger}(\omega)+\overline{c(\omega)}b(\omega) ] d\omega, \end{equation} 
and $H_{ren}$ is the renormalization Hamiltonian given by 
\begin{equation}
H_{ren} = \left( \int_{\RR^{+}} \frac{|c(\omega)|^2}{\hbar \omega} d\omega \right) f(X)^2. \end{equation}
Here $X$ and $P$ are the particle's position and momentum operators,  $m$ is the mass of the particle, $U(X)$ is a smooth confining potential, $b(\omega) $ and $b^{\dagger}(\omega)$ are the bosonic annihilation and creation operator of the boson of frequency $\omega$ respectively on $L^2(\RR^+)$ and they satisfy the usual canonical commutation relations (CCR): $[b(\omega),b^{\dagger}(\omega')] = \delta(\omega-\omega'), \ \ [b(\omega),b(\omega')]=[b^{\dagger}(\omega), b^{\dagger}(\omega')] = 0.$  We assume that the operator-valued function $f(X)$ is positive and can be expanded in a power series, and $c(\omega)$ is a complex-valued coupling function (form factor) that specifies the strength of the interaction with each frequency of the bath.  It determines the spectral density of the bath and therefore the model for damping and diffusion of the particle. The heat bath is initially in the Gibbs thermal state, $\rho_{\beta} = e^{-\beta H_B}/Tr(e^{-\beta H_B})$,  at an inverse temperature $\beta = 1/(k_{B}T)$. We will refer to the model specified by the above Hamiltonian as the {\it QBM model}. 

The renormalization potential $H_{ren}$ is needed to ensure that the bare potential acting on the particle is $U(X)$ and that the Hamiltonian can be written in a positively defined form: $H=H_{S} \otimes I + H_{B-I}$, where $H_{B-I}$ is given by
\begin{equation} \label{trans_inv_ch6}
H_{B-I} = \int_{\RR^{+}} \hbar \omega \left(b(\omega)-\frac{c(\omega)}{\hbar \omega} f(X) \right)^{\dagger} \left(b(\omega)-\frac{c(\omega)}{\hbar \omega} f(X) \right) d\omega. \end{equation}

Lastly, we discuss the Gibbs thermal state $\rho_\beta$, in particular the derivation of quantum fluctuation-dissipation relation assuming that it is the initial state. Assume in the following that $\rho_\beta$ is of trace class. In the case of thermodynamic limit (i.e. when the limit to an infinitely extended system with infinite volume and infinitely many degrees of freedom is already passed to), the $\rho_\beta$ has infinite trace but the results derived below can still be made sense of \citep{merkli2006ideal}. 

Let $A$ and $B$ be two observables on $\mathcal{H}_B$. Denote $\tau_t(A) = A(t)= e^{i  H_B t/\hbar} A e^{-i H_B t/\hbar}$ and similarly for $\tau_t(B)$. Then, 
\begin{equation}
\langle A \tau_t(B) \rangle_\beta = \frac{Tr(A e^{i  H_B t/\hbar} B e^{-(\hbar \beta + i t)  H_B/\hbar})}{Tr(e^{-\beta H_B})} = \frac{Tr(B e^{-(\hbar \beta + i t)  H_B/\hbar}A e^{i  H_B t/\hbar}  )}{Tr(e^{-\beta H_B})},
\end{equation}
where we have used cyclicity of trace in the last line above. Taking the boundary value at $t = i \hbar \beta$, we have $\langle A \tau_t(B) \rangle_\beta \large|_{t = i \hbar \beta} = \langle B A \rangle_\beta$. This is the {\it Kubo-Martin-Swinger (KMS) condition}, which completely
characterizes the expectation $\langle \cdot \rangle_\beta$ and so gives an alternate definition of equilibrium states.  In particular, setting $A = B  = q$ (for instance, the position observable) and assuming $\langle q \rangle_\beta = 0$, 
\begin{equation}
C^{-}(t) := \langle q q(t) \rangle_{\beta} = \langle q(t) q(i\hbar \beta) \rangle_\beta =  \langle q(t-i\hbar \beta) q \rangle_\beta =: C^{+}(t-i \hbar \beta),
\end{equation}
where we have used the time translation invariance of correlation function in the last line above. Taking the Fourier transform gives 
\begin{equation} \label{ft_r}
\tilde{C}^{-}(\omega) = \tilde{C}^{+}(\omega) e^{-\hbar \omega \beta},
\end{equation}
where $\tilde{F}$ denotes Fourier transform of $F$. 

We now derive quantum fluctuation-dissipation relation of Callen and Welton \citep{callen1951irreversibility}. Set $C^{\pm}(t) = S(t) + i A(t)$, where $S(t) = \langle \{q(t),q(0)\}/2 \rangle_\beta $ is the symmetric correlation function and $A(t) = -i \langle [q(t),q(0)]/2 \rangle_\beta$ is the anti-symmetric correlation function. Similarly, set $\tilde{C}^{\pm}(\omega) = \tilde{S}(\omega) + i \tilde{A}(\omega)$ for its Fourier transform.  

Next we recall some notions from linear response theory. Define the response function (or generalized susceptibility) $\chi(t) = -2 \theta(t) A(t)/\hbar$, where $\theta(t)$ is the step function specifying causality, and define the dynamical susceptibility, $\chi''(\omega)$,  as the imaginary part of $\tilde{\chi}(\omega)$. Then $\chi''(\omega) = i \tilde{A}(\omega)/\hbar = (\tilde{C}^+(\omega) - \tilde{C}^-(\omega))/(2\hbar)$. Then using \eqref{ft_r}, we obtain the result of Callen-Welton:
\begin{equation}
\tilde{\chi}''(\omega) = \frac{1}{2\hbar} (1-e^{-\hbar \omega \beta}) \tilde{C}^+(\omega).
\end{equation}
This result implies that the symmetric correlation function the observable is related (in the Fourier domain) to the anti-symmetric correlation function as:
\begin{equation}
\tilde{S}(\omega) = i \coth(\hbar \omega \beta/2) \tilde{A}(\omega) = \hbar \coth(\hbar \omega \beta/2) \tilde{\chi}''(\omega) .
\end{equation}

\subsection{Heisenberg-Langevin Equations} \label{approach}

In this section, following \cite{lim2018small} we derive the Heisenberg equations of motion for the QBM model and  study  the stochastic force term appearing in the equation. This will pave the way to model the action of the heat bath on the particle by appropriate quantum colored noises introduced in the next sections. Our final goal is the construction of dissipative non-Markovian Heisenberg-Langevin equations driven by appropriate thermal noises, which are built from H-P fundamental noise processes. From now on, $I$ denotes identity operator on an understood space and $1_{A}$ denotes  indicator function of the set $A$. 

Recall that the particle's position evolves according to $\tau_t(X \otimes I) =: X(t)$ and momentum evolves according to $\tau_t(P \otimes I) =: P(t)$, where $\tau_t(\vecc{O}) = e^{i H t/\hbar} \vecc{O} e^{-i H t/\hbar}$ for an observable $\vecc{O}$ of the total system.  
Define the particle's velocity, $V(t) = \frac{P(t)}{m}$ and note that $f'(X) = -i[f(X),P]/\hbar$.   Let
\begin{equation}b(\omega) = \sqrt{\frac{\omega}{2\hbar}}\left(x(\omega)+\frac{i}{\omega}p(\omega) \right), \ \ \ \  b^{\dagger}(\omega) = \sqrt{\frac{\omega}{2\hbar}}\left(x(\omega)-\frac{i}{\omega}p(\omega) \right),  \end{equation}
\begin{equation}
[x(\omega),p(\omega')] = i\hbar \delta(\omega-\omega')I,\end{equation} where we have normalized the masses of all bath oscillators. 

The Heisenberg equation of motion gives \begin{align} \dot{X}(t) &= \frac{i}{\hbar} [H, X(t)] = \frac{P(t)}{m}, \\ 
\dot{P}(t) &= \frac{i}{\hbar} [H, P(t)] \nonumber \\ 
&= -U'(X(t)) + f'(X(t)) \int_{\RR^{+}} d\omega c(\omega) \sqrt{\frac{2\omega}{\hbar}} x_{t}(\omega) - 2f(X(t)) f'(X(t)) \int_{\RR^{+}} r(\omega) d\omega, \\  \dot{x}_{t}(\omega) &= \frac{i}{\hbar} [H, x_{t}(\omega)] = p_{t}(\omega), \ \ \omega \in \RR^{+}, \\  \dot{p}_{t}(\omega) &= \frac{i}{\hbar} [H, p_{t}(\omega)] = - \omega^2 x_{t}(\omega) + \sqrt{\frac{2\omega}{\hbar}} c(\omega) f(X(t)), \ \ \omega \in \RR^{+},  \end{align} where $r(\omega) = |c(\omega)|^2/(\hbar \omega)$ and $f'(X)= [f(X),P ]/(i\hbar)$. 

Next we eliminate the bath degrees of freedom from the equations for $X(t)$ and $P(t)$. Solving for $x_{t}(\omega)$, $\omega \in \RR^{+}$, gives: \begin{equation}x_{t}(\omega) = \underbrace{x_{0}(\omega) \cos(\omega t) + p_{0}(\omega) \frac{\sin(\omega  t)}{\omega}}_{x^{0}_{t}(\omega)} + \int_{0}^{t} \frac{\sin( \omega (t-s))}{\omega} \sqrt{\frac{2\omega}{\hbar}} c(\omega) f(X(s)) ds. \end{equation} Substituting this into the equation for $P(t)$ results in:  \begin{align} \dot{P}(t) &= -U'(X(t)) + f'(X(t)) \int_{\RR^{+}} d\omega  c(\omega) \sqrt{\frac{2\omega}{\hbar}} x^{0}_{t}(\omega) \nonumber \\ 
&\ \ \  +  \frac{2}{\hbar} f'(X(t)) \int_{\RR^{+}} d\omega  |c(\omega)|^2  \int_{0}^{t} ds \sin(\omega  (t-s)) f(X(s)) - 2 f(X(t)) f'(X(t)) \int_{\RR^{+}} d\omega r(\omega). \end{align} 

Using integration by parts, we obtain \begin{equation}\int_{0}^{t} ds \sin( \omega(t-s)) f(X(s)) = \frac{f(X(t))}{\omega} - f(X)\frac{\cos(\omega t)}{\omega} - \int_{0}^{t} \frac{\cos(\omega(t-s))}{\omega} \frac{d}{ds}\left(f(X(s)) \right)ds   \end{equation} and therefore, \begin{align} \dot{P}(t) &= -U'(X(t)) + f'(X(t)) \underbrace{\int_{\RR^{+}} d\omega  c(\omega) (b^{\dagger}_t(\omega) + b_t(\omega) ) }_{\zeta(t)} \nonumber \\ 
&\ \ \ \ -  f'(X(t))  \int_{0}^{t} ds \underbrace{\int_{\RR^{+}} d\omega  2 r(\omega) \cos(\omega(t-s))}_{\kappa(t-s)}  \frac{d}{ds}\left(f(X(s)) \right) \nonumber \\ 
&\ \ \ \ -  f'(X(t)) f(X) \underbrace{\int_{\RR^{+}} d\omega  2r(\omega)  \cos(\omega t)}_{\kappa(t)},  \end{align} where \begin{equation}\frac{d}{ds}\left(f(X(s)) \right) = \frac{i}{\hbar}[H, f(X(s))] = \frac{\{ f'(X(s)), P(s)\}}{2 m},\end{equation} $b_t(\omega) = b(\omega)e^{-i\omega t}$, $b_t^{\dagger}(\omega) = b^{\dagger}(\omega) e^{i\omega t}$ and $\{\cdot, \cdot\}$ denotes anti-commutator.

Therefore, starting from the Heisenberg equations of motion and eliminating the bath variables, we obtain the following equations for the particle's observables: 
\begin{align}
\dot{X}(t) &= V(t), \label{qle1} \\ 
m  \dot{V}(t) &= -U'(X(t)) - f'(X(t)) \int_{0}^{t} \kappa(t-s) \frac{\{ f'(X(s)), V(s)\}}{2} ds \nonumber \\ 
&\ \ \ \ \ \ + f'(X(t)) \cdot (\zeta(t)-f(X)\kappa(t)), \label{qle2}
\end{align}
where 
\begin{align}
\kappa(t) &= \int_{\RR^{+}} d\omega \frac{2 |c(\omega)|^2}{\hbar \omega} \cos(\omega t)  = \int_{\RR^{+}} d\omega \frac{2 J(\omega)}{\omega} \cos(\omega t) 
\end{align}
is the {\it memory kernel},
\begin{align}
\zeta(t) &= \int_{\RR^{+}} d\omega c(\omega) (b^{\dagger}(\omega)e^{i\omega t}+b(\omega)e^{-i\omega t})   
\end{align} 
is a {\it stochastic force} whose correlation function depends on the coupling function, $c(\omega)$, and the distribution of the initial bath variables, $b(\omega)$ and $b^{\dagger}(\omega)$ -- let us remind that we initially consider a thermal Gibbs state. The term $f'(X(t))f(X) \kappa(t)$ is the initial slip term. The initial position and velocity are given by $X$ and $V$ respectively.

The above equations are exact, non-Markovian and operator-valued.  Note that in the damping term which is nonlocal in time, we have an anti-commutator, which does not appear in the corresponding classical equation or in the equation for the linear QBM model (where $f(X)=X$). The presence of the anti-commutator is thus a quantum feature of the inhomogeneous damping. The linear QBM model is exactly solvable. The properties of the solutions of the corresponding Heisenberg-Langevin equation\footnote{The rigorous study of the Heisenberg-Langevin equations, even in the case $f(X)=X$, is technically very difficult and there are only few works \citep{de1988quantum,maassen1984return} in the literature that treat them.}  have been studied in standard references on open quantum systems.

The initial preparation of the total system, which fixes the statistical properties of the bath operators and of the system's degrees of freeedom, turns the force $\zeta(t)$ into a stochastic one \citep{hughes2006dynamics}. We  specify a preparation procedure to fix the properties of the stochastic force. To this end, we absorb the initial slip term into the stochastic force, defining:
\begin{equation} \xi(t) := \zeta(t) - f(X) \kappa(t). \end{equation}
With this, in the nonlinear coupling case, the equation for the particle's velocity is driven by the multiplicative noise $f'(X(t)) \xi(t)$. From now on, we refer to $\xi(t)$ as the {\it quantum noise}. The statistics of $\xi(t)$ depends on the distributions of the initial bath variables $(b(\omega), b^{\dagger}(\omega))$ and the initial system variable $f(X)$. 


The main difference between classical and quantum systems lie in the statistical nature of the noise. Denoting by $E_{\beta}$ the expectation with respect to the thermal Gibbs state $\rho_{\beta}$ at the temperature $T$,  we have
\begin{align}
&E_\beta[(b^{\dagger}(\omega)e^{i\omega t}+b(\omega)e^{-i\omega t})  (b^{\dagger}(\omega')e^{i\omega' s}+b(\omega')e^{-i\omega' s})] \nonumber \\ &=\left[ (1+\nu_{\beta}(\omega)) e^{-i\omega(t-s)} + \nu_{\beta}(\omega) e^{i\omega(t-s)} \right] \delta(\omega-\omega'),
\end{align}
where $\nu_{\beta}(\omega)$ is given by the Planck's law \begin{equation} \nu_{\beta}(\omega) = \frac{1}{\exp{\left( \beta \hbar \omega \right)}-1}. \end{equation}

Since we absorbed the initial slip term into the stochastic force, $\xi(t)$ no longer has a stationary correlation when averaged with respect to $\rho_{\beta}$ \citep{hanggi1997generalized}. However, $\xi(t)$ is stationary and Gaussian when conditionally averaged with respect to $\rho'_{\beta} = e^{-\beta H_{B-I}}/Tr(H_{B-I}),$ where $H_{B-I}$  is the quadratic Hamiltonian defined in \eqref{trans_inv_ch6} and the average is conditioned on the initial position variable $X$.

The statistical properties of the quantum noise is fully specified by its two-time correlation function with respect to $\rho_\beta'$, given by:
\begin{align} E'_\beta[\xi(t) \xi(s)] &= \int_{\RR^{+}} d\omega \hbar J(\omega) \left(\coth\left(\frac{\hbar \omega}{2k_{B}T}\right)\cos(\omega(t-s)) - i\sin(\omega(t-s)) \right) \\ 
&=: D_{1}(t-s) - i D(t-s), \label{qfdt} \end{align}
where $D_{1}$ is the noise kernel given by
\begin{equation} D_{1}(t-s) := E'_\beta[\{\xi(t), \xi(s)\}/2],\end{equation} i.e. the symmetric correlation function of $\xi(t)$ with respect to $\rho_{\beta}'$, and $D$ is the dissipation kernel given by \begin{equation} D(t-s) := i E'_\beta[[\xi(t),\xi(s)]/2],\end{equation} which is related to linear susceptibility. Expanding, one gets for small $\hbar$ (or similarly, for large $T$), $E'_\beta[\xi(t) \xi(s)] =  k_{B}T \kappa(t) + O(\hbar),$ which is the classical Einstein's relation. Therefore, \eqref{qfdt} can be viewed as quantum analogue of the fluctuation-dissipation relation and a special case (and in time domain) of the quantum fluctuation-dissipation relation of Callen-Welton, since the noise kernel and dissipation kernel are related via:
\begin{equation}
\int_\RR dt \cos(\omega t) D_1(t) = \hbar \coth(\hbar \omega \beta/2) \int_\RR dt \sin(\omega t) D(t).
\end{equation}
Therefore, the multiscale structure of the quantum noises is far richer than that of classical noises even in the simplest (Ornstein-Uhlenbeck type) model.

\begin{rmk} {\bf On zero temperature systems.} For $T \to 0$ we have instead: 
\begin{equation} E'_{\beta}[\{\xi(t), \xi(s)\}/2] \to  \frac{-\hbar \Lambda^2}{2 \pi} (e^{-\Lambda (t-s)} \overline{Ei}(\Lambda (t-s)) + e^{\Lambda(t-s)} Ei(-\Lambda (t-s))), \end{equation}
where $Ei$ is the exponential integral function defined as follows: 
\begin{equation} -Ei(-x) = \hat{\gamma}(0,x) = \int_{x}^{\infty} e^{-t}/t dt. \end{equation} Here $\overline{Ei}(x) = \frac{1}{2}(Ei^{+}(x) + Ei^{-}(x))$, $Ei^{+}(x) = Ei(x+i0)$, $Ei^{-}(x) = Ei(x-i0)$. The symmetric correlation function obtained above can be interpreted as follows. As the temperature $T$ decreases, the Matsubara frequencies $\nu_{n}$ get closer to each other, so at zero temperature all of them contribute and the sum  may be replaced by an integral, which turns out to have expression in terms of the $Ei$ functions \citep{ingold}. In fact, in this case the symmetric correlation function decays polynomially for large times  \citep{jung1985long}. In other words, systems at zero temperature are strongly non-Markovian! 
\end{rmk}




\begin{rmk} {\bf On stochastic modeling of the quantum noise.}
A natural approach to study the quantum noise is to model it as a quantum stochastic process which admits a QSDE representation, along the line in the classical case. The existence of such processes that satisfy the KMS condition was studied rigorously in \cite{lewis1975existence}, after the notions of quantum stochastic process and stationarity were defined there. However, to our knowledge no rigorous studies on their QSDE representation have been performed. Of interest to us is to study QSDE representations for the class of ``quantum quasi-Markov'' stationary Gaussian process, by mimicking the theory for classical quasi-Markov processes. However, this extension of the studies to the quantum setting is not too straightforward and one would need to deal with some technicalities, for instance in the construction of appropriate representation Hilbert space \citep{araki1963representations,bratteli2012operator,bratteli2013operator} for the quantum process. 


\end{rmk}

\subsection{The QBM Model with an Ohmic Spectral Density} \label{model}
We consider the coupling function:
\begin{equation} \label{coupling}
c(\omega) = \sqrt{\frac{\hbar \omega }{\pi}\frac{\Lambda^2 }{\omega^2+\Lambda^2}},
\end{equation}
where  $\Lambda$ is a positive constant.  The bath spectral density is  given by:
\begin{equation}
J(\omega) := \frac{|c(\omega)|^2}{\hbar} = \frac{\omega}{\pi}\frac{\Lambda^2 }{\omega^2+\Lambda^2}, \end{equation}
which is the well-known Ohmic spectral density with a Lorentz-Drude cutoff \citep{BreuerBook}. 

Let us recall the Heisenberg-Langevin equation derived earlier. The
equations for the particle's observables read: 
\begin{align}
\dot{X}(t) &= V(t), \label{qle1_ch7} \\ 
m  \dot{V}(t) &= -U'(X(t)) - f'(X(t)) \int_{0}^{t} \kappa(t-s) \frac{\{ f'(X(s)), V(s)\}}{2} ds \nonumber \\ 
&\ \ \ \ \ \ + f'(X(t)) \cdot (\zeta(t)-f(X)\kappa(t)), \label{qle2_ch7}
\end{align}
where 
\begin{align}
\kappa(t) &= \int_{\RR^{+}} d\omega \frac{2 |c(\omega)|^2}{\hbar \omega} \cos(\omega t)  = \int_{\RR^{+}} d\omega \frac{2 J(\omega)}{\omega} \cos(\omega t) 
\end{align}
is the {\it memory kernel},
\begin{align}
\zeta(t) &= \int_{\RR^{+}} d\omega c(\omega) (b^{\dagger}(\omega)e^{i\omega t}+b(\omega)e^{-i\omega t})   
\end{align} 
is a {\it stochastic force}. 

We will work in a {\it Fock vacuum representation} using the H-P quantum stochastic calculus approach. In particular, our goal is to describe the quantum noise as a  quantum stochastic process satisfying certain QSDE such that the symmetric correlation function of the stochastic process with respect to the vacuum state on an enlarged Fock space coincides with that of  $\xi(t)$ with respect to $\rho_{\beta}'$. As a preparation to achieve this goal, we study $E_{\beta}'[\xi(t)\xi(s)]$ in the following. 
Recall that:
\begin{align} E'_\beta[\xi(t) \xi(s)] &= \int_{\RR^{+}} d\omega \hbar J(\omega) \left(\coth\left(\frac{\hbar \omega}{2k_{B}T}\right)\cos(\omega(t-s)) - i\sin(\omega(t-s)) \right) \\ 
&=: D_{1}(t-s) - i D(t-s), \end{align}
where $D_{1}$ is the noise kernel given by
\begin{equation} D_{1}(t-s) := E'_\beta[\{\xi(t), \xi(s)\}/2],\end{equation} i.e. the symmetric correlation function of $\xi(t)$ with respect to $\rho_{\beta}'$, and $D$ is the dissipation kernel given by \begin{equation} D(t-s) := i E'_\beta[[\xi(t),\xi(s)]/2],\end{equation} which is related to linear susceptibility. 

For our choice of $c(\omega)$ (see \eqref{coupling}), the memory kernel, $\kappa(t)$, is exponentially decaying with decay rate $\Lambda$, i.e. $\kappa(t) = \Lambda e^{-\Lambda t}.$ Moreover, one can compute, for $t>0$: 
\begin{align} \label{D1}
D_{1}(t) 
&= \frac{\hbar \Lambda}{2} \cot \left(\frac{\hbar \Lambda}{2k_{B}T} \right) \kappa(t) +  \sum_{n=1}^{\infty} \frac{2k_{ B}T \Lambda^2 \nu_{n}}{ \nu_{n}^2 - \Lambda^2} e^{-\nu_{n}t}, 
\end{align}
where $\nu_{n} = \frac{2 \pi n k_{B}T}{\hbar}$ are the bath Matsubara frequencies \citep{carlesso2016quantum}. Also, the dissipation kernel is 
\begin{equation} \label{D}
D(t) = \frac{\hbar \Lambda^3}{2} e^{-\Lambda t}. 
\end{equation}
In this paper, we consider the case $k_B T > \hbar \Lambda/\pi$, so that $\cot(\hbar \Lambda/2k_BT)$ and all the terms in the series in \eqref{D1} are positive. 



\subsection{QSDE's for Quantum Noise} \label{qnoise}

Guided by formula of the symmetric correlation function in \eqref{D1} and the plan outlined in Section \ref{approach}, we model the quantum noise by: 
\begin{equation}
\label{eq:noise}
 \sum_{k=0}^{\infty} \eta_{k}(t),
\end{equation}
  where the $\eta_{k}(t)$ are independent {\it quantum Ornstein-Uhlenbeck processes} (quantum analogue of the classical ones \citep{csaki1991infinite}), satisfying the SDEs: 
\begin{equation} \label{qou_ch7} d\eta_k(t) = -\alpha_{k} \eta_{k}(t) dt + \sqrt{\lambda_{k}} dW^\theta_{k}(t), \ \ \eta_k(0) = \eta_k.
\end{equation}
Here the $W_k^{\theta}$ are independent quantum Wiener processes defined earlier and the $\eta_{k}$ are  initial variables on a copy of Fock space. For a fixed $\theta$, independence and commutation for these processes can be achieved by realizing  the $\eta_{k}(t)$ on distinct copies of Fock space, i.e. \begin{equation} \sum_{k=0}^{\infty} \eta_k(t) = \eta_{0}(t) \otimes I \otimes I \otimes \dots + I \otimes \eta_1(t) \otimes I \otimes \dots + \dots  \end{equation} on $\bigotimes_{k=0}^{\infty} \Gamma(L^2(\RR^{+})) = \Gamma(L^2(\RR^+)\otimes \mathcal{K})$ where the multiplicity space $\mathcal{K}$ is a sequence space whose elements are of the form $(x_0, x_1, x_2, \dots)$, with each $x_i \in \CC$. From now on, each $\eta_k$ is understood to be \begin{equation} \underbrace{I \otimes \cdots \otimes I}_{k \text{  copies}} \otimes \eta_k \otimes I \otimes \cdots \end{equation} and similarly for each $W_{k}^{\theta}$. 

The formal solution to the SDE \eqref{qou_ch7} is given by:
\begin{equation} \eta_{k}(t) = \eta_k e^{-\alpha_{k} t} + \sqrt{\lambda_{k}} \int_{0}^{t} e^{-\alpha_{k}(t-s)} dW^{\theta}_{k}(s). \end{equation} 

Since there is a unique stationary solution of the SDEs \eqref{qou_ch7}, for all $k$ and $s \in [0,t]$:
\begin{equation} E''_{\infty}[\eta_{k}^2] = \frac{\lambda_k}{2 \alpha_k}, \ \ \ \ E''_\infty[\eta_k W^\theta_{k}(s)] = E''_{\infty}[W^\theta_k(s) \eta_k] =  0,\end{equation} where  $E''_{\infty}$ denotes expectation with respect to the vacuum state associated with $\Omega \otimes \Omega \otimes \cdots$ on the enlarged Fock space $\Gamma(L^2(\RR^{+})\otimes \mathcal{K})$. Then, with the parameters $\alpha_n$  and $\lambda_{n}$ defined by
\begin{equation}
\label{eq:alphan}
\alpha_{n} = \nu_n 1_{\{n \geq   1 \}} + \Lambda 1_{\{n=0\}} > 0 
\end{equation} and
\begin{equation}
\label{eq:lambdan}
\lambda_{n} = \frac{4 \nu_n^2 \Lambda^2 k_B T}{\nu_n^2-\Lambda^2}  1_{\{n \geq 1 \}} + \hbar \Lambda^3 \cot\left( \frac{\hbar \Lambda}{2k_{B}T} \right)  1_{\{n=0\}} > 0,
\end{equation} 
it can be verified that 
\begin{equation}
E''_{\infty}\left[\frac{\left\{\sum_k \eta_k(t), \sum_l \eta_{l}(s) \right\}}{2} \right] = D_{1}(t-s),
\end{equation} 
where $D_{1}$ is given in \eqref{D1}. 

Equations~\eqref{eq:alphan} and~\eqref{eq:lambdan} establish a link between the quantum noise as introduced in eqn.~\eqref{eq:noise} and the physical model of Section \ref{model}. We remark that there is freedom in the above construction of quantum noise, as the driving noise process, $(W_k^{\theta})$, is a family of quantum Wiener processes parametrized by $\theta$. On the one hand, the choice of the parameter should be fixed by physical considerations, i.e. by the nature of the field that the system couples to in the microscopic model. On the other hand, one would like to show that the quantum noises describe a Markovian system, so one should write the SDEs $\eqref{qou_ch7}$ in a H-P QSDE form.

To this end, let $\xi_{k}(t)$ and $\eta_{k}(t)$ be canonical conjugate bath observables that obey the commutation relation $[\xi_{j}(t), \eta_{k}(t) ] = i \hbar \delta_{jk} I$ for all $t \geq 0$. Suppose that the evolution of each pair $(\xi_k(t), \eta_{k}(t))$ is Markovian and can be described by the H-P QSDEs associated with $(H_k, L_k)$, where 
\begin{equation} H_{k} = \frac{\eta_k^2}{2} + \frac{\alpha_{k}}{4} \{ \xi_k, \eta_k \}, \ \  \ \ L_{k} = \frac{\sqrt{\lambda_k}}{\hbar} \xi_k + i \frac{\alpha_{k}}{2\sqrt{\lambda_{k}}} \eta_{k},\end{equation} where $\alpha_{k}$ and $\lambda_{k}$ are given as before. Therefore, they solve the  H-P QSDEs:
\begin{align}
d\xi_k(t) &= \eta_k(t) dt + \frac{\hbar \alpha_k}{2 \sqrt{\lambda_k}} dW^\pi_{k}(t), \label{qn1} \\
d \eta_{k}(t) &= -\alpha_{k} \eta_{k}(t) dt + \sqrt{\lambda_{k}} dW^{-\pi/2}_{k}(t), \label{qn2}
\end{align}
where \begin{equation} W^\pi_{k}(t) = -(A_k(t) + A^\dagger_k(t)), \ \ \ W^{-\pi/2}_{k}(t) = i(A_k(t) - A^\dagger_k(t))\end{equation} are noncommuting, conjugate quantum Wiener processes satisfying $[W_k^{\pi}(t), W_{k}^{-\pi/2}(s)] = 2i \delta(t-s) I$.  Modulo the negative factor, one can view the formal time derivatives of the $W^\pi_{k}(t)$ and $W^{-\pi/2}_{k}(t)$ as the noises arising from the position and momentum field observables respectively. We fix the freedom in our construction by taking the  Markovian system \eqref{qn1}-\eqref{qn2} as the model for noise. Therefore, we take $\sum_k \eta_k(t)$ to be the quantum colored noise that models the action of the heat bath on the evolution of the system's observables. 

Physically, one can think of our quantum noise model as equivalent to a model of infinitely many non-interacting ancillas that convert the white noise to colored noise through a channel at each Matsubara frequency \citep{xue2015quantum}. That one needs  infinitely many ancillas is due to the fact that there are infinitely many transition (Bohr) energies, each of which equals  the energy of a boson with a particular Matsubara frequency in the bath. According to our noise model, when a boson with the Matsubara frequency $\nu_{k}$ is created or annihilated, the energy transition does not occur instantaneously but happens on the time scale of $1/\alpha_k$ via a channel associated with $\nu_k$.



\section{Conclusion}
We have provided a basic introduction to quantum noise and open quantum systems, motivated by starting with classical noise and open classical systems. Our formulation is based on the elegant approach of \cite{parthasarathy2012introduction}, and we hope that such formulation would be more widely adopted by  physicists working in open quantum systems. The formulation could also be employed and/or extended to study data-driven approaches for open systems \citep{reddy2024data}, quantum stochastic thermodynamics \citep{strasberg2022quantum, lim2021anomalous}, quantum machine learning \citep{schuld2015introduction, olivera2023benefits} and the related problems \citep{dawid2022modern}.

\acks{The author would like to acknowledge the WINQ Fellowship and the Swedish Research Council (VR/2021-03648). He is also grateful to Jan Wehr and Maciej Lewenstein for stimulating discussions.}

\bibliographystyle{siam}

\end{document}